\documentclass[a4paper,11pt]{article}
\pdfoutput=1 

\usepackage{jheppub} 

\usepackage[T1]{fontenc} 
\usepackage{subfig}
\usepackage{enumerate}
\usepackage{enumitem}
\newcommand{\llangle}{\langle\!\langle}
\newcommand{\rrangle}{\rangle\!\rangle}

\usepackage{graphicx}
\usepackage{color}
\usepackage[export]{adjustbox}
\usepackage{amssymb}
\usepackage{physics}
\usepackage{amsmath}
\usepackage{amsthm}
\usepackage{mathrsfs}
\usepackage{amsmath}
\usepackage{cleveref}

\DeclareMathOperator{\reg}{reg}

\renewcommand{\tilde}{\widetilde}
\renewcommand{\bar}{\overline}
\renewcommand{\S}{\mathcal{S}}
\newcommand{\A}{\mathcal{A}}
\newcommand{\B}{\mathcal{B}}

\renewcommand{\H}{\mathcal{H}}
\DeclareMathOperator{\Rep}{Rep}

\DeclareMathOperator{\id}{id}
\DeclareMathOperator{\triv}{triv}

\newcommand{\rightZ}[1]{\overset{\lower2px\hbox{$\to$}}{#1}}
\newcommand{\leftZ}[1]{\overset{\lower2px\hbox{$\leftarrow$}}{#1}}

\newcommand{\cO}{\mathcal{O}}
\newcommand{\w}{\widehat}

\newtheorem{theorem}{Theorem}[section]
\newtheorem{lemma}[theorem]{Lemma}
\newtheorem{proposition}[theorem]{Proposition}
\newtheorem{corollary}[theorem]{Corollary}

\numberwithin{question}{section}

\numberwithin{question}{section}

\theoremstyle{definition}
\newtheorem{example}{Example}[section]

\theoremstyle{remark}

\usepackage{bm}
\usepackage{tikz}
\usepackage{tikz-3dplot}
\usepackage{verbatim}
\usepackage{mathrsfs}
\usepackage{physics}

\usepackage[dvipsnames]{xcolor}
\usepackage[most]{tcolorbox}

\newtcolorbox{resultbox}{
  colback=MidnightBlue!5!white,
  colframe=MidnightBlue!75!black,
  breakable
}
\newtcolorbox{partialresultbox}{
  colback=Gray!5!white,
  colframe=Gray!75!black,
  breakable
}

\DeclareMathOperator{\End}{End}

\newcommand{\C}{\mathbb{C}}

\newcommand{\cA}{\mathcal{A}} 

\newcommand{\cH}{\mathcal{H}}

\usepackage{tikz-cd}
\usetikzlibrary{decorations.pathmorphing}
\usetikzlibrary{decorations.pathreplacing,decorations.markings}

\tikzset{
  on each segment/.style={
    decorate,
    decoration={
      show path construction,
      moveto code={},
      lineto code={
        \path [#1]
        (\tikzinputsegmentfirst) -- (\tikzinputsegmentlast);
      },
      curveto code={
        \path [#1] (\tikzinputsegmentfirst)
        .. controls
        (\tikzinputsegmentsupporta) and (\tikzinputsegmentsupportb)
        ..
        (\tikzinputsegmentlast);
      },
      closepath code={
        \path [#1]
        (\tikzinputsegmentfirst) -- (\tikzinputsegmentlast);
      },
    },
  },
  mid arrow/.style={postaction={decorate,decoration={
        markings,
        mark=at position .5 with {\arrow[#1]{stealth}}
      }}},
}

\usepackage{xcolor}
\usepackage[framemethod=default]{mdframed}
\usepackage{lipsum} 

\title{Algebraic locality and non-invertible Gauss laws}
\author[a]{Nicholas Holfester}
\author[b]{and Jonathan Sorce}
\affiliation[a]{Leinweber Institute for Theoretical Physics, University of Chicago}
\affiliation[b]{Princeton Gravity Initiative, Princeton University}

\abstract{
We study algebraic locality principles on a $2+1$D closed lattice in the presence of a Gauss law for a non-invertible symmetry. Prior work in arXiv:2509.03589 showed that when enforcing the Gauss law of an invertible symmetry, the principle of ``Haag duality'' is preserved exactly, and ``disjoint additivity'' is preserved after appropriate treatment of discreteness artifacts.
Here we show that for a large class of non-invertible on-site symmetries, Haag duality is preserved exactly only for sufficiently nice, ``cuspless'' regions.
For cusped regions, we instead have a weak form of Haag duality that requires adding a collar.
Our results apply to double models with a purely magnetic constraint, and to the more general framework of constraints induced by the on-site action of a Hopf algebra. In particular, we treat a class of extended string-net models explicitly.
We also demonstrate disjoint additivity for double models based on a group, and a weakened form of disjoint additivity in the setting of a general Hopf algebra.
}

\begin{document}
\maketitle

\section{Introduction}

Since \cite{Haag-Kastler}, it has been understood that operator algebras are useful for characterizing locality in quantum systems.
The basic idea is to organize the degrees of freedom into local subregions $R$ carrying operator algebras $\A(R)$, and then to study the structure of the system via properties of the map $R \mapsto \A(R).$
Many interesting insights into both continuum and discrete quantum theories have been obtained using this framework; for a very incomplete list, with a focus on recent results, see \cite{DHR-1, DHR-2, Fredenhagen:braids-1, Fredenhagen:braids-2, Longo:subfactors, Naaijkens:index, Naaijkens:subfactors, Casini:order, Casini:completeness, Jones:boundary, Benedetti:modular, Shao:additivity, Evans:fusion, Harlow:disjoint-additivity, vanLuijk:cones, Casini:DHR}.

One setting in which local operator algebras have an interesting structure is that of a constrained lattice system.
An \textit{unconstrained} system on a finite lattice is simply a tensor product Hilbert space $\H = \otimes_{j} \H_{j};$ the local operator algebra associated with a subset $R$ of tensor factors is just \begin{equation}
    \A(R) = \B(\H_R) = \B(\otimes_{j \in R} \H_j),
\end{equation}
where $\B(\cdot)$ denotes the space of bounded operators.\footnote{On infinite lattices, there are interesting infrared effects involved in the construction of operator algebras for non-compact regions; for a few nice papers on this subject, see \cite{Naaijkens:index, vanLuijk:cones, Ogata:HD}.}
In the presence of a \textit{constraint}, like a gauge symmetry, one restricts to a subspace $\H_0 \subseteq \H$ of ``physical states,'' and this induces a corresponding constraint on the local operator algebras; under such a constraint, the structure of the local algebras can become richer.

A large class of interesting lattice constraints is provided by the setting where $\H_0$ is the $+1$ eigenspace of a local family of commuting projectors.
This includes the setting of lattice gauge theory, where an on-site group action is gauged to produce a constrained space $\H_0$ of gauge-invariant states.
In the recent paper \cite{Harlow:disjoint-additivity}, it was shown that on finite lattices, the algebraic locality principles of ``Haag duality'' and ``disjoint additivity'' are satisfied in the constraint subspace corresponding to a group-like gauge symmetry.
The point of the present paper is to generalize this analysis to a broader class of constraints implemented by gauging non-invertible symmetries. 

The simplest setting we study is that of a quantum double model \cite{Kitaev:double} where the magnetic constraint is imposed exactly, while the electric constraint is imposed energetically.
By contrast, the analysis of \cite{Harlow:disjoint-additivity} corresponds to the dual case where the electric constraint is exact.
We explain in section \ref{sec:background} how to think of the magnetic constraint as the singlet projection of a $\Rep(G)$ gauge symmetry, which allows us to conceptualize the magnetic constraint of a double model based on $G$ as the electric constraint of a double model based on $\Rep(G).$
This observation leads us to study algebraic locality in general double models based on Hopf algebras \cite{Buerschaper_2013, Buerschaper:2010yf, Yan_2022}, which provide a large class of Gauss law constraints based on non-invertible symmetries.\footnote{Note that our usage of this term refers to the gauging of a non-invertible zero-form symmetry; for another notion of ``non-invertible Gauss law'' based on a global non-invertible one-form symmetry, see \cite{Choi:axions}.}

The plan of the paper, including a summary of our main results, follows.
\begin{itemize}
    \item In section \ref{sec:background}, we review the definitions of Haag duality and disjoint additivity, and give a general definition of local operator algebras corresponding to a constraint subspace $\H_0 \subseteq \H.$
    We also review the quantum double model, and explain how to think of the magnetic constraint as a Gauss law corresponding to a $\Rep(G)$ gauge symmetry.
    \item In section \ref{sec:counterexample}, we show by explicit counterexample that Haag duality is not satisfied in general for local operator algebras under the magnetic constraint of a non-abelian double model.
    \item In section \ref{sec:cuspless-magnetic}, we show that under the magnetic constraint of a non-abelian double model, a weak form of Haag duality is always satisfied, where ``weak'' means that it is sometimes necessary to add a small collar to the region under investigation.
    We show that for ``cuspless'' regions, Haag duality is exact; we also point out that on a trivalent lattice, every region is cuspless.
    We further show that in this setting, disjoint additivity holds when regions $R_1$ and $R_2$ are defined to be ``disjoint'' if no vertex is incident to edges in both $R_1$ and $R_2.$
    \item In section \ref{sec:hopf}, we introduce double model constraints based on a general Hopf algebra, and show that the conclusions about Haag duality are exactly the same as in the nonabelian-magnetic case.
    We also demonstrate a weakened form of disjoint additivity that involves additional discreteness artifacts; however we note that we do \textit{not} find a counterexample to the stronger form of disjoint additivity, and we suspect that it holds even in this more general setting.
    \item Finally, in section \ref{sec:discussion}, we comment on potential generalizations to weak Hopf algebras and full Drinfel'd doubles, and we comment on the potential for a more detailed analysis of disjoint additivity in the general Hopf-algebraic setting.
\end{itemize}

\section{Background}
\label{sec:background}

This section provides background material.
Not all of it needs to be read to approach the rest of the paper.
Section \ref{sec:background-algebraic} reviews basic algebraic locality principles, and section \ref{subsec:qdouble} reviews the quantum double model; these sections are necessary to understand the Haag duality violations in section \ref{sec:counterexample}.
Section \ref{subsec:gauge_theory_rep} discusses the magnetic constraint of the double model in representation-theoretic terms; it is of conceptual importance for the main ideas of the paper, and provides useful tools for studying algebraic locality in section \ref{sec:cuspless-magnetic}.

\subsection{Algebraic preliminaries}
\label{sec:background-algebraic}

Given a quantum theory with a Hilbert space $\H,$ one is typically also given a collection of ``regions'' into which degrees of freedom can be localized.
These degrees of freedom are specified by listing the operators on $\H$ that can be accessed within a given region.
More formally, one has a ``net of algebras,'' which is just a map
\begin{equation}
    R \mapsto \A(R)
\end{equation}
that takes regions to sets of operators on $\H.$ To be called a net of local algebras, the above map is taken to have a few natural properties:
\begin{itemize}
    \item Each algebra $\A(R)$ is a von Neumann algebra, meaning it is a subalgebra of bounded operators on $\H$ that is complete with respect to a natural topology.\footnote{For detailed definitions see e.g. the review in \cite{Sorce:notes}.}
    An important property of von Neumann algebras is that if one defines the \textit{commutant} by\footnote{The symbol $\B(\H)$ denotes the set of bounded operators on $\H$. In finite dimensions, all linear operators are bounded.}
    \begin{equation}
        \A(R)' \equiv \{a' \in \mathcal{B}(\H) \text{ such that } [a', a] = 0 \text{ for all } a \in \A(R)\},
    \end{equation}
    then a von Neumann algebra is equal to its own double commutant, $\A(R) = \A(R)''.$
    \item Larger regions have larger algebras, i.e., one has the implication
    \begin{equation} \label{eq:nesting}
        (R_1 \subseteq R_2) \quad \Rightarrow \quad (\A(R_1) \subseteq \A(R_2)).
    \end{equation}
    \item
    One assumes that there is a ``complement operation'' which associates, to any region $R$, a complementary region $R'.$
    For bosonic algebras --- which will be the only ones we consider in the present paper --- one assumes the microcausality inclusion
    \begin{equation} \label{eq:microcausality}
        \A(R') \subseteq \A(R)'.
    \end{equation}
\end{itemize}

In specific quantum theories of interest, the ``set of regions'' may be, for example: subsets of a lattice; open sets on a time slice with sufficiently smooth boundary conditions; causally complete codimension-zero spacetime regions; etc.
We refer the reader to \cite[section 2 and appendix A]{Harlow:disjoint-additivity} for a general discussion.
In the present paper, we will always be working on closed lattices, and our regions will be subsets of those lattices.

The conditions on $\A(R)$ listed above are very weak.
There are additional completeness relations that are typically studied in practice.
The important ones for the present paper are:
\begin{itemize}
    \item A net of local algebras satisfies \textit{Haag duality} for the region $R$ if one has
    \begin{equation} \label{eq:HD}
        \A(R') = \A(R)',
    \end{equation}
    i.e., if inequality \eqref{eq:microcausality} is saturated.
    \item A net of local algebras satisfies \textit{additivity} for regions $R_1$ and $R_2$ if one has
    \begin{equation} \label{eq:additivity}
        \A(R_1 \cup R_2) = \A(R_1) \vee \A(R_2),
    \end{equation}
    where the symbol $\vee$ denotes the algebraic union, i.e., the right-hand side of the above equation is the smallest von Neumann algebra containing both $\A(R_1)$ and $\A(R_2).$

    Note that the $\supseteq$ inclusion in equation \eqref{eq:additivity} is guaranteed by the nesting relation \eqref{eq:nesting}, since one has $R_1 \subseteq R_1 \cup R_2$ and $R_2 \subseteq R_1 \cup R_2.$
\end{itemize}
By studying Haag duality and additivity for nets associated to a quantum theory, one can uncover interesting things about observables of the theory. For example, if one considers the local operator algebra generated by the algebras assigned to simply connected regions --- the so called ``additive net'' $\A_{\text{add}}(R)$ --- then violations of Haag duality can diagnose the existence of higher-form symmetries \cite{Casini:order, Casini:completeness}.
On the other hand, if one studies concrete examples in which the full spectrum of extended operators is understood, and constructs the net $\A_{\text{complete}}(R)$ that includes all of these operators, then the existence of higher-form symmetries instead leads to violations of additivity \cite{Shao:additivity, Harlow:disjoint-additivity}. 

To study the latter case in greater detail, the paper \cite{Harlow:disjoint-additivity} introduced a notion of \textit{disjoint} additivity.
Disjoint additivity does not require equation \eqref{eq:additivity} to hold for arbitrary regions $R_1$ and $R_2,$ but only for those that satisfy an appropriate disjointness relation
\begin{equation}
    \bar{R_1} \cap \bar{R_2} = \varnothing.
\end{equation} 
This expression is heuristic; defining disjoint additivity in any concrete example requires specifying what one means by the closure of a region, and also what one actually means by intersection.
On the lattice, closure usually means adding some finite-radius collar to the region, and intersection is just the intersection of sets.
In the relativistic continuum, closure means the topological closure, but when one says that the closed regions have ``no intersection'' one really means that they are spacelike separated from one another.
More detail on this general definition can be found in \cite[section 2]{Harlow:disjoint-additivity}.

In this paper we will be interested in studying Haag duality and additivity for constrained subspaces of lattice tensor product Hilbert spaces. Physically, these will correspond to subspaces having a (possibly non-invertible) Gauss law enforced.

For a general lattice Hilbert space $\H$, consider some sub-Hilbert space $\H_0$, and let $\Pi_0$ denote the orthogonal projector onto the subspace. It makes sense to talk about the set of ``subspace-preserving operators'' as the set of lattice operators $L$ satisfying 
\begin{equation}
    L \Pi_{0}
        = \Pi_{0} L \Pi_{0}, \quad L \in \B(\cH).
\end{equation}
These are operators that map states in $\cH_0$ to states in $\cH_0$; no claims are made about what they do to other states in $\H$.
One can then generally define a net of local algebras within the subspace via the formula 
\begin{equation} \label{eq:constrained-algebras}
    \A_{0}(R)
        \equiv \{ \Pi_{0} a \Pi_{0} \text{ such that } a \in \A(R) \text{ and } a \Pi_{0} = \Pi_{0} a \Pi_{0}\} \subseteq \B(\cH_0).
\end{equation}
Note that these operators act on the subspace $\H_0$ considered as its own Hilbert space; they are \textit{not} thought of as acting on the full Hilbert space $\H$.
This is important when discussing Haag duality, since commutants are taken within this smaller space.
Equation \eqref{eq:constrained-algebras} may be thought of intrinsically within the constraint space as saying,
\begin{quote}
    ``each constrained algebra $\A_0(R)$ consists of the operators that can be lifted, within the total Hilbert space, to constraint-preserving operators acting only on $R$.''
\end{quote}
This notion can be made more explicit by an alternative definition of $\A_0(R)$ as the image of the algebra homomorphism
\begin{equation}
    \rho: \cA_{\text{pres}}(R) \to \B(\cH_0),
\end{equation}
which restricts constraint-preserving operators to their action on the constraint subspace.

\subsection{The quantum double model}
\label{subsec:qdouble}

Here we review the quantum double model of \cite{Kitaev:double}.
We use different notation, and we use the ``dual frame'' as compared to that paper --- in other words, in our presentation, electric constraints are enforced at plaquettes and magnetic constraints are enforced at vertices.

Consider a finite group $G$. Let $\H_G$ denote the natural Hilbert space associated with the group algebra $\mathbb{C}[G]$:
\begin{equation}
    \H_{G}
        \equiv \text{span}\{|g\rangle \, |\, g \in G\},
\end{equation}
with
\begin{equation}
    \langle g | h \rangle = \delta_{g,h}.
\end{equation}
Given a finite lattice with oriented links, one specifies the Hilbert space of a $G$-double model by attaching one copy of $\H_G$ to each edge,
\begin{equation}
    \H
        = \bigotimes_{\text{edges $E$}} \H_{G, E}.
\end{equation}
At each plaquette $P$ of the lattice, and for each group element $g,$ one defines a group action $L_{P, g}$ that acts on the neighboring edges by left-multiplying each ``counterclockwise edge'' by $g$ and right-multiplying each ``clockwise edge'' by $g^{-1}$. This can be understood in diagrammatic terms as being implemented by the insertion of a non-contractible ``$g$-loop'' within the plaquette; see figure \ref{fig:gauge-transformation}.

\begin{figure}
    \centering
    \includegraphics{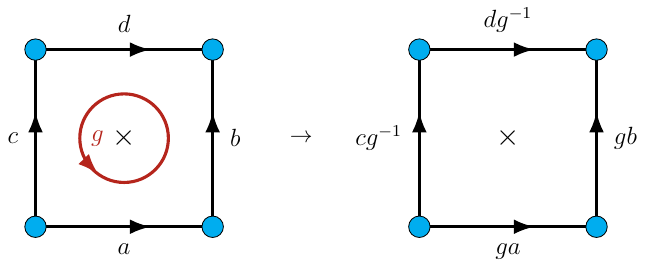}
    \caption{A $g$-loop acting on a plaquette. ``Fusing'' the $g$-loop outward causes all counterclockwise edges to be left-multiplied by $g,$ and all clockwise edges to be right-multiplied by $g^{-1}$.} 
    \label{fig:gauge-transformation}
\end{figure}

A state is said to satisfy the \textit{electric constraint} (or Gauss law) at plaquette $P$ if it is invariant under the action of all group elements at that plaquette.
It is said to satisfy the \textit{magnetic constraint} (or flux constraint) at a vertex $V$ if it respects the group fusion rules at that vertex, as in figure \ref{fig:fusion-constraint}.
More concretely, a collection of group elements surrounding a vertex satisfies the magnetic constraint if multiplying group elements in clockwise order around the vertex, and taking inverses each time there is an inward-pointing edge, the total result of this multiplication is the identity element of $G$. Both constraints are implemented by projection operators supported on vertex and plaquette respectively, defined as
\begin{equation}\label{eq:plaqutte_proj}
    \Pi_P = \frac{1}{|G|} \sum_{g \in G} \rho_P(g),
\end{equation}
where $\rho_P$ denotes the action of the $g$-loop as in figure \ref{fig:gauge-transformation}, and
\begin{equation}\label{eq:vertex_constraint_double}
    \Pi_V (\ket{g_1} \otimes \ket{g_2} \otimes \cdots \ket{g_n}) = \delta_{e,g_1g_2\cdots g_n} \ket{g_1} \otimes \ket{g_2} \otimes \cdots \ket{g_n},
\end{equation}
where the group elements are on outward oriented links as in figure \ref{fig:fusion-constraint}.

\begin{figure}
    \centering
    \includegraphics{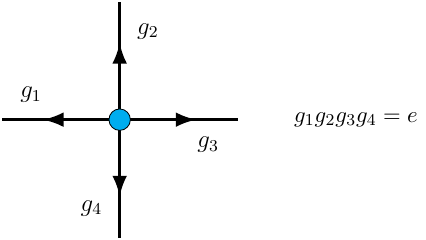}
    \caption{The magnetic constraint $g_1 g_2 g_3 g_4 = e$ at a four-valent vertex. If edge $j$ were directed inward instead of outward, one would replace $g_j$ with $g_j^{-1}.$}
    \label{fig:fusion-constraint}
\end{figure}

In constructing a quantum double model, the Hamiltonian is chosen to apply an energetic penalty to each plaquette where the electric constraint is violated and to each vertex where the magnetic constraint is violated. The ground space of this Hamiltonian is well known to be ``topologically ordered,'' and excitations above the ground space correspond to anyonic quasiparticles \cite{Kitaev:double}.
If instead the electric constraint is imposed exactly while the magnetic constraint continues to be imposed energetically, then this provides a model of pure lattice gauge theory with gauge group $G$ \cite{PhysRevD.11.395,PhysRevD.19.3715}. In this paper, we will be interested instead in exact imposition of the \textit{magnetic} constraint, with the electric constraint enforced energetically.
As we now explain, this model can be thought of as a kind of pure lattice gauge theory of a non-invertible $\Rep(G)$-symmetry.

\subsection{Magnetic constraints as a Rep$(G)$ Gauss law}\label{subsec:gauge_theory_rep}
As emphasized already in the original paper \cite{Kitaev:double}, the magnetic constraint of a double model permits a representation-theoretic description. Here we review this feature and its relation to a local $\Rep(G)$ symmetry \cite{Buerschaper_2009}. Via the connection to $\Rep(G)$ symmetry, we will then explain how to think of the magnetic constraint as a non-invertible Gauss law.

While the electric constraint of the double model is related to the group $G$, the magnetic constraint is related to the \textit{dual group algebra} $\C[G]^\vee$.\footnote{That an algebra can be a suitable analogue of the group is seen by recalling that a representation of a group is equivalent to a representation of its group algebra $\C[G]$. Thus, it is really $\mathbb{C}[G]$ that is being replaced by $\mathbb{C}[G]^{\vee}$ in this discussion.} This algebra has a basis labeled by elements $g \in G,$ and is represented on the link Hilbert spaces of a double model by rank-one projection operators,
\begin{equation}
    P_g = \ket{g}\bra{g}.
\end{equation}
These operators form a basis of $\C[G]^\vee$ and satisfy the multiplication rule
\begin{equation}
    P_g P_h = \delta_{g,h} P_h \quad g,h \in G.
\end{equation}

Any representation of the dual group algebra can be described equivalently as a $G$-graded vector space, which is a vector space $V$ equipped with a decomposition
\begin{equation}
    V = \bigoplus_{g \in G} V_g.
\end{equation}
In such a representation, the basis elements of $\mathbb{C}[G]^{\vee}$ are defined to act by projecting onto the corresponding labeled summand. One can always consider a one-dimensional representation labeled by a particular group element
\begin{equation}
    \delta_g \simeq \C,
\end{equation}
and $\{\delta_g\}$ is the set of irreducible representations of $\mathbb{C}[G]^{\vee}$.
On the link Hilbert space of the double model, the decomposition provided by the representation of the dual group algebra is simply the decomposition in the group basis:
\begin{equation}
    \cH_G = \bigoplus_{g \in G} (\cH_G)_g, \quad (\cH_G)_g = \text{span}\{\ket{g} \in \cH_G\}.
\end{equation}

Importantly, the dual group algebra $\mathbb{C}[G]^{\vee}$ naturally has the structure of a \textit{Hopf algebra}. We will review this structure in more detail in section \ref{sec:hopf} and appendix \ref{app:hopf}. For our immediate purposes, the Hopf algebra structure implies that
\begin{enumerate}
    \item representations admit tensor products, and
    \item there is a singlet representation.
\end{enumerate}
A basis element of the dual group algebra is defined to act on a tensor product of states as 
\begin{equation}\label{eq:dual_grp_alg_basis_act}
    P_g (\ket{v} \otimes \ket{w}) = \sum_{h,k \in G} \delta_{g,hk} (P_h\ket{v} \otimes P_k\ket{w}),
\end{equation}
which implies that irreducible representations have the product
\begin{equation}\label{eq:dual_rep_prod}
    \delta_g \otimes \delta_h \cong \delta_{gh}.
\end{equation}
This demonstrates that for a non-abelian group, reordering the tensor product can give non-isomorphic representations of $\C[G]^{\vee}$.\footnote{This is in contrast to representations of a group, which always satisfy $V \otimes W \cong W \otimes V$.} Therefore, when specifying a tensor product of states, a particular \textit{ordering} must also be specified.

The singlet representation is defined to be the irreducible representation labeled by the identity element. The basis elements of $\mathbb{C}[G]^{\vee}$ therefore act in this representation as
\begin{equation}
    P_g \ket{e} = \delta_{g,e}\ket{e}.
\end{equation}
Note that most elements act as \textit{zero} in the singlet representation, rather than as the identity. Still, this representation is rightfully thought of as being ``trivial'' in the sense that one has
\begin{equation}
    V \otimes \delta_e \cong \delta_e \otimes V \cong V.
\end{equation}
Finally, note that it follows from equation \eqref{eq:dual_grp_alg_basis_act} that the singlet projector acts on a tensor product of vectors labeled by generic group elements as
\begin{equation}
    P_e (\ket{g_1} \otimes \ket{g_2} \otimes \cdots \ket{g_n}) = \delta_{e,g_1g_2\cdots g_n} \ket{g_1} \otimes \ket{g_2} \otimes \cdots \ket{g_n}.
\end{equation}
Comparing with equation \eqref{eq:vertex_constraint_double}, this makes clear that the magnetic constraint at a vertex $V$ in fact acts as a singlet projection:
\begin{equation}
    \Pi_V = \rho_V(P_e).
\end{equation}
Therefore, like the electric constraint, the magnetic constraint is also a singlet condition, though now the role of the group is played by the dual group algebra.

While the basis for the dual group algebra used so far is useful for explicit computations, it obscures a duality between the electric and magnetic constraints of the model \cite{Buerschaper:2010yf}. An alternative choice of basis is provided by \textit{Wilson line operators}, which are defined as
\begin{equation}
    Z^a_{jk} = \sum_h \rho(h)^a_{jk}P_h,
\end{equation}
where $a$ labels an irreducible $G$-representation and $\rho(h)^a_{jk}$ is the matrix representing the group element $h$ with respect to some chosen orthonormal basis.\footnote{This is also referred to as the Fourier transformed basis in \cite{Buerschaper_2009}, as it is the basis provided by the Peter-Weyl theorem. This is reviewed in appendix \ref{app:peter-weyl}.} When defining the action of Wilson line operators on states, it is important to note that the dual group algebra possesses a separate \textit{left} and \textit{right} action on the edge Hilbert space. In the original operator basis these are defined to act on basis elements as (see \cite{Kitaev:double})
\begin{equation}
    L(P_g) \ket{h} := \delta_{g,h}\ket{h}, \quad R(P_g) \ket{h} := \delta_{g^{-1},h}\ket{h}.
\end{equation}
For Wilson line operators, borrowing notation from \cite{chung2025spontaneouslybrokennoninvertiblesymmetries}, we will distinguish between the left and right action with arrows, 
\begin{equation}
    (\rightZ{Z})_{jk}^a \ket{h} = \rho(h)^a_{jk}\ket{h}, \quad (\leftZ{Z})_{jk}^a \ket{h} = \rho(h)^{\overline{a}}_{kj}\ket{h},
\end{equation}
where $\overline{a}$ denotes the $G$-representation dual to $a$.

The term ``Wilson line'' for these operators is best understood via the gauge-theoretic language used to describe double models. If the edge variables of the double model are interpreted as transition functions of a principal $G$-bundle, then these operators act in the same way as Wilson line operators with a fixed choice of state at each endpoint. The geometry of this action is shown in figure \ref{fig:wilson_line_grp}.
\begin{figure}
    \centering
    \includegraphics[width=8.5cm]{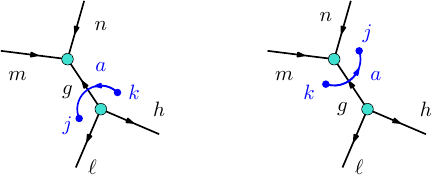}
    \caption{The geometry of how a Wilson line operator $Z_{jk}^a$ acts, on the left and right respectively, on an edge with a transition function $\ket{g}$.}
    \label{fig:wilson_line_grp}
\end{figure}

Closing a Wilson line involves tracing over its endpoint degrees of freedom, which produces a Wilson loop or \textit{character operator},
\begin{equation}
    \chi_a = \sum_{j} Z^a_{jj}, \quad \chi_a \ket{h} = \chi_a(h)\ket{h}.
\end{equation}
These operators have the important property that they satisfy the fusion algebra of $\Rep(G)$, which is the category of irreducible $G$-representations.
In other words, if the product of two irreducible representations takes the form
\begin{equation}
    V_a \otimes V_b = \bigoplus_{c} N_{ab}^c V_c,
\end{equation}
then the character operators satisfy
\begin{equation}
    \chi_a \chi_b = \sum_{c} N_{ab}^c \chi_c.
\end{equation}
It follows that there is a natural action of $\Rep(G)$ on the link Hilbert spaces, and that this action is embedded within the action of the dual group algebra. In fact, the singlet projector for $\mathbb{C}[G]^{\vee}$ can be written entirely in terms of this action as\footnote{This formula follows from the following two classic results in the representation theory of finite groups:
\begin{equation*}
    \frac{1}{|G|} \sum_a \text{dim}(V_a) \chi_a(h) = \delta_{h, e}, \quad |G| = \sum_a \dim(V_a)^2.
\end{equation*}
See for example \cite[chapter 2.2]{Fulton:book}.}
\begin{equation}\label{eq:repG_proj_formula}
    P_e = \frac{1}{D^2} \sum_a \dim(V_a) \chi_a, \quad D^2 := \sum_{a} \dim(V_a)^2.
\end{equation}

As noted below \eqref{eq:dual_rep_prod}, the definition of a tensor product of representations of $\C[G]^\vee$ requires a choice of ordering of the representations.\footnote{In section \ref{sec:projection-magnetic} we will specify this ordering at the vertex of a double model by defining $\C[G]^\vee$ to act on the data of a vertex and a distinguished edge attached to the vertex.} In appendix \ref{app:peter-weyl} we show that character operators act on tensor product representations as
\begin{equation}\label{eq:character_action_grp}
    \rightZ{\chi}_{a}(\ket{v_1} \otimes \ket{v_2} \cdots \otimes \ket{v_n}) = \sum_{j_1\cdots j_n} \left((\rightZ{Z})_{j_1 j_2}^a \ket{v_1} \otimes (\rightZ{Z})_{j_2 j_3}^a \ket{v_2} \otimes \cdots \otimes (\rightZ{Z})_{j_{n}j_1}^a\ket{v_n}\right),
\end{equation}
and therefore depend on an ordering only up to cyclic permutation. In order to define the action of the characters at a vertex, we therefore make use of the following two conventions:\footnote{The alternative choice of ingoing and outgoing convention is made in \cite{Kitaev:double}. This has the effect of reversing the orientations of Wilson line operators.}
\begin{enumerate}
    \item the tensor product in equation \eqref{eq:character_action_grp} is defined with a \textit{clockwise} labeling,
    \item an \textit{outgoing} edge receives the \textit{left} action and an \textit{ingoing} edge receives the \textit{right} action.
\end{enumerate}
With these choices, the singlet projector has a simple graphical representation when acting on a vertex as
\begin{equation}
    \includegraphics[width=5cm,valign=m]{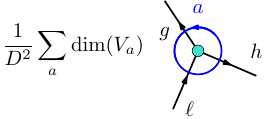}~.
\end{equation}
Note that this takes an analogous form to the electric constraint in the double model, which acts on a plaquette as
\begin{equation}
    \includegraphics[width=4.2cm,valign=m]{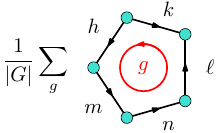}~.
\end{equation}

The rewriting of the magnetic constraint in terms of characters is conceptually important because it allows us to use the language of non-invertible symmetry. Initially, because the dual group algebra acts on the vertices of the model, one might be tempted to say that the double model has a local ``$\C[G]^\vee$ symmetry'' acting on vertices. But while it is technically correct that there is such an action, there are reasons for not using the term ``symmetry'' to describe this. Specifically, in local quantum systems it is often fruitful to reserve the word \textit{symmetry} for operations that preserve the locality structure of the theory.
In continuum quantum field theory, the modern point of view is that the most general algebraic structure that can appropriately be called a ``symmetry'' is that of a (higher) category.\footnote{For a recent discussion of how the intrinsic categorical symmetry of a theory is realized in practice in different backgrounds, see \cite{AliAhmad:2025bnd,Benini:2025lav}.}
From this point of view, $\Rep(G)$ has the right to be called a symmetry, while $\C[G]^{\vee}$ is a larger collection of operators associated with this intrinsic categorical symmetry.
Therefore, we will freely say that the double model has a local ``$\Rep(G)$ gauge symmetry'' acting on the vertices, and that the vertex constraint is the associated Gauss law.

Since a Gauss law is typically associated with the presence of gauge fields, one might want to further qualify why the term is appropriate here. To make this precise would require a formal definition of ``non-invertible gauge theory.'' While a general definition of such theories is far from being completely developed, examples of non-invertible gauge theories are provided by the Turaev-Viro theories \cite{TURAEV1992865, kirillov2010turaevviroinvariantsextendedtqft}.\footnote{See also \cite{Kawagoe:2024tgv} for a study of gauge-theoretic interpretations of string-net models.} These continuum field theories are structurally similar to discrete gauge theories, but replace group-valued transition functions with the data of a fusion category \cite{thorngren2019fusioncategorysymmetryi}. By considering an appropriate change of basis, the edge degrees of freedom of a double model can be made to resemble those of a Turaev-Viro theory together with permissible ``flux violations.''\footnote{This point is essentially developed in \cite{Buerschaper_2009}, combined with the known fact that the string-net Hilbert space with vertex constraint enforced produces the Turaev-Viro Hilbert space after enforcing a stabilizer condition \cite{kirillov2010turaevviroinvariantsextendedtqft}.}

Taken together, this means that the double model can be viewed as a lattice model of the $\Rep(G)$-Turaev-Viro theory that allows for flux violations, and hence as a kind of ``$\Rep(G)$-lattice gauge theory.'' We will take these points as sufficient justification for using the term ``Gauss law'' for the magnetic constraint. In section \ref{sec:hopf} we will apply the same logic to discuss more general classes of non-invertible Gauss laws. With this background and our terminology established, we now move on to an explicit study of Haag duality and additivity in magnetically constrained subspaces of the double model.

\section{A counterexample to exact Haag duality}
\label{sec:counterexample}
The point of this section is to show that when magnetic constraints are enforced exactly for a nonabelian group $G$, exact Haag duality need not be satisfied by the constrained algebras.

Given a region $R$, how do we determine the constrained algebra $\A_0(R)$ from equation \eqref{eq:constrained-algebras}? The first point is to note that for the magnetic constraint, the projection $\Pi_0$ is diagonal in the group basis.
Consequently, if we decompose a generic operator $O$ on the lattice as
\begin{equation}
    O = \sum_{\vec{g}, \vec{h}} c_{\vec{g}, \vec{h}} |g_1 \otimes \dots \otimes g_{|E|}\rangle \langle h_1 \otimes \dots \otimes h_{|E|}|,
\end{equation}
this operator is constraint-preserving if and only if each of the transition operators 
\begin{equation}
    |g_1 \otimes \dots \otimes g_{|E|}\rangle \langle h_1 \otimes \dots \otimes h_{|E|}|
\end{equation}
is constraint-preserving for $c_{\vec{g}, \vec{h}} \neq 0.$
Within a subregion $R$, the task of finding the algebra $\A_0(R)$ is the same as determining when the operator
\begin{equation} \label{eq:R-transition-operator}
    |g_1 \otimes \dots \otimes g_{|R|} \rangle \langle h_1 \otimes \dots \otimes h_{|R|} | \otimes 1_{R'}
\end{equation}
is constraint-preserving.

Consider a four-valent vertex that is part of a larger lattice; see figure \ref{fig:four-valent-vertex}.
Locally, the state of the system is labeled by a choice of group element at each edge.
Given the labeling in figure \ref{fig:four-valent-vertex}, the magnetic constraint for the vertex is $gh=k \ell.$
Suppose that we consider a region $R$ that contains two opposite edges of the vertex --- the $g$ and $\ell$ edges --- and does not contain the others --- the $h$ and $k$ edges.
An operator like the one in equation \eqref{eq:R-transition-operator} will act on these edges by sending $g$ to some other group element $g',$ and sending $\ell$ to $\ell',$ without changing $h$ or $k.$
Under what conditions can such a map be constraint-preserving?

\begin{figure}
    \centering
    \begin{tikzpicture}[
    scale=1,
    thick_line/.style={line width=1.5pt},
    directed/.style={
        postaction={decorate},
        decoration={
            markings,
            mark=at position 0.65 with {\arrow{Latex[length=3.5mm, width=2.5mm]}}
        }
    },
    label_node/.style={font=\Large}
    ]

    \draw[thick_line, directed, ForestGreen] (-2, 0) -- (0, 0) node[near start, above=4pt, label_node] {$g$};
    \draw[thick_line, directed, ForestGreen] (0, 0) -- (2, 0) node[near end, below=4pt, label_node] {$\ell$};
    \draw[thick_line, directed] (0, 0) -- (0, 2) node[near end, right =4pt, label_node] {$k$};
    \draw[thick_line, directed] (0, -2) -- (0, 0) node[near start, left=4pt, label_node] {$h$};

    \node [circle, draw=black, fill=cyan, minimum size=3pt] at (0,0) {};
    \end{tikzpicture}
    \caption{A four-valent vertex with edge labels $g, h, k,$ and $\ell.$ For the chosen orientations, the magnetic constraint is $g h = k \ell.$ The green-tinted ``$g$'' and ``$\ell$'' edges are in the region $R$.}
    \label{fig:four-valent-vertex}
\end{figure}

On quite general lattices, one can conclude that the only possible constraint-preserving maps are ones that multiply both $g$ and $\ell$ by a common element of the center of $G$.
This happens provided that the global structure of the lattice permits one to find, for any choice of $g, \ell,$ and $k,$ a global configuration that satisfies the constraint and that is compatible with these edges.\footnote{This places some global constraints on the lattice that prevent it from being ``too degenerate.'' For example, if the vertex shown in figure \ref{fig:four-valent-vertex} were a full lattice, with opposite edges identified periodically, then one would have the global constraints $g=\ell$ and $h=k.$}
Supposing this is possible, one finds that by choosing $k = e,$ there is a configuration satisfying the constraint,
\begin{equation} \label{eq:cusp-vertex}
    \includegraphics[width=4cm,valign=m]{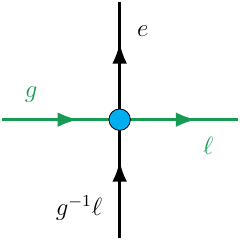}~,
\end{equation}
so one has
\begin{equation}
    h = g^{-1} \ell.
\end{equation}
For the map $g \mapsto g', \ell \mapsto \ell'$ to be constraint-preserving, one must have
\begin{equation}
    \ell' = g' h = g' g^{-1} \ell,
\end{equation}
or
\begin{equation} \label{eq:ell-g}
    \ell' \ell^{-1} = g' g^{-1}.
\end{equation}
Now letting $k$ be generic, one finds that there is a configuration satisfying the constraint with
\begin{equation}
    h
        = g^{-1} k \ell,
\end{equation}
and constraint preservation requires
\begin{equation}
    k \ell' = g' h = g' g^{-1} k \ell,
\end{equation}
or
\begin{equation}
    k (\ell' \ell^{-1}) k^{-1} = g' g^{-1}.
\end{equation}
Using equation \eqref{eq:ell-g}, this gives that $k$ conjugates $g' g^{-1}$ to itself, and since this must hold for all $k \in G,$ one concludes that $g' g^{-1}$ is in the center of $G$.

We conclude that for regions with ``opposite edges'' at a four-valent vertex, as in figure \ref{fig:four-valent-vertex}, mild assumptions on the global lattice structure restrict constraint-preserving operators to act via left-multiplication by a central element.
In particular, if $G$ is a nonabelian group with trivial center, then the only possible constraint-preserving map is $g \mapsto g, \ell \mapsto \ell.$
This setting is quite interesting, because it means that the projector $|g\rangle\langle g|$ is in both $\A_0(R)$ \textit{and} the commutant $\A_0(R)'.$

We will now use the above observations to construct a counterexample to exact Haag duality for a nonabelian group $G$ with trivial center.
To do this, we will construct a closed lattice with a region $R$ such that a projector $|g\rangle\langle g|$ is in $\A_0(R)'$, but not in $\A_0(R')$.
The lattice is drawn here, with top-bottom and left-right identification:
\begin{equation}
    \includegraphics[scale=0.8,valign=m]{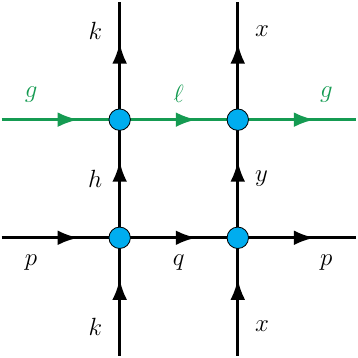}~.
\end{equation}
In the above labeling, it is true that for any choice of $g, \ell,$ and $k$, there is a global constraint-preserving configuration --- a particular example is sketched here:
\begin{equation}
    \includegraphics[scale=0.8,valign=m]{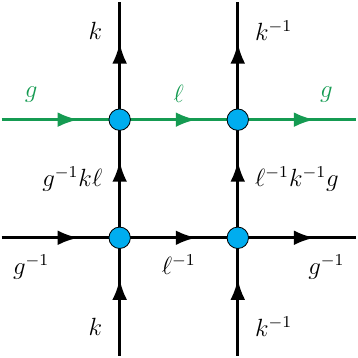}~.
\end{equation}
Consequently, by our above arguments, the projector $|g\rangle \langle g|$ is in both $\A_0(R)$ and $\A_0(R)'.$
We claim that this projector is not in $\A_0(R').$

Suppose, toward contradiction, that there were a constraint-preserving operator $O$ on $R'$ with the property that $O,$ restricted to the constraint subspace, acts as the projector $|g\rangle \langle g|.$
Then examine the two configurations satisfying the constraint here, where we assume $g' \neq g$:
\begin{equation}
    \includegraphics[scale=0.8,valign=m]{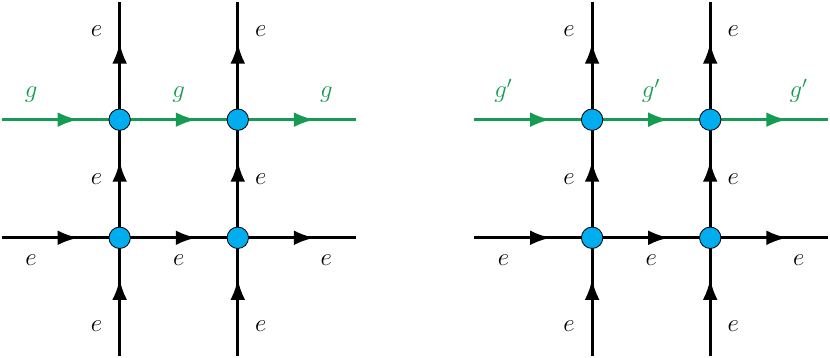}~.
\end{equation}
The operator $O$ would need to act as the identity on the left configuration, but it would need to annihilate the right configuration.
This is impossible, however, since the two configurations agree on the support of $O$.
We conclude that within the magnetically constrained algebra, there is a Haag duality violation:
\begin{equation}
    \A_0(R)' \supsetneq \A_0(R').
\end{equation}

\section{Non-abelian magnetic constraints and collars}
\label{sec:cuspless-magnetic}

\subsection{Cusps and weak Haag duality}
\label{sec:cusps}

The Haag duality violations of the preceding section occur in the presence of vertices that have ``cusps,'' like the one in equation \eqref{eq:cusp-vertex}.
More generally, we will say that a region $R$ has a cusp at vertex $V$ if the edges of $V$ contained in $R$ cannot be connected by a single adjacency path.
Equivalently, $R$ has a cusp at $V$ if the set of edges of $V$ contained in $R$ is not connected in the dual graph.
See figure \ref{fig:seven-valent-vertex} for an example of cusped and cuspless vertex on a many-valent lattice, and figure \ref{fig:continuum-picture} for intuition as to why we call these regions cusped.

\begin{figure}[h]
    \centering
    \includegraphics{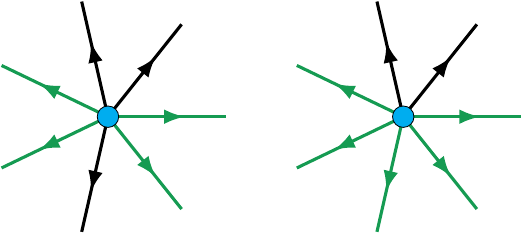}
    \caption{Two seven-valent vertices, with the edges of $R$ colored in green.
    The left vertex is cusped, while the right is not.}
    \label{fig:seven-valent-vertex}
\end{figure}

\begin{figure}[h]
    \centering
    \includegraphics{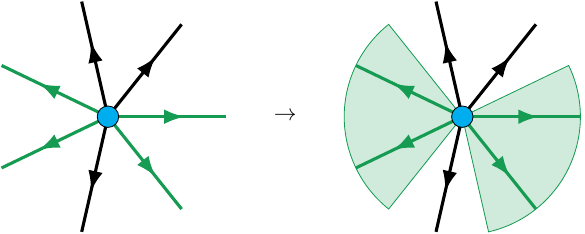}
    \caption{If one tries to make a ``continuum picture'' of a cusped vertex by drawing a continuous region that includes all edges of $R$ and excludes all edges of $R',$ this region must necessarily meet itself at a cusp, like in the right side of this figure.}
    \label{fig:continuum-picture}
\end{figure}

For the magnetic constraints associated to a $G$ double model, as explained in section \ref{sec:counterexample}, the set of constraint-preserving operators at a cusped vertex is somewhat delicate.
This can lead to Haag duality violations for regions containing cusped vertices.
To improve the situation, we define a ``collared'' region $R^+$ by starting with $R$, then adding all edges that belong to cusped vertices of $R$; see for example figure \ref{fig:collar}.
We will demonstrate the ``weak Haag duality'' relation\footnote{The same notion appears in a different but closely related context in \cite{Evans:fusion}.}
\begin{equation} \label{eq:collared-HD}
    \A_0(R^+)' \subseteq \A_0(R').
\end{equation}
When $R$ has no cusped vertices, this implies the inclusion $\A_0(R)' \subseteq \A_0(R').$
The reverse ($\supseteq$) inclusion is obvious from the definitions in section \ref{sec:background}, so equation \eqref{eq:collared-HD} implies exact Haag duality for cuspless regions.\footnote{Note that a trivalent lattice has no cusped vertices, so Haag duality holds exactly under magnetic constraints on any trivalent lattice.}

\begin{figure}[h]
    \centering
    \includegraphics{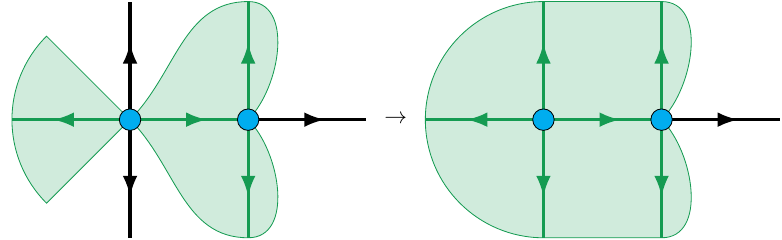}
    \caption{Exchanging a region $R$ for its collared version $R^+$. Only one vertex of $R$ is cusped, and this is the only one that gets new edges in $R^+.$}
    \label{fig:collar}
\end{figure}

\subsection{Projection formulas and lifts}\label{subsec:grp_lifts}

Before proceeding, it will be informative to revisit the setting of \cite{Harlow:disjoint-additivity}, in which exact Haag duality was found to hold under the imposition of all \textit{electric} constraints.
The relevant subspace $\H_0$ of the double-model Hilbert space is the one stabilized by all plaquette projection operators
\begin{equation}
    \Pi_P = \frac{1}{|G|}\sum_{g \in G} \rho_P(g),
\end{equation}
where $\rho_P(g)$ is the electric gauge transformation corresponding to the group element $g.$
In this setting, Haag duality holds exactly:
\begin{equation} \label{eq:exact-HD}
    \A_0(R)' = \A_0(R').
\end{equation}

As discussed in section \ref{subsec:qdouble}, the local operator algebra $\cA_0(R) \subseteq \mathcal{B}(\cH_0)$ is defined as the subspace of operators having a lift (within the unconstrained space) to an operator with support on the region $R$.
So to study the set $\A_0(R)',$ one must understand when an operator $\omega \in \A_0(R)'$ can be lifted to an operator $\tilde{\omega}$ with a given support.
In particular, to prove equation \eqref{eq:exact-HD}, one must construct, for each $\omega \in \A_0(R)',$ a lift $\tilde{\omega}$ supported on $R'.$

The key tool used in constructing lifts is a version of an algebraic notion called \textit{compression duality}.
More details can be found in \cite[section 3.2]{Harlow:disjoint-additivity}.
Let $\omega$ be an operator acting on $\H_0,$ and suppose that there exists a lift $\tilde{\omega}$ supported within the region $R'$.
Since the lift $\tilde{\omega}$ commutes with all operators in $R,$ it must act on the space $\A(R) \H_0$ as
\begin{equation} \label{eq:lift-definition}
    \tilde{\omega} a |\psi\rangle
        = a \omega |\psi\rangle, \quad |\psi\rangle \in \H_0, \quad a \in \A(R).
\end{equation}
Starting with an operator $\omega$ acting on $\H_0$, one might therefore try to \textit{define} a lift to act on this subspace via equation \eqref{eq:lift-definition}, and define $\tilde{\omega}$ to vanish on the orthocomplement of this subspace. This will provide a sort of ``coarsest'' lift of $\omega$, with its support being all of $R'$.
For this definition to be unambiguous and so well defined, one must have the implication
\begin{equation}
    \left( \sum_{j=1}^{n} a_j |\psi_j\rangle = 0 \right)
        \quad \Rightarrow \quad \left( \sum_{j=1}^{n} a_j \omega |\psi_j\rangle = 0 \right), \quad \ket{\psi_j} \in \cH_0, \quad a_j \in \cA(R).
\end{equation}
One can compute
\begin{equation}
    \left\lVert \sum_{j=1}^{n} a_j \omega |\psi_j\rangle \right\rVert^2
        = \sum_{j,k=1}^{n} \langle \psi_j | \omega^{\dagger} a_j^{\dagger} a_k \omega |\psi_j \rangle
        = \sum_{j,k=1}^n \langle \psi_j | \omega^{\dagger} (\Pi_0 a_j^{\dagger} a_k \Pi_0) \omega |\psi_j \rangle,
\end{equation}
where $\Pi_0$ is the projector onto $\H_0.$
So if $\Pi_0 a_j^{\dagger} a_k \Pi_0$ commutes with $\omega,$ then whenever $\omega$ is an isometry, the operator $\tilde{\omega}$ is an isometry as well, and in particular $\tilde{\omega}$ is well defined.
So when $\omega$ is an isometry, this gives
\begin{equation}
    \omega \in (\Pi_0 \A(R) \Pi_0)'
        \quad \Rightarrow \quad
        \tilde{\omega} \text{ exists in } \A(R'),
\end{equation}
and since any von Neumann algebra is generated by its isometries, one finds the inclusion
\begin{equation} \label{eq:compression-inclusion}
    (\Pi_0 \A(R) \Pi_0)'
        \subseteq \A_0(R').
\end{equation}
In particular, to prove exact Haag duality in the presence of an electric constraint, it sufficed in \cite{Harlow:disjoint-additivity} to prove
\begin{align} \label{eq:electric-projection-formula}
        \A_0(R) = \Pi_0 \A(R) \Pi_0
   \end{align}
for all regions $R$.

While equation \eqref{eq:electric-projection-formula} will \textit{not} hold in the presence of a magnetic constraint (or a more general non-invertible Gauss law), we will eventually show the inclusion
\begin{equation}
    \text{magnetic constraint:}
    \quad
    \Pi_0 \A(R) \Pi_0 \subseteq \A_0(R^+),
\end{equation}
which will prove the claim \eqref{eq:collared-HD} via the logic
\begin{equation}
    \A_0(R^+)' \subseteq (\Pi_0 \A(R) \Pi_0)' \subseteq \A_0(R').
\end{equation}
To prepare for that argument, we first give a simple, representation-theoretic way of understanding why equation \eqref{eq:electric-projection-formula} holds for the electric Gauss law constraint in finite-group gauge theory.

Given a unitary representation of a finite group $\mathcal{S}$ on a finite-dimensional Hilbert space $\H,$ one obtains an adjoint representation of $\mathcal{S}$ on the operators of $\H$:
\begin{equation}
    \mathcal{O} \in \B(\H) \quad \to \quad s \cdot \mathcal{O} = U(s) \mathcal{O} U(s)^{\dagger}.
\end{equation}
With respect to the Hilbert-Schmidt inner product on operators,
\begin{equation}
    \llangle \mathcal{O}_1 | \mathcal{O}_2 \rrangle = \tr(\mathcal{O}_1^{\dagger} \mathcal{O}_2),
\end{equation}
this representation is unitary, and any $\mathcal{O} \in \B(\H)$ can be decomposed, under this action, into irreducible representations of $\S$ as
\begin{equation}
    \mathcal{O} = \sum_{a} \mathcal{O}_{a}.
\end{equation}
The component of $\mathcal{O}$ corresponding to the trivial representation,
\begin{equation}
    \mathcal{O} = \mathcal{O}_{0} + \dots,
\end{equation}
will be called the \textit{neutral} piece of $\mathcal{O}$.\footnote{Generally we will use the term ``neutral'' for an operator that lives in the trivial representation of the adjoint action, and ``singlet'' for a state that lives in the trivial representation of the original action of $\S$ on $\H$.} There are two important facts about $\mathcal{O}_0.$
The first is that $\mathcal{O}_0$ preserves the singlet sector of $\H$.
This is because for any $s \in \S,$ one has
\begin{equation}
    \mathcal{O}_0 U(s) = U(s) \mathcal{O}_0,
\end{equation}
so if $U(s)$ preserves $|\psi\rangle,$ then it also preserves $\mathcal{O}_0 |\psi\rangle.$ The second fact is that, when restricted to acting on the singlet, $\mathcal{O}_0$ acts in the same way as $\mathcal{O}$:
\begin{equation}
    \Pi_0 \mathcal{O}_0 \Pi_0 = \Pi_0 \mathcal{O} \Pi_0.
\end{equation}
This can be shown by direct calculation, using that $\mathcal{O}_0$ can be computed from the projection formula,
\begin{equation}\label{eq:grp_op_singlet_projector}
    \mathcal{O}_0 = \frac{1}{|S|}\sum_{s \in S} U(s) \mathcal{O} U(s)^\dagger.
\end{equation}
Acting on a singlet state, one therefore finds
\begin{equation}
    \mathcal{O}_0 \ket{\psi} = \frac{1}{|S|}\sum_{s \in S} U(s) \mathcal{O} U(s)^\dagger\ket{\psi} = \frac{1}{|S|}\sum_{s \in S} U(s) \mathcal{O} \ket{\psi} = \Pi_0 \mathcal{O} \ket{\psi},
\end{equation}
as desired.

On a closed lattice with $N_P$ plaquettes, the $G$-double model carries an action of $G$ at every plaquette, and the electric constraint is imposed by going to the singlet sector of the group $\S = G^{\times N_P}.$ The neutral piece of a generic operator $\mathcal{O}$ is given by \eqref{eq:grp_op_singlet_projector}, and because each $U(s)$ is a tensor product of unitary operators, the support of $\mathcal{O}_0$ is no larger than the support of $\mathcal{O}.$ As a result, whenever $\mathcal{O}$ is in $\A(R),$ we have
\begin{equation}
    \Pi_{0} \mathcal{O} \Pi_0
        = \Pi_0 \mathcal{O}_0 \Pi_0 \in \A_0(R),
\end{equation}
which gives
\begin{equation}
    \Pi_0 \A(R) \Pi_0
        \subseteq \A_0(R).
\end{equation}
The other inclusion in \eqref{eq:electric-projection-formula} is obvious, so this completes the proof of equation \eqref{eq:electric-projection-formula}.
Note that the operator $\Pi_0 \mathcal{O} \Pi_0$ itself, when interpreted as an operator acting on $\H$, will generally have support outside of $R$; the important point is that $\Pi_0 \mathcal{O} \Pi_0$ is in $\A_0(R),$ since there exists a \textit{distinct} lift $\mathcal{O}_0$ that lives within $\A(R).$

In the case of magnetic constraints, we will have a similar representation-theoretic argument for the projection formula $\Pi_0 \A(R) \Pi_0 \subseteq \A(R^+).$
The main difference will be that the projection will be onto the singlet sector of a local $\text{Rep}(G)$ action on Hilbert space, not a local $G$ action; consequently, the projection can mildly increase the support of an operator. 

\subsection{Projection formulas and magnetic constraints}
\label{sec:projection-magnetic}

Now we will shift back to the case of the magnetic constraint in a double model based on a finite group $G$.
We will prove that Haag duality holds exactly for any cuspless region $R$ and demonstrate a weaker condition on cusped regions.
As discussed above, if $\Pi_0$ is the projector onto the constraint subspace, we will reach these conclusions by proving
\begin{align}\label{eq:compression_gen}
    \Pi_0 \A(R) \Pi_0 \subseteq \A_0(R^+).
\end{align}

In the case of electric constraints, it was useful to think of the electric constraint as being implemented by a singlet projection in a representation of the group $G$.
Fortunately, as explained in section \ref{subsec:gauge_theory_rep}, the magnetic constraint can be thought of as a singlet projection in a representation of the dual group algebra $\mathbb{C}[G]^{\vee}.$
Recall that this algebra has a basis $P_g$ labeled by $g \in G,$ and acts on an edge of the double model by
\begin{equation}
    P_g |h\rangle = \delta_{g, h} |h\rangle.
\end{equation}
Given a vertex $V$ and a choice of adjacent edge $E$,\footnote{This plays the same role as the choice of \textit{site} in \cite{Kitaev:double}.} one can order the edges of $V$ clockwise starting from $E$, and define an action of $\mathbb{C}[G]^{\vee}$ on $(V, E)$ via the tensor product representation
\begin{equation} \label{eq:double-vertex-action}
    \rho^{(V, E)}(P_h) (|g_1\rangle \otimes \dots \otimes |g_n\rangle)
        \equiv \delta_{h, g_1 \dots g_n} (|g_1\rangle \otimes \dots \otimes |g_n\rangle).
\end{equation}
The magnetic constraint at this vertex is then imposed by acting with the element $P_e$, which --- as explained in section \ref{subsec:gauge_theory_rep} --- has the representational interpretation of projecting onto a singlet of $\C[G]^\vee$.

In the same way that a unitary representation of a group incurs an adjoint action on operators, any representation $\rho$ of $\C[G]^\vee$ on a Hilbert space $\cH$, satisfying $\rho(P_g)^{\dagger} = \rho(P_g),$\footnote{For a dual group algebra, this is the correct compatibility condition of a representation with the inner product. See appendix \ref{app:hopf_def}.} induces an adjoint representation
\begin{equation}
    \cO \in \B(\H) \quad \to \quad P_g \cdot \cO = \sum_{h} \rho(P_{h}) \cO \rho(P_{g^{-1}h}).
\end{equation}
As in the case of a group action, this decomposes an operator into $\C[G]^\vee$-irreducible representations,
\begin{equation}
    \cO = \sum_a \cO_a,
\end{equation}
with the neutral component satisfying the two properties
\begin{enumerate}
    \item $\cO_0$ is singlet-preserving, and
    \item on the singlet subspace, $\cO_0$ is equal to $\Pi_0 \cO \Pi_0$.
\end{enumerate}
Both properties follow immediately from the explicit projection formula
\begin{equation}
    \cO_0 = \sum_{g} \rho(P_{g}) \cO \rho(P_{g}),
\end{equation}
which is the analogue of \eqref{eq:grp_op_singlet_projector} in the present setting.\footnote{The origin of this formula is explained further in the context of general Hopf algebras in appendix \ref{app:hopf}.} Acting on a singlet state $|\psi\rangle$, we have
\begin{equation}
    \cO_0 \ket{\psi} = \sum_{g} \rho(P_{g}) \cO \rho(P_{g}) \ket{\psi} = \sum_{g} \rho(P_{g}) \delta_{g,1} \cO \ket{\psi} = \rho(P_{e}) \cO \ket{\psi},
\end{equation}
which is both another singlet state and equal to the action of $\Pi_0 \cO \Pi_0$ on the same state, since we have $\rho(P_e) = \Pi_0.$
This means that, from a representational point of view, we are in the same setting as in section \ref{subsec:grp_lifts}. 

Applied to the quantum double model, the neutral piece of an operator $\cO$ at a vertex is then given by
\begin{equation}
    \cO^{(V,E)}_0
        = \sum_{g \in G} \rho^{(V, E)}(P_g) \cO \rho^{(V, E)}(P_g).
\end{equation}
Let us now consider the support of the operator $\cO_0^{(V,E)}$ as compared to the support of the operator $\cO.$
Obviously, the support of $\cO_0^{(V,E)}$ can only modify the support of $\cO$ by adding edges that are adjacent to $V$, but we can be more precise than this. 
Here we will work with concrete expressions to help orient the reader, leaving a more abstract but general discussion to section \ref{sec:hopf}. 

Momentarily letting $\cO$ be supported only on the edges of $V$, so we can write
\begin{equation}
    \cO = \sum_{g_1, \dots, g_n; h_1, \dots, h_n} c_{g_1, \dots, g_n; h_1, \dots, h_n} |g_1 \otimes \dots \otimes g_n\rangle \langle h_1 \otimes \dots \otimes h_n|,
\end{equation}
we have
\begin{align}
    \begin{split}
    \cO^{(V,E)}_0
        = \sum_{x \in G} &
        \left(\sum_{y_1 \dots y_n = x} |y_1 \otimes \dots \otimes y_n \rangle \langle y_1 \otimes \dots \otimes y_n|\right) \\
        & \times \left( \sum_{g_1, \dots, g_n; h_1, \dots, h_n} c_{g_1, \dots, g_n; h_1, \dots, h_n} |g_1 \otimes \dots \otimes g_n\rangle \langle h_1 \otimes \dots \otimes h_n| \right) \\
        & \times \left(\sum_{z_1 \dots z_n = x} |z_1 \otimes \dots \otimes z_n \rangle \langle z_1 \otimes \dots \otimes z_n|\right),
    \end{split}
\end{align}
which we can simplify as
\begin{align}
    \begin{split}
    \cO^{(V,E)}_0
        = \sum_{x \in G} \sum_{g_1 \dots g_n = x} \sum_{h_1 \dots h_n = x}  c_{g_1, \dots, g_n; h_1, \dots, h_n} |g_1 \otimes \dots \otimes g_n\rangle \langle h_1 \otimes \dots \otimes h_n|,
    \end{split}
\end{align}
or simplify even further as
\begin{align}
    \begin{split}
    \cO^{(V,E)}_0
        = \sum_{g_1 \dots g_n = h_1 \dots h_n }  c_{g_1, \dots, g_n; h_1, \dots, h_n} |g_1 \otimes \dots \otimes g_n\rangle \langle h_1 \otimes \dots \otimes h_n|,
    \end{split}
\end{align}
Now suppose that $\cO$ is not supported on all edges of $V$; instead it is only supported between some edges $j$ and $k$ with respect to the ordering induced by $(V,E).$
This is equivalent to writing
\begin{equation}
    c_{g_1, \dots, g_n; h_1, \dots, h_n}
    = \delta_{g_1, h_1} \dots \delta_{g_{j-1}, h_{j-1}} c_{g_j, \dots, g_k; h_j, \dots, h_k} \delta_{g_{k+1}, h_{k+1}} \dots \delta_{g_{n}, h_{n}}
\end{equation}
Then we have that $\cO^{(V,E)}_0$ is 
\begin{equation}
    = \text{id}_{1} \otimes \dots \otimes \text{id}_{j-1} \otimes \left( \sum_{h_j \dots h_k = g_j \dots g_k}  c_{g_j, \dots, g_k; h_j, \dots, h_k}|g_1 \otimes \dots \otimes g_n\rangle \langle  h_j \otimes \dots \otimes h_k | \right) \otimes \text{id}_{k+1} \otimes \dots \otimes \text{id}_{n}.
\end{equation}
So the support of $\cO_0^{(V,E)}$ can grow only ``in between'' edges $j$ and $k$ with respect to the ordering induced by $(V,E).$
We sketch this for a four-valent vertex in figure \ref{fig:site-projection}.

\begin{figure}
    \centering
    \includegraphics{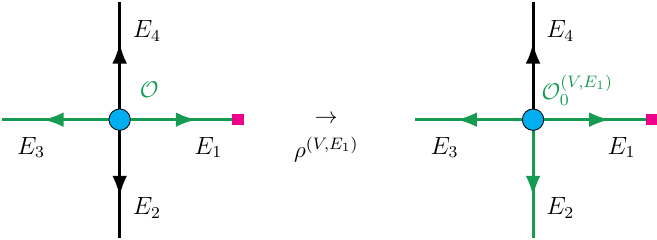}
    \caption{A neutral projection at a vertex $V$ with the adjacent edge $E_1$ chosen to label the action of the dual group algebra.
    When $\cO$ is originally supported on edges $E_1$ and $E_3,$ the singlet projection $\rho^{(V,E_1)}$ creates an operator $\cO_0^{(V,E_1)}$ that has support on $E_1, E_2,$ and $E_3$.}
    \label{fig:site-projection}
\end{figure}

What do we learn from this? 
For every edge $V$ in the region $R$ that is cuspless, it is possible to choose a pairing $(V,E)$ such that when $\cO$ is supported in $R$, the operator $\cO_0^{(V,E)}$ is supported in $R$ as well.
For every cusped vertex, any choice of adjacent edge will cause the support of $\cO_0^{(V,E)}$ to be larger than that of $V$.
But if we choose an edge at each vertex such that \textit{the support of $\cO$ does not grow at cuspless vertices}, then perform the simultaneous singlet projection of all $\rho^{(V,E)}$ actions, we obtain a neutral operator $\cO_0$ for which the support is contained in $R^+.$
We therefore have
\begin{equation}
    \Pi_0 \cO \Pi_0
        = \Pi_0 \cO_0 \Pi_0 \subseteq \A_0(R^+),
\end{equation}
which proves
\begin{equation}
    \Pi_0 \A(R) \Pi_0
        \subseteq \A_0(R^+),
\end{equation}
which we already anticipated in section \ref{subsec:grp_lifts}.
There, we demonstrated that this inclusion implies the weak Haag duality relation
\begin{equation}\label{eq:weak_haag}
    \A_0(R^+)'
        \subseteq \A_0(R'),
\end{equation}
which gives exact Haag duality for cuspless regions.

\subsection{Disjoint additivity}\label{subsec:disjoint_add_group}

So far we have discussed Haag duality in the presence of magnetic constraints, and its sensitivity to discreteness artifacts in the form of ``weak Haag duality'' or ``Haag duality with a collar.''
But Haag duality is not the only locality principle studied in relation with operator algebras; as discussed in section \ref{sec:background-algebraic}, one often also studies the principle of additivity. Even in the continuum, additivity is not always satisfied for generic regions, due to the possibility of having topologically nontrivial operators that cannot be broken into contractible pieces.
To get around this, \cite{Harlow:disjoint-additivity} introduced the notion of ``disjoint additivity,'' which essentially means that additivity should hold for regions that are appropriately ``disjoint.''

We will now investigate this notion for the local algebras $\A_0(R)$ associated with the magnetic constraint of a double model. Take $R_1$ and $R_2$ to be two regions. The inclusions
\begin{equation}
    \cA_0(R_1) \subseteq \cA_0(R_1 \cup R_2), \quad \cA_0(R_2) \subseteq \cA_0(R_1 \cup R_2),
\end{equation}
induce another inclusion
\begin{equation}
    \cA_0(R_1) \vee \cA_0(R_2) \subseteq \cA_0(R_1 \cup R_2).
\end{equation}
The opposite inclusion may or may not be satisfied depending on the details of $R_1$ and $R_2.$
If we take an element
\begin{equation}
    \omega \in \cA_0(R_1 \cup R_2),
\end{equation}
and let $\w{\omega}$ be a constraint preserving lift, then additivity of the tensor product algebra implies that that we can write this operator as
\begin{equation}
    \w{\omega} = \sum_{i} \w{\omega}^{(1)}_i \w{\omega}^{(2)}_i, \quad  \w{\omega}^{(j)}_i \in \cA(R_j).
\end{equation}
While the total sum is constraint-preserving, for a generic such decomposition the operators $\w{\omega}^{(j)}_i$ need \textit{not} be individually constraint preserving. Generalizing the approach of \cite{Harlow:disjoint-additivity} for the electric constraint, one can attempt to remedy this by taking neutral components. To do this, we will need the following proposition.
\begin{proposition} \label{prop:neutral-DA-projection}
    If $R_1$ and $R_2$ are disjoint in the sense that there does not exist a vertex that has some edges contained in $R_1$ and some contained in $R_2$, then we have
    \begin{equation}
        \left(\w{\omega}^{(1)}_i \w{\omega}^{(2)}_i\right)_0 = \left(\w{\omega}^{(1)}_i\right)_0 \left(\w{\omega}^{(2)}_i\right)_0.
    \end{equation}
\end{proposition}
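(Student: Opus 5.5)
The neutral projection $(\cdot)_0$ for the full magnetic constraint is a composite of the vertex-by-vertex projections
\begin{equation*}
    \cO \mapsto \cO_0^{(V,E_V)} = \sum_{g\in G} \rho^{(V,E_V)}(P_g)\, \cO\, \rho^{(V,E_V)}(P_g),
\end{equation*}
one for each vertex $V$ with its chosen edge $E_V$. These commute with one another, because all the operators $\rho^{(V,E)}(P_g)$ are diagonal in the group basis and therefore mutually commute. So I write $(\cdot)_0 = \prod_V \Phi_V$ with $\Phi_V(\cO) = \cO_0^{(V,E_V)}$, and it suffices to prove the factorization one vertex at a time:
\begin{equation*}
    \Phi_V(A B) = \Phi_V(A)\, \Phi_V(B), \qquad A = \w{\omega}^{(1)}_i \in \cA(R_1), \quad B = \w{\omega}^{(2)}_i \in \cA(R_2).
\end{equation*}
Applying this successively over all vertices then gives the claim. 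At each intermediate stage I need the two factors to keep their supports in "disjoint" regions, so that the next vertex's step still applies. I will handle this by tracking supports. By the support analysis of section \ref{sec:projection-magnetic}, $\Phi_V$ only enlarges support by edges adjacent to $V$, and only when $V$ already touches the support of the operator. Thus after projecting $A$ at its vertices, its support lies in the set $\hat R_1$ of edges adjacent to vertices touching $R_1$, and similarly for $B$. I also need to check that enlarging the supports this way never creates a vertex touching both; I come back to this below.

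\textbf{Step 1: vertices touching neither region.} If $V$ touches neither $R_1$ nor $R_2$, then $A$, $B$ and $AB$ all commute with every $\rho^{(V,E_V)}(P_g)$, since these are functions of the edges at $V$ only. Hence
\begin{equation*}
    \Phi_V(\cO) = \Bigl(\sum_g \rho(P_g)\Bigr)\cO = \cO
\end{equation*}
for each of these operators, using $\sum_g \rho(P_g) = 1$ and $P_g P_h = \delta_{g,h}P_h$. Both sides of the desired identity then agree trivially.

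\textbf{Step 2: vertices touching exactly one region.} By hypothesis, no vertex has edges in both $R_1$ and $R_2$. Suppose $V$ touches $R_1$ but not $R_2$. Then $B$ commutes with every $\rho^{(V,E_V)}(P_g)$, so
\begin{equation*}
    \Phi_V(AB) = \sum_g \rho(P_g) A\, \rho(P_g) B = \sum_g \rho(P_g) A\, \rho(P_g)\, \rho(P_g)\, B = \Phi_V(A)\, \Phi_V(B),
\end{equation*}
where the middle equality inserts an extra $\rho(P_g)$ using idempotence, and the last commutes it past $B$. Here $\Phi_V(B) = B$ by Step 1. The case where $V$ touches only $R_2$ is symmetric.

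\textbf{Main obstacle.} The difficulty is the iteration: the hypothesis must persist after projecting at earlier vertices. Projecting $A$ at some $V$ adjacent to $R_1$ can add edges of $V$ to its support, and such an edge might also be incident to a vertex $W$ that touches $R_2$. At that $W$ the Step 2 argument would no longer directly apply. I would resolve this by observing that the added edges are the ones on which $\Phi_V(A)$ acts only diagonally in the group basis, through the constraint $h_j\cdots h_k = g_j\cdots g_k$. Diagonal operators commute with all the $\rho^{(W,E_W)}(P_g)$, so they behave as "invisible" at $W$. To make this rigorous, I would show that $\Phi_V(A)$ lies in the algebra generated by $\cA(R_1)$ together with diagonal group-basis functions on the edges of $V$. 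The Step 2 computation only uses that one factor commutes with the $\rho^{(W,E_W)}(P_g)$, and the diagonal pieces commute with them. If that bookkeeping fails, I would fall back on the order-independent formula
\begin{equation*}
    (\cO)_0 = \sum_{\{g_V\}} \Bigl(\prod_V \rho^{(V)}(P_{g_V})\Bigr)\, \cO\, \Bigl(\prod_V \rho^{(V)}(P_{g_V})\Bigr).
\end{equation*}
I would split each product $\prod_V$ into vertices touching $R_1$ and vertices not touching $R_1$, and insert idempotents exactly as in Step 2, all at once. This bypasses the iteration entirely.
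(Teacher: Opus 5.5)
Your proposal is correct and follows the paper's argument. You write the neutral projection as the product of the commuting vertex projections, note that vertices touching neither region act trivially because $\sum_g \rho(P_g)^2 = 1$, and use that $\w{\omega}^{(2)}_i$ commutes with the projectors at vertices touching $R_1$ (and $\w{\omega}^{(1)}_i$ with those at vertices touching $R_2$) to split the product. Your fallback order-independent formula is exactly what the paper's one-line splitting uses implicitly. Your observation that all the $\rho^{(V)}(P_g)$ are mutually commuting diagonal operators, so that commutation with a given vertex's projectors survives projections at other vertices, settles the iteration issue that the paper leaves unstated.
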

\begin{proof}
    By definition, 
    \begin{equation}
        \left(\w{\omega}^{(1)}_i \w{\omega}^{(2)}_i\right)_0 = \prod_V \sum_g \rho(P_g)^{(V,E)} \w{\omega}^{(1)}_i \w{\omega}^{(2)}_i \rho(P_g)^{(V,E)},
    \end{equation}
    for some arbitrary choice of ordering at each vertex. By the disjointness condition, we can decompose this product according to whether a vertex touches $R_1$, $R_2$, or neither. For a vertex meeting neither region, we have
    \begin{equation}
        \sum_g \rho(P_g)^{(V,E)}\rho(P_g)^{(V,E)} = \text{id}_V.
    \end{equation}
    The product therefore splits as
    \begin{equation}
        = \left(\prod_{V \in R_1} \sum_g \rho(P_g)^{(V,E)} \w{\omega}^{(1)}_i \rho(P_g)^{(V,E)}\right)  \left(\prod_{V \in R_2} \sum_g \rho(P_g)^{(V,E)} \w{\omega}^{(2)}_i \rho(P_g)^{(V,E)}\right),
    \end{equation}
    which is by definition the product of each neutral component,
    \begin{equation}
        = \left(\w{\omega}^{(1)}_i\right)_0 \left(\w{\omega}^{(2)}_i\right)_0.
    \end{equation}
    Note that in this argument, it was crucial that the operators were disjoint in the ``vertex-adjacent'' sense of the proposition statement, otherwise the product would not have factorized. 
\end{proof}
\noindent Since the operators $\w{\omega}$ and $\w{\omega}_0$ agree on the constraint subspace, this gives us a new lift of $\omega$, now decomposed into a product of constraint preserving operators,
\begin{equation}
    \w{\omega}_0 = \sum_{i} \left(\w{\omega}^{(1)}_i\right)_0 \left(\w{\omega}^{(2)}_i\right)_0, \quad \left(\w{\omega}^{(j)}_i\right)_0 \in \cA(R_j^+),
\end{equation}
where we have emphasized that the neutral component of an operator may have its support extended into a collared region. Altogether, this argument produces an inclusion
\begin{equation}\label{eq:weak_add_cond}
    \cA_0(R_1 \cup R_2) \subseteq \cA_0({R_1^+}) \vee \cA_0({R_2^+}).
\end{equation}
This gives the following corollary.
\begin{corollary} \label{cor:DA-cuspless}
    If $R_1$ and $R_2$ are both cuspless and disjoint as above, then additivity holds,
    \begin{equation}
        \cA_0(R_1 \cup R_2) = \cA_0({R_1}) \vee \cA_0({R_2}).
    \end{equation}
    In particular, this disjoint form of additivity holds on any trivalent lattice.
\end{corollary}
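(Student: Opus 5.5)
The plan is to combine the inclusion \eqref{eq:weak_add_cond} with the observation that the collar is trivial for cuspless regions. By construction, $R^+$ is obtained from $R$ by adding only the edges at cusped vertices of $R$. So if $R_j$ is cuspless, then $R_j^+ = R_j$. Substituting this into \eqref{eq:weak_add_cond} gives
\begin{equation*}
    \cA_0(R_1 \cup R_2) \subseteq \cA_0(R_1) \vee \cA_0(R_2).
\end{equation*}
The reverse inclusion was already observed above to follow from the nesting property \eqref{eq:nesting}, so the two together give equality.

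The one step that needs some care is to confirm that the support claim behind \eqref{eq:weak_add_cond} really gives $(\w{\omega}^{(j)}_i)_0 \in \cA(R_j)$ when $R_j$ is cuspless. I would check two points.

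\begin{enumerate}
    \item The neutral projection in the proof of Proposition~\ref{prop:neutral-DA-projection} may use any choice of distinguished edge $E$ at each vertex. This is because $\sum_g \rho^{(V,E)}(P_g)\,\cO\,\rho^{(V,E)}(P_g)$ agrees with $\Pi_0\cO\Pi_0$ on $\H_0$ regardless of $E$. At each cuspless vertex of $R_j$, I would choose $E$ as in section~\ref{sec:projection-magnetic}, namely so that the edges of $R_j$ at $V$ form a contiguous interval in the induced ordering. The support of $\w{\omega}^{(j)}_i$ then does not grow at that vertex.
    \item At vertices meeting neither region, the projection acts as the identity, and by disjointness no vertex meets both regions. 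Hence each factor $(\w{\omega}^{(j)}_i)_0$ is a constraint-preserving operator supported in $R_j$.
\end{enumerate}

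With these two points, $\Pi_0 (\w{\omega}^{(j)}_i)_0 \Pi_0 \in \cA_0(R_j)$. The lift $\w{\omega}_0$ then expresses $\omega$ as a finite sum of products of elements of $\cA_0(R_1)$ and $\cA_0(R_2)$. For this step I would use that $\Pi_0 A B \Pi_0 = (\Pi_0 A \Pi_0)(\Pi_0 B \Pi_0)$ for constraint-preserving $A$ and $B$. This places $\omega$ in the join.

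For the trivalent statement, I would argue that every region on a trivalent lattice is cuspless. At a trivalent vertex, any subset of the three edges is connected in the cyclic adjacency, since every pair of the three edges is adjacent. The disjoint additivity statement for cuspless regions therefore applies to all disjoint pairs.
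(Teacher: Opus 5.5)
Your proof is correct and follows the paper's route: the paper obtains the corollary directly from the collared inclusion \eqref{eq:weak_add_cond} by noting that $R_j^+ = R_j$ when $R_j$ is cuspless. The reverse inclusion comes from nesting, and the trivalent case holds because any set of edges at a trivalent vertex is cyclically contiguous. Your extra bookkeeping (the free choice of distinguished edge at each vertex, and $\Pi_0 AB\Pi_0 = (\Pi_0 A\Pi_0)(\Pi_0 B\Pi_0)$ for constraint-preserving $A,B$) only makes explicit steps the paper leaves implicit.
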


For regions with cusped vertices, this method of argument provides a weaker bound than that needed to prove additivity. It is therefore natural to ask if exact disjoint additivity can be violated for non-cuspless regions. In fact, we will now show that this is \textit{not} true --- for the magnetic constraint of a $G$ double model, disjoint additivity holds even for regions with cusps.

To do this, we will take advantage of a useful basis of operators. For a region $R$, we define the operators
\begin{equation}
    \cO(\vec{g},\vec{h}) = |\vec{g}\rangle \langle\vec{h}| = \prod_{E \in R} \ket{g_E}\bra{h_E},
\end{equation}
where $\vec{g}$ and $\vec{h}$ are vectors labeling group elements at each edge of $R$.
The operator $\cO(\vec{g}, \vec{h})$ is defined to act as the identity outside of $R$.
It is clear that these operators span $\A(R).$
We may therefore express a general constraint-preserving operator in $\A(R_1 \cup R_2)$ as
\begin{equation} \label{eq:additivity-lift}
    \w{\omega} = \sum_{\vec{g}, \vec{h}, \vec{k}, \vec{\ell}} C(\vec{g}, \vec{h}, \vec{k}, \vec{\ell})\cO(\vec{g},\vec{h})^{(1)}\cO(\vec{k},\vec{\ell})^{(2)},
\end{equation}
with $\cO(\vec{g},\vec{h})^{(i)}$ an element of $\cA(R_i)$ as above. The condition that $\w{\omega}$ is constraint-preserving can be written
\begin{equation}
    \w{\omega} \Pi_0 = \Pi_0 \w{\omega} \Pi_0 \Leftrightarrow (1 - \Pi_0)\w{\omega} \Pi_0 = 0.
\end{equation}
In our explicit decomposition of $\w{\omega},$ this condition is
\begin{equation} \label{eq:group-O-decomposition}
    \sum_{\vec{g}, \vec{h}, \vec{k}, \vec{\ell}} C(\vec{g}, \vec{h}, \vec{k}, \vec{\ell}) (1 - \Pi_0)\cO(\vec{g},\vec{h})^{(1)}\cO(\vec{k},\vec{\ell})^{(2)}\Pi_0 = 0.
\end{equation}
We will now argue that the terms in this sum are all mutually orthogonal with respect to the Hilbert-Schmidt inner product on $\H$.

To see this, we study the explicit Hilbert-Schmidt inner product of two terms in the sum:
\begin{equation}
    \tr\left(\Pi_0 \cO(\vec{g},\vec{h})^{(1)\dagger} \cO(\vec{k},\vec{\ell})^{(2) \dagger} (1 - \Pi_0)(1 - \Pi_0)\cO(\vec{p},\vec{q})^{(1)}\cO(\vec{r},\vec{s})^{(2)}\Pi_0\right).
\end{equation}
To simplify this expression, we use cyclicity of the trace, together with $\Pi_0^2 = \Pi_0,$ to rewrite the inner product as
\begin{equation}
    \tr\left(\cO(\vec{g},\vec{h})^{(1)\dagger} \cO(\vec{k},\vec{\ell})^{(2) \dagger} (1 - \Pi_0)\cO(\vec{p},\vec{q})^{(1)}\cO(\vec{r},\vec{s})^{(2)}\Pi_0\right).
\end{equation}
Next, we use the fact that the projection operator $\Pi_0$ acts diagonally in the group basis.
If one inserts an identity on the complement of $R_1 \cup R_2$ of the form $\sum_{\vec{x}} |\vec{x}\rangle\langle \vec{x}|^{(c)},$ then the inner product we are trying to compute is
\begin{equation}
    \sum_{\vec{x}} \tr\left(\cO(\vec{g},\vec{h})^{(1)\dagger} \cO(\vec{k},\vec{\ell})^{(2) \dagger} (1 - \Pi_0)\cO(\vec{p},\vec{q})^{(1)}\cO(\vec{r},\vec{s})^{(2)} |\vec{x}\rangle\langle \vec{x}|^{(c)} \Pi_0\right),
\end{equation}
and diagonality of $\Pi_0$ means that the total operator being traced over is proportional to the same operator with the factors of $(1-\Pi_0)$ and $\Pi_0$ removed:
\begin{align}
\begin{split}
    & \tr\left(\cO(\vec{g},\vec{h})^{(1)\dagger} \cO(\vec{k},\vec{\ell})^{(2) \dagger} (1 - \Pi_0)\cO(\vec{p},\vec{q})^{(1)}\cO(\vec{r},\vec{s})^{(2)} |\vec{x}\rangle\langle \vec{x}|^{(c)} \Pi_0\right) \\
        & \qquad \propto \tr\left(\cO(\vec{g},\vec{h})^{(1)\dagger} \cO(\vec{k},\vec{\ell})^{(2) \dagger} \cO(\vec{p},\vec{q})^{(1)}\cO(\vec{r},\vec{s})^{(2)} |\vec{x}\rangle\langle \vec{x}|^{(c)} \right).
\end{split}
\end{align}
Since we have $\cO(\vec{g}, \vec{h})^{(1) \dagger} = \cO(\vec{h}, \vec{g})^{(1)},$ and we clearly have $\mathcal{O}(\vec{h}, \vec{g})^{(1)} \mathcal{O}(\vec{p}, \vec{q})^{(1)} = \delta_{\vec{g}, \vec{p}} \mathcal{O}(\vec{h}, \vec{q})^{(1)},$ this expression simplifies to
\begin{equation}
    \delta_{\vec{g}, \vec{p}} \delta_{\vec{\ell}, \vec{r}} \tr\left(\cO(\vec{h},\vec{q})^{(1)} \cO(\vec{k}, \vec{s})^{(2)} |\vec{x}\rangle\langle \vec{x}|^{(c)} \right)
        \propto \delta_{\vec{g}, \vec{p}} \delta_{\vec{\ell}, \vec{r}} \delta_{\vec{h}, \vec{q}} \delta_{\vec{k}, \vec{s}}.
\end{equation}
Combining all of these calculations, we have established the claim that the individual terms in equation \eqref{eq:group-O-decomposition} are mutually Hilbert-Schmidt orthogonal.

Since the constraint-preserving condition is that the full sum in equation \eqref{eq:group-O-decomposition} vanishes, we may conclude that for a constraint-preserving operator, each individual term in the sum must vanish; therefore, we have
\begin{equation}
    C(\vec{g},\vec{h},\vec{k},\vec{\ell})(1 - \Pi_0)\cO(\vec{g},\vec{h})^{(1)}\cO(\vec{k},\vec{\ell})^{(2)}\Pi_0 = 0.
\end{equation}
That is, either the coefficient vanishes or the basis element itself is constraint-preserving.
In the latter case, it is clear that when acting on the constrained subspace, the operator $\cO(\vec{g},\vec{h})^{(i)}$ can only produce a state violating the constraint at vertices that contain an edge in $R_i$. Therefore, if there is no vertex that has support in both $R_1$ and $R_2$, then the operator can be constraint preserving if and only if each $\cO(\vec{g},\vec{h})^{(i)}$ is constraint preserving individually.
The lift of $\omega \in \A_0(R_1 \cup R_2)$ provided by equation \eqref{eq:additivity-lift} is therefore seen explicitly to be expressible as a sum of constraint-preserving operators in $R_1$ multiplied by constraint-preserving operators in $R_2$; in other words, we have $\hat{\omega} \in \A(R_1 \cup R_2)$ and $\hat{\omega}$ constraint-preserving, hence $\omega \in \A_0(R_1) \vee \A_0(R_2).$
This gives the inclusion $\A_0(R_1 \cup R_2) \subseteq \A_0(R_1) \vee \A_0(R_2)$, which gives disjoint additivity:
\begin{proposition}
    For the magnetic constraint of a $G$ double model, if $R_1$ and $R_2$ are disjoint in the sense of vertex adjacency, then
    \begin{equation}
        \cA_0(R_1 \cup R_2) = \cA_0(R_1) \vee \cA_0(R_2).
    \end{equation}
\end{proposition}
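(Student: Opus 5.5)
The proof has two inclusions. The inclusion $\A_0(R_1)\vee\A_0(R_2)\subseteq\A_0(R_1\cup R_2)$ is immediate from nesting. The work is the reverse inclusion, and I will reduce it to the diagonal structure of $\Pi_0$ in the group basis. Take $\omega\in\A_0(R_1\cup R_2)$ with a constraint-preserving lift $\w{\omega}\in\A(R_1\cup R_2)$. Expand $\w{\omega}$ in the product basis $\cO(\vec g,\vec h)^{(1)}\cO(\vec k,\vec\ell)^{(2)}$ as in \eqref{eq:additivity-lift}. The constraint-preservation condition becomes the vanishing of the sum \eqref{eq:group-O-decomposition}.

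The first step is to show that the summands $C\,(1-\Pi_0)\cO^{(1)}\cO^{(2)}\Pi_0$ are mutually Hilbert--Schmidt orthogonal. I will insert a resolution of the identity $\sum_{\vec x}|\vec x\rangle\langle\vec x|^{(c)}$ on the complement of $R_1\cup R_2$. Because $\Pi_0$ and $1-\Pi_0$ are diagonal in the group basis, each contributes only a scalar $0$ or $1$ on each basis configuration. The trace of a product of two summands is therefore a sum of traces of products of matrix units, and these vanish unless all four labels coincide. A family of mutually orthogonal vectors summing to zero must be zero termwise. Hence every basis operator with $C\neq0$ is itself constraint-preserving.

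The second step, which is the main point, is to show that a constraint-preserving basis operator $\cO(\vec g,\vec h)^{(1)}\cO(\vec k,\vec\ell)^{(2)}$ factorizes into constraint-preserving pieces. The key observation is that $\Pi_0=\prod_V\Pi_V$, and each $\Pi_V$ depends only on the edges at $V$. By the disjointness hypothesis, the vertices split into three classes: those touching $R_1$, those touching $R_2$, and those touching neither. On a basis state, the image under $\cO^{(1)}\cO^{(2)}$ changes only edges in $R_1\cup R_2$. The flux at a vertex touching $R_1$ is therefore affected only by $\cO^{(1)}$, and similarly for $R_2$. Vertices touching neither region are unchanged.

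I then need to argue that $\cO^{(1)}$ alone is constraint-preserving. This is the only potentially delicate point. If $\cO^{(1)}$ has nonzero action on some constrained configuration, I want to apply $\cO^{(2)}$ to a suitable configuration in order to test the $R_1$-vertices. If $\cO^{(2)}$ annihilates every constrained state compatible with the image of $\cO^{(1)}$, the product is trivially preserving, and its restriction to $\H_0$ vanishes, so it can be discarded. To handle this cleanly, I will check directly, for a constrained configuration $|\psi\rangle$ with $\langle \vec h|\psi\rangle$ nonzero on $R_1$, that the $R_1$-vertex constraints of $\cO^{(1)}|\psi\rangle$ depend only on the $R_1$ data and the untouched edges. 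If $\cO^{(1)}|\psi\rangle$ violates some $\Pi_V$ with $V$ touching $R_1$, then applying $\cO^{(2)}$ leaves that violation intact whenever it acts nontrivially.

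The resolution is to simply discard the basis terms whose product restricts to zero on $\H_0$, since they do not change $\omega$. For every remaining term, the individual factors are constraint-preserving. Then $\w{\omega}$, restricted to $\H_0$, equals a sum of products $\Pi_0\cO^{(1)}\Pi_0\cdot\Pi_0\cO^{(2)}\Pi_0$ with each factor in $\A_0(R_i)$. This gives $\omega\in\A_0(R_1)\vee\A_0(R_2)$.
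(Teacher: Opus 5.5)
Your route is the same as the paper's: expand a constraint-preserving lift in the product basis, use Hilbert--Schmidt orthogonality to make every term with $C\neq 0$ constraint preserving, then split each term. The first two steps are sound. The gap is the last step, ``for every remaining term, the individual factors are constraint-preserving.'' Your argument only shows that $\cO^{(1)}$ creates no violation on constrained configurations whose $R_2$-data equal $\vec\ell$. Constraint preservation of $\cO^{(1)}$ alone must hold on \emph{every} constrained configuration with $R_1$-data $\vec h$, including those with other $R_2$-data. On those configurations $\cO^{(2)}$ vanishes, so the product tells you nothing. The two conditions genuinely differ. A complement edge can join a vertex touching $R_1$ to a vertex touching $R_2$, so fixing the $R_2$-data can restrict the values seen at $R_1$-vertices. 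At a cusped vertex of a non-abelian $G$, whether an $R_1$-transition is admissible depends on exactly those values (the condition $k(\ell'\ell^{-1})k^{-1}=g'g^{-1}$ of section~\ref{sec:counterexample}). The paper's proof asserts the same thing (the term is constraint preserving iff each factor is) without justification, so you cannot borrow the step from there.

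In fact this step fails, and so does the proposition for cusped regions. Take $G$ with trivial center (e.g.\ $S_3$) on an $L\times L$ square-lattice torus, $L\geq 3$, with edges oriented right and up so that every vertex looks like figure~\ref{fig:four-valent-vertex}. Let $R_1$ be the horizontal edges of row $0$; it is cusped at every one of its vertices. Let $R_2$ be the horizontal edges of row $1$ together with the vertical edges from row $1$ to row $2$. No vertex touches both regions. For $z\neq e$ consider
\begin{equation*}
    \w{\omega}=|z\cdots z\rangle\langle e\cdots e|_{R_1}\otimes|e\cdots e\rangle\langle e\cdots e|_{R_2}.
\end{equation*}
On constrained configurations in its support, the row-$1$ constraints force the vertical edges between rows $0$ and $1$ to equal $e$, and hence also those just below row $0$. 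The new row-$0$ constraints then read $z\,e=e\,z$ and hold. So $\w{\omega}$ is constraint preserving, and $0\neq\Pi_0\w{\omega}\Pi_0\in\A_0(R_1\cup R_2)$, since it maps the all-$e$ configuration to a constrained one.

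On the other hand, for any row-$0$ data you can build a constrained configuration with arbitrary values on the vertical edges just below row $0$. Take the row-$1$ horizontal labels inverse to the row-$0$ ones, all other horizontal labels $e$, and each column of vertical labels constant except between rows $0$ and $1$. The computation of section~\ref{sec:counterexample}, with trivial center, then shows that every constraint-preserving element of $\A(R_1)$ is diagonal in the row-$0$ labels. Hence $\A_0(R_1)\vee\A_0(R_2)$ commutes with the projector onto row-$0$ data $e\cdots e$, while $\Pi_0\w{\omega}\Pi_0$ does not.

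What does survive is the collared inclusion $\A_0(R_1\cup R_2)\subseteq\A_0(R_1^+)\vee\A_0(R_2^+)$ from the neutral-projection argument. Your final step is also valid when $R_1$ and $R_2$ are cuspless, or when $G$ is abelian. In that case the region's edges at each vertex form a consecutive block whose ordered product must be preserved, independently of the complement data.
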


\section{Generalizing to Hopf algebras} \label{sec:hopf}
So far, we have studied operator algebras in the magnetically constrained subspace of a quantum double model based on a finite group $G$. In our analyses of both Haag duality and additivity, a key role was played by the representation theory of the dual group algebra $\C[G]^\vee$. The representation theory of the dual group algebra finds its natural home in the theory of \textit{finite-dimensional $C^*$-Hopf algebras}. These are algebraic structures that generalize the structure of a finite group, being closely related to so-called finite quantum groups  \cite{Drinfeld:1986in}. They have the characterizing property that their representation theory behaves like that of a finite group: their representations have tensor products, one dimensional singlet representations, and charge conjugates. For the reader unfamiliar with Hopf algebras, we provide a review of basic definitions and structure theory in appendix \ref{app:hopf}.

It was noted already in the original paper \cite{Kitaev:double} that quantum double models admit an immediate generalization where the finite group is replaced by a finite-dimensional C$^*$-Hopf algebra.\footnote{As all of our Hopf algebras will be finite-dimensional and C$^*$; the term Hopf algebra will always mean such Hopf algebras unless otherwise noted.} Such models are based on a tensor product over edges of a lattice, with the Hilbert space of a single edge chosen to be a finite-dimensional C$^*$-Hopf algebra, $\H_e = H$, with the inner product determined in a natural way by the structure of $H$.
Vertex (magnetic) and plaquette (electric) constraints are then defined using representation-theoretic constructions and are used to construct a commuting projector Hamiltonian.
These constructions are reviewed explicitly in section \ref{subsec:hopf_double}. 

Hopf algebra double models were first explicitly constructed in \cite{Buerschaper_2013, Buerschaper:2010yf}. Such models live in the same universality class as certain string-net models, and the close relationship to both string-net models and Turaev-Viro theories was developed in \cite{Buerschaper:2010yf, balsam2012kitaevslatticemodelturaevviro}. As explained in \cite{Buerschaper:2010yf}, Hopf algebra double models behave as ``extended string-nets,'' i.e., as string-net models with additional degrees of freedom at the vertices. Much in the same way as in a double model built on a group, the electric and magnetic constraints of a Hopf algebra double model can be thought of as Gauss laws for non-invertible gauge symmetries. Hopf algebra double models might therefore be thought of as a kind of non-invertible lattice gauge theory. This point is implicit already in \cite{Buerschaper:2010yf}.

In this section we generalize our analysis of algebraic locality to the setting of a Hopf algebra double model.
Specifically, we prove that under the magnetic constraint of such a model, weak Haag duality is satisfied on general lattices, with exact Haag duality satisfied for cuspless regions.
Similarly, we prove that the models satisfy a weak form of disjoint additivity on general lattices, and exact disjoint additivity for cuspless regions.\footnote{A duality between the double model based on $H$ and the double model based on the dual $H^{\vee}$, discussed in \cite{Buerschaper:2010yf}, implies that all of the same results hold if the electric constraint is imposed in place of the magnetic one, provided that ``cusps'' and ``disjointness'' are defined relative to the dual graph.} 

\subsection{Hopf algebra double models}\label{subsec:hopf_double}
In this section we review the construction of a Hopf algebra double model \cite{Buerschaper_2013}, following more closely the construction in \cite{Buerschaper:2010yf}. We provide definitions and necessary background on Hopf algebras in appendix \ref{app:hopf}. As remarked above, the term Hopf algebra will always be taken to mean finite-dimensional C$^*$-Hopf algebra.

A Hopf algebra quantum double model is defined on an oriented two-dimensional lattice embedded in a closed, oriented $2$-manifold. The edge Hilbert spaces are associated to a finite-dimensional C$^*$-Hopf algebra,
\begin{equation}
    \cH_e = H,
\end{equation}
and the total Hilbert space is the tensor product over edges.
Each edge Hilbert space carries a natural left and right action by $H$,
\begin{equation}
    L_\alpha \ket{\beta} = \ket{\alpha \beta}, \quad R_\alpha \ket{\beta} = \ket{\beta S(\alpha)}, \quad \alpha,\beta \in H,
\end{equation}
and a natural left and right action by $H^\vee$,
\begin{equation}
    L_\lambda \ket{\alpha} = \lambda(\alpha^{(2)})|\alpha^{(1)}\rangle, \quad R_\lambda \ket{\alpha} = S(\lambda)(\alpha^{(1)})|\alpha^{(2)}\rangle, \quad \lambda \in H^\vee.
\end{equation}
In the above, $H^\vee$ denotes the dual Hopf algebra. In defining the $H^{\vee}$-action we have employed Sweedler notation, which is a useful convention reviewed in appendix \ref{app:hopf_def}. Note that we have also defined the ``right actions'' with an antipode map so as to obtain a left action in the labeling variable.
This is standard \cite{Kitaev:double}, and is convenient when defining vertex and plaquette constraints.

In general there is not a canonical basis for $H$. Instead, $H$ has two natural decompositions.
One decomposition is
\begin{equation}\label{eq:hopf_decomp}
    H \cong \bigoplus_a \End(V_a) \cong \sum_a V_a^\vee \otimes V_a,
\end{equation}
where the sum is over irreducible representations of the dual Hopf algebra $H^\vee$.
The other decomposition is
\begin{equation}\label{eq:hopf_decomp_wedderburn}
    H \cong \bigoplus_{x} \End(W_{x}) \cong \sum_{x} W_{x}^\vee \otimes W_{x}, 
\end{equation}
where now the sum is over irreducible representations of the Hopf algebra $H$ itself.
Both decompositions are representation-theoretic in nature. The first follows from the natural left and right action of $H^\vee$ on $H$, and is called the Peter-Weyl decomposition.
The second follows from the left and right action of $H$ on itself and is called the Wedderburn-Artin decomposition. 

For the familiar case of a double model constructed from a group algebra $\C[G]$, the representations of $\C[G]^{\vee}$ are one-dimensional and labeled by $g \in G$, so the decomposition of equation \eqref{eq:hopf_decomp} is just the decomposition of $\C[G]$ into the group basis.
This is a special case in which $H$ does have a canonical basis.
In general, one must choose an orthonormal basis on each irreducible representation of $H^{\vee},$ and from this one obtains a basis on $H$ expressed as 
\begin{equation}
    Z_{ij}^a \in V_a^{\vee} \otimes V_a \subseteq H,
\end{equation}
with are uniquely characterized by the property\footnote{In this formula, $\rho(\lambda)^a_{ij}$ are the matrix elements of the representative of $\lambda$ in $V_a$. The fact that this formula fully characterizes $Z_{ij}^{a}$ follows from the canonical isomorphism $(H^\vee)^\vee \cong H$.}
\begin{equation}
    \lambda(Z_{ij}^a) = \rho(\lambda)_{ij}^a, \quad \lambda \in H^\vee.
\end{equation}
In this basis, the left and right actions of $H^\vee$ take a particularly simple form\footnote{These expressions can equally well be written as
\begin{equation*}
    L_\lambda \ket{Z_{ij}^a} = \rho(S(\lambda))_{jk}^{\overline{a}}\ket{Z_{ik}^a}, \quad R_\lambda \ket{Z_{ij}^a} = \rho(S(\lambda))_{ik}^{a}\ket{Z_{kj}^a}.
\end{equation*}
}
\begin{equation}
    L_\lambda \ket{Z_{ij}^a} = \rho(\lambda)_{kj}^a\ket{Z_{ik}^a}, \quad R_\lambda \ket{Z_{ij}^a} = \rho(\lambda)_{ki}^{\bar{a}}\ket{Z_{kj}^a}.
\end{equation}
We review this fact in appendix \ref{app:peter-weyl}. The subscripts therefore behave as basis elements in the given irreducible representation and its dual. Finally, in this basis, the natural inner product is defined as
\begin{equation}
    \langle Z_{jk}^a | Z_{\ell m}^b\rangle := \frac{\delta^{ab}}{d_a} \delta_{j\ell}\delta_{k m}.
\end{equation}
We review the representation-theoretic construction of this inner product in appendix \ref{app:haar}.

The basic structure we have just reviewed has a useful graphical presentation, in which a basis element $Z^a_{jk}$ is drawn as a line labeled by the representation, with two nodes labeled by elements of the representation and its dual:
\begin{equation} \label{eq:hopf-decomp}
    \includegraphics[width=4.8cm,valign=m]{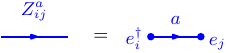}~.
\end{equation}
Here $e_j$ is an orthonormal basis element of $V_a$ and $e^\dagger_i$ is a dual basis element of $V_a^\vee$. For example, a model on a hexagonal lattice can be graphically presented as
\begin{equation}
    \includegraphics[width=2.4cm,valign=m]{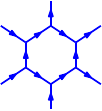} \quad = \quad \includegraphics[width=4.4cm,valign=m]{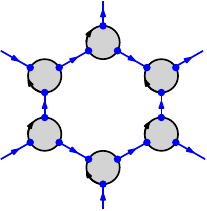}~,
\end{equation}
where the left figure corresponds to a labeling of edges as in the left side of equation \eqref{eq:hopf-decomp}, and the right figure corresponds to a labeling as in the right side of equation \eqref{eq:hopf-decomp}.
In other words, on the right, we have ``blown up'' the vertices into disks so as to allow every edge to carry a basis degree of freedom at each of its incident vertices \cite{Buerschaper_2009}. 
For future convenience, the disks in this figure are given an orientation that is determined by the orientation of the two-manifold, and by convention we will choose the orientation of the disks to be clockwise. 

Having discussed the state space of the model in detail, we may now proceed to the electric and magnetic constraints.
These constraints are related to the \textit{Haar integral} and \textit{Haar measure} on $H$, which are most concretely defined in terms of character elements within the Hopf algebra and its dual:
\begin{equation}
    \chi_a = \sum_{j}Z^a_{jj} \in H, \quad \chi_x = \sum_{j}Z^x_{jj} \in H^\vee.
\end{equation}
The characters have the special property that they furnish representations of the fusion rings of $\Rep(H^\vee)$ and $\Rep(H)$ respectively:
\begin{equation}
    \chi_a\chi_b = \bigoplus_c N_{ab}^c \chi_c, \quad \chi_x\chi_y = \bigoplus_z M_{xy}^z \chi_z.
\end{equation}
For this reason, by analogy to the realization of a fusion category by defect operators in a continuum theory, we will sometimes refer to the character elements as defects within the Hopf algebras.
In the spirit of the end of section \ref{subsec:gauge_theory_rep}, this means that an $H$-action is associated to a $\Rep(H^\vee)$ symmetry, and similarly for the $H^{\vee}$ action and a $\Rep(H)$ symmetry.

In terms of the defects, the Haar measure and Haar integral are defined as the elements
\begin{equation}\label{eq:haar_int}
    h = \frac{1}{D^2}\sum_{a} \dim(V_a) \chi_a, \quad h \in H,
\end{equation}
and
\begin{equation}\label{eq:haar_measure}
    \mu = \frac{1}{D^2} \sum_{x} \dim(W_x) \chi_{x}, \quad \mu \in H^\vee,
\end{equation}
where to simplify notation we have defined\footnote{These equalities follow from the two decompositions \eqref{eq:hopf_decomp} and \eqref{eq:hopf_decomp_wedderburn}.}
\begin{equation}
    D^2 = |H| = \sum_a \dim(V_a)^2 = \sum_{x} \dim(W_x)^2.
\end{equation}
Both the Haar integral and Haar measure have a simple graphical action on plaquettes and vertices as \cite{Buerschaper_2009}\footnote{These figures are given here primarily so that the familiar reader can see the close connection to the plaquette operator in a string-net model \cite{Lin_2021, Buerschaper_2009,Buerschaper:2010yf}.}
\begin{equation}\label{eq:int_actions}
    \frac{1}{D^2}\sum_a \dim(V_a) ~\includegraphics[width=4cm,valign=m]{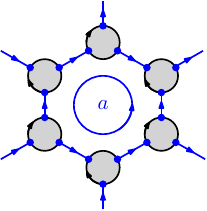}, \quad\; \frac{1}{D^2}\sum_{x} \dim(W_x) ~\includegraphics[width=4.3cm,valign=m]{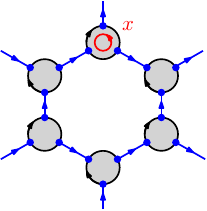}.
\end{equation}

There are a few necessary remarks to make about the Haar integral and Haar measure. First, they are \textit{singlet projectors}. In a Hopf algebra, the singlet representation is defined in terms of a distinguished algebra homomorphism called the \textit{counit},
\begin{equation}
    \epsilon: H \to \C, \qquad {\epsilon}_\vee: H^\vee \to \C,
\end{equation}
and the singlet is defined as the one-dimensional representation satisfying
\begin{equation}
    \alpha\ket{\Omega} = \epsilon(\alpha)\ket{\Omega}.
\end{equation}
Both the Haar integral and Haar measure have the properties
\begin{equation}
    h^2 = h, \quad \mu^2 = \mu,
\end{equation}
and
\begin{equation}
    \alpha h = \epsilon(\alpha) h, \quad \lambda \mu = {\epsilon}_\vee(\lambda) \mu, \quad \alpha \in H, \; \lambda \in H^\vee.
\end{equation}
This last property can be understood in many ways, including via representation theory or by using techniques of topological field theory. We provide one such discussion in appendix \ref{app:haar}. The important point is that, in the language of section \ref{subsec:gauge_theory_rep}, the plaquette constraint functions as a \textit{Gauss law} (electric constraint) for a local $\Rep(H^\vee)$ symmetry, while the vertex constraint functions as a flux condition (magnetic constraint) that is enforced by a local $\Rep(H)$ symmetry.
These roles are swapped in an electric-magnetic duality that is provided by exchanging the lattice with its dual and exchanging $H$ with $H^\vee$ \cite{Buerschaper:2010yf}.

Above, we defined the action of a character at a vertex or plaquette.
To define the action of a general element of $H$ or $H^{\vee}$, one must make a choice of ordering at vertices and plaquettes.
This is because the action on a tensor product of edges is defined by the coproduct,
\begin{equation}
    \lambda \cdot (\ket{\alpha_1} \otimes \cdots \otimes \ket{\alpha_n}) =  \lambda^{(1)}\ket{\alpha_1} \otimes \cdots \otimes \lambda^{(n)}\ket{\alpha_n}, \quad \lambda \in H^\vee,
\end{equation}
and in general the iterated coproduct of an element is \textit{not} permutation-invariant. By fixing a choice of orientation on the spatial surface, an ordering of edges around a vertex or plaquette is equivalent to specifying an initial choice of edge; the rest of the ordering is determined by going clockwise around vertices or counterclockwise around plaquettes.

Based on the above considerations, we may denote the action of $H^{\vee}$ on a vertex by\footnote{As previously remarked, this choice is equivalent to a choice of site in \cite{Kitaev:double}.} $\rho^{(V,E)}(\lambda)$; similar notation can be used for the action of $H$ on plaquettes, though we will not make use of this below.
Note that the singlet projectors of both $H^{\vee}$ on vertices and $H$ on plaquettes are independent of the choice of edge; this is because the singlet projections can be written in terms of the characters as above.
Another way of seeing this is that $\mu$ and $h$ are \textit{cocommutative}, meaning that they satisfy
\begin{equation}\label{eq:cyclic_haar_measure}
    \mu^{(\sigma(1))} \otimes \cdots \otimes \mu^{(\sigma(n))} = \mu^{(1)} \otimes \cdots \otimes \mu^{(n)}, \quad \sigma \in S_n,
\end{equation}
for any cyclic permutation $\sigma$.\footnote{This is shown to be equivalent to a more standard definition of cocommutative in appendix \ref{app:hopf_def}.}

There remains one final ambiguity in defining the action of $H^{\vee}$ or $H$ at a vertex or a plaquette, which is a convention for left versus right actions.
At a vertex, we will follow the same convention as in section \ref{subsec:gauge_theory_rep}: outgoing edges receive a left action and ingoing edges receive a right action. At a plaquette, we will follow a convention specified in section \ref{subsec:qdouble}: counterclockwise-oriented edges receive the left action, while clockwise-oriented edges receive the right action.\footnote{Note that this is the opposite convention as compared to \cite{Kitaev:double}. The conventions are exchanged by acting with antipodes of elements.} With this choice of convention, defects are naturally oriented \textit{counterclockwise}, as in equation \eqref{eq:int_actions}. With all of this said, the Hamiltonian of the model is built from the commuting projectors \cite{Buerschaper:2010yf, Buerschaper_2013}
\begin{equation}
    A_V := \rho_{V}(\mu), \quad B_P := \rho_P(h).
\end{equation}

This review of the model shows that, from a representation-theoretic point of view, we find ourselves in the same setting as section \ref{sec:cuspless-magnetic}. By placing our prior arguments in their proper representation-theoretic home, we now generalize our earlier results to this larger class of models.

\subsection{Non-invertible Gauss laws and locality}\label{subsec:general_locality} 
Having reviewed the construction of Hopf algebra double models, we now move to studying local operator algebras in the magnetically constrained subspace. The total magnetic constraint operator is defined as the product of the commuting projectors,
\begin{equation}
    \Pi_0 = \prod_V \rho_V(\mu).
\end{equation}
As just explained, this operator projects each vertex onto its singlet subspace under the local $H^\vee$-action. Conceptually, following the discussion in section \ref{subsec:gauge_theory_rep}, this functions as the Gauss law constraint for a $\Rep(H)$ non-invertible gauge symmetry. 

The constrained algebras that we will study are defined as
\begin{equation} \tag{\ref{eq:constrained-algebras}}
    \A_{0}(R)
        \equiv \{ \Pi_{0} a \Pi_{0} \text{ such that } a \in \A(R) \text{ and } a \Pi_{0} = \Pi_{0} a \Pi_{0}\},
\end{equation}
where $\cA(R)$ is a local algebra in the unconstrained, tensor product Hilbert space. As mentioned in section \ref{sec:background-algebraic}, $\A_0(R)$ can be thought of as a subspace of the operator algebra on the constrained subspace,
\begin{equation}
    \cA_0(R) \subseteq \B(\cH_0), \quad \Pi_0(\cH) = \cH_0.
\end{equation}
Under this inclusion, $\cA_0(R)$ is the collection of operators acting on $\H_0$ that have a \textit{lift} to a constraint-preserving operator, acting on the unconstrained Hilbert space, with support in $R$. 

As in section \ref{sec:cuspless-magnetic}, the key step in studying both Haag duality and additivity for this local operator algebra will be proving the relation
\begin{equation}\tag{\ref{eq:compression_gen}}
    \Pi_0 \cA(R) \Pi_0 \subseteq \cA_0(R^+),
\end{equation}
where the ``collared'' region $R^+$ was defined in section \ref{sec:cusps}.
To do so, we will essentially repeat the analysis of section \ref{sec:cuspless-magnetic} in more general, Hopf-algebraic terms.
The most important point in that analysis was that any local operator may be decomposed into irreducible representations of a local $\Rep(G)$ symmetry, with explicit formulas for the neutral component of this decomposition.
The same basic structure will be seen to hold in the general, Hopf-algebraic setting.

To explain this point, consider first a generic Hilbert space $\cH$ carrying a unitary representation\footnote{A unitary representation is a representation in which the adjoint satisfies a certain compatibility condition with the Hopf algebra, as reviewed in appendix \ref{app:hopf_def}.} of a finite-dimensional C$^*$-Hopf algebra $H$. Operators acting on $\H$ inherit a natural conjugation action of $H$ via the formula\footnote{By comparing to \eqref{eq:grp_hopf_data}, one sees that this indeed correctly reduces to the conjugation action in the case of a group algebra.}
\begin{equation}
    \alpha \cdot \cO = \alpha^{(1)}\cO S(\alpha^{(2)}), \quad \cO \in \B(\cH), \; \alpha \in H.
\end{equation}
Because the operator algebra is an $H$-representation, it therefore decomposes into irreducible representations of $H$\footnote{This direct sum decomposition follows because a finite-dimensional C$^*$ algebra is semisimple.}
\begin{equation}
    \B(\cH) = \bigoplus_x (\B(\cH))_x,
\end{equation}
and so individual operators decompose as
\begin{equation}
    \cO = \sum_x \cO_x,
\end{equation}
where $x$ labels irreducible representations of $H$. We will refer to the component of the operator corresponding to the singlet representation as its neutral component. By definition, the Hopf algebra acts on a neutral operator as
\begin{equation}
    \alpha \cdot \cO_0 = \epsilon(\alpha) \cO_0.
\end{equation}
The neutral component of an operator has a concrete expression in terms of the Haar integral (singlet projector),
\begin{equation} \label{eq:neutral-comp-abstract}
    \cO_0 = h^{(1)} \cO S(h^{(2)}).
\end{equation}
This formula for $\mathcal{O}_0$, though abstract, is sufficient for our purposes. A more concrete form, making closer contact to the more familiar projection formula of a group is can be obtained by using the concrete form of the Haar integral \eqref{eq:haar_int}: 
\begin{equation}\label{eq:neutral_comp_explicit}
    \cO_0 = \frac{1}{|H|}\sum_{a} d_a Z_{ij}^a \cO (Z^{a}_{ij})^*,
\end{equation}
with this sum now over irreducible representations of $H^\vee$. This formula, which is in clear analogy to the neutral projection for a group, is derived in appendix \ref{app:haar}.

The neutral component of an operator satisfies the important property
\begin{equation} \label{eq:hopf-neutral-neutral}
    \cO_0 h = h \cO h.
\end{equation}
This is the same condition that we demonstrated for the special case of a group algebra and dual group algebra in section \ref{sec:cuspless-magnetic}. This property follows from direct computation, using the projection formula involving the Haar integral:
\begin{equation}
    \cO_0 h = h^{(1)} \cO S(h^{(2)})h = h^{(1)}\epsilon(S(h^{(2)})) \cO h = h^{(1)}\epsilon(h^{(2)}) \cO h = h\cO h.
\end{equation}
In the above computation, we have used several identities concerning antipodes and counits that are reviewed in appendix \ref{app:hopf_def}. Equations \eqref{eq:neutral-comp-abstract} and \eqref{eq:hopf-neutral-neutral} will be our primary tools going forward.

We now return to the specific case of a double model built on a Hopf algebra $H$. Applying the above discussion to the action of $H^{\vee}$ at each vertex, the neutral component of an operator $\cO$ under a vertex action takes the form\footnote{Since it is the \textit{dual} Hopf algebra that acts on vertices in the double model, the Haar measure is the correct singlet projector to use.}
\begin{equation}
    \cO_0^{(V,E)} = \rho^{(V,E)}(\mu^{(1)}) \cO \rho^{(V,E)}(S(\mu^{(2)})).
\end{equation}
Applying such neutral projections at each vertex, we obtain an operator that agrees with $\Pi_0 \cO \Pi_0$ on the constraint subspace. We would now like to study its support, which we will do in two steps.

First. At a vertex having $n$ edges, the operator $\cO^{(V,E)}_0$ has the explicit form
\begin{equation}
    \cO^{(V,E)}_0 := \left(\mu^{(1)} \otimes \cdots \otimes \mu^{(n)}\right) \cO \left(S(\mu^{(2n)}) \otimes \cdots \otimes S(\mu^{(n+1)})\right).
\end{equation}
Let us now suppose that the operator $\cO$ has no support on the final edge in this ordering. Then the component $\mu^{(n)}$ can be commuted through $\cO$ and the expression simplified to
\begin{align}
    &= \left(\mu^{(1)} \otimes \cdots \otimes \mu^{(n-1)} \otimes 1\right) \cO \left(S(\mu^{(2n)}) \otimes \cdots \otimes S(\mu^{(n+2)})\otimes \mu^{(n)}S(\mu^{(n+1)})\right), \\
    &= \left(\mu^{(1)} \otimes \cdots \otimes \mu^{(n-1)} \otimes 1\right) \cO \left(S(\mu^{(2n)}) \otimes \cdots \otimes S(\mu^{(n+2)})\otimes \epsilon(\mu^{(n)})\right), \\
    \intertext{which has used the identity $\mu^{(n)}S(\mu^{(n+1)}) = \epsilon(\mu^{(n)})$,}
    &= \left(\mu^{(1)} \otimes \cdots \otimes \mu^{(n-1)}\epsilon(\mu^{(n)}) \otimes 1\right) \cO \left(S(\mu^{(2n)}) \otimes \cdots \otimes S(\mu^{(n+2)})\otimes 1\right), \\
    &= \left(\mu^{(1)} \otimes \cdots \otimes \mu^{(n-1)} \otimes 1\right) \cO \left(S(\mu^{(2(n-1))}) \otimes \cdots \otimes S(\mu^{(n)})\otimes 1\right),
\end{align}
which used the identity $\mu^{(n-1)}\epsilon(\mu^{(n)}) = \mu^{(n-1)}$ \textit{and} reindexed the total coproduct. We therefore have the following lemma.
\begin{lemma}
    If $\cO$ does not have support on the final edge, then neither does $\cO_0^{(V,E)}$.
\end{lemma}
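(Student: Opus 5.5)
The plan is to turn the chain of equalities displayed just before the lemma into a careful argument. The only inputs are coassociativity of the iterated coproduct, the antipode axiom $\mu^{(1)}S(\mu^{(2)}) = \epsilon_\vee(\mu)1$, and the counit axiom $\mu^{(1)}\epsilon_\vee(\mu^{(2)}) = \mu$. The result should be an expression for $\cO_0^{(V,E)}$ in which every factor acting on the final edge $E_n$ is the identity.

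First fix notation. Write $\rho_j$ for the action of $H^\vee$ on the $j$-th edge in the $(V,E)$ ordering; this is the left or right action according to the orientation convention of section \ref{subsec:hopf_double}. In either case $\rho_j$ is an algebra homomorphism $H^\vee \to \B(\H_{E_j})$, which is the reason the right actions were defined using antipodes. The tensor-product representation is then $\rho^{(V,E)}(\lambda) = \rho_1(\lambda^{(1)})\cdots\rho_n(\lambda^{(n)})$. Taking $\mu^{(1)}\otimes\mu^{(2)}$ in the neutral projection formula and expanding each tensor factor by coassociativity gives
\begin{equation*}
\cO_0^{(V,E)} = \Big(\prod_{j=1}^n \rho_j(\mu^{(j)})\Big)\, \cO\, \Big(\prod_{j=1}^n \rho_j\big(S(\mu^{(2n+1-j)})\big)\Big).
\end{equation*}
Here one checks that $\Delta$ applied to $S(x)$ reverses the order of the tensor factors, since $S$ is an anti-coalgebra map. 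This is what produces the indexing $S(\mu^{(2n)})\otimes\cdots\otimes S(\mu^{(n+1)})$.

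Now use that $\cO$ acts trivially on $E_n$. This means $\cO$ commutes with $\rho_n(\mu^{(n)})$, and $\rho_n(\mu^{(n)})$ commutes with all $\rho_j$ for $j\neq n$, because they act on different tensor factors. Move $\rho_n(\mu^{(n)})$ rightward until it sits next to $\rho_n(S(\mu^{(n+1)}))$. The homomorphism property combines the two into $\rho_n(\mu^{(n)}S(\mu^{(n+1)}))$. The two adjacent legs $\mu^{(n)}\otimes\mu^{(n+1)}$ are a coproduct of a single leg, so the antipode axiom replaces them by $\epsilon_\vee(\mu^{(n)})\,1$, with the remaining legs reindexed. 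The counit axiom then absorbs $\epsilon_\vee(\mu^{(n)})$ into the neighbouring leg $\mu^{(n-1)}$. Reindexing again shows that $\cO_0^{(V,E)}$ equals an expression of the same form built from the $(2n-2)$-fold coproduct of $\mu$, acting only on $E_1,\dots,E_{n-1}$ and by the identity on $E_n$. So $\cO_0^{(V,E)}$ has no support on $E_n$.

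I expect the only real care to be needed in the bookkeeping step: justifying that the legs $\mu^{(n)}$ and $\mu^{(n+1)}$ may be treated as $\Delta$ applied to one leg. Coassociativity says the iterated coproduct is independent of bracketing, so the pair of adjacent legs $(n,n+1)$ can always be written as $\Delta$ of a single leg of $\Delta^{(2n-1)}(\mu)$. I would state this explicitly, and also confirm that the left/right convention does not break the homomorphism property of $\rho_n$. Notably, this argument does not use cocommutativity of $\mu$; that property is needed only later, to move the ``final edge'' to any desired position by changing $E$.
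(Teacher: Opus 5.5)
Your proof is correct and is essentially the paper's own argument: commute the last-edge factor $\mu^{(n)}$ through $\cO$, merge it with $S(\mu^{(n+1)})$ using the homomorphism property of the edge action, collapse via $\mu^{(n)}S(\mu^{(n+1)}) = \epsilon(\mu^{(n)})$ and the counit identity, and reindex to a $(2n-2)$-fold coproduct that acts trivially on the final edge. Your closing aside is also sound if you read it as using cocommutativity only to make the constraint $\rho^{(V,E)}(\mu)$ independent of $E$ (the neutral component $\cO_0^{(V,E)}$ itself does depend on $E$). The paper additionally invokes cyclic invariance of $\mu$'s coproduct to prove a companion first-edge lemma, but for the cuspless application it would serve equally well to choose $E$ so that the uncovered edges come last and then iterate your lemma.
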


\noindent Second. A similar, but slightly different argument holds for the initial edge. First it will be convenient to rewrite the neutral operator as 
\begin{equation}
    \cO^{(V,E)}_0 := \left(\mu^{(2n)} \otimes \mu^{(1)} \otimes \cdots \otimes \mu^{(n-1)}\right) \cO \left(S(\mu^{(2n-1)}) \otimes \cdots \otimes S(\mu^{(n)})\right),
\end{equation}
using the fact that coproducts of $\mu$ are unchanged under cyclic permutations. Then, proceeding in the same way, we have,
\begin{align}
    &= \left(\mu^{(2n)}S(\mu^{(2n-1)}) \otimes \mu^{(1)} \otimes \cdots \otimes \mu^{(n-1)}\right) \cO \left(1 \otimes S(\mu^{(2n-2)}) \otimes \cdots \otimes S(\mu^{(n)})\right), \\
    &= \left(\epsilon(\mu^{(2n-1)}) \otimes \mu^{(1)} \otimes \cdots \otimes \mu^{(n-1)}\right) \cO \left(1 \otimes S(\mu^{(2n-2)}) \otimes \cdots \otimes S(\mu^{(n)})\right), \\
    \intertext{which has used the identity $\mu^{(2n)}S(\mu^{(2n-1)}) = \epsilon(\mu^{(2n-1)})$,\footnotemark}\noalign{\footnotetext{This identity follows the antipode identity and the fact that in a finite-dimensional C$^*$-Hopf algebra, the antipode is its own inverse.}}
    &= \left(1 \otimes \mu^{(1)} \otimes \cdots \otimes \mu^{(n-1)}\right) \cO \left(1 \otimes \epsilon(\mu^{(2n-1)})S(\mu^{(2n-2)}) \otimes \cdots \otimes S(\mu^{(n)})\right), \\
    &= \left(1 \otimes \mu^{(1)} \otimes \cdots \otimes \mu^{(n-1)}\right) \cO \left(1 \otimes S(\mu^{(2n-2)}) \otimes \cdots \otimes S(\mu^{(n)})\right),
\end{align}
which used the identity $S(\mu^{(2n-2)}\epsilon(\mu^{(2n-1)})) = S(\mu^{(2n-2)})$. Therefore, we have:
\begin{lemma}
    If $\cO$ does not have support on the first edge, then neither does $\cO_0^{(V,E)}$.
\end{lemma}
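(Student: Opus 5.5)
The plan is to run the argument of the preceding lemma in mirror image. The two ingredients are the cyclic invariance \eqref{eq:cyclic_haar_measure} of the iterated coproduct of the Haar measure, and the antipode identity in the ``reversed'' order. I begin with the explicit form
\begin{equation*}
    \cO^{(V,E)}_0 = \left(\mu^{(1)} \otimes \cdots \otimes \mu^{(n)}\right) \cO \left(S(\mu^{(2n)}) \otimes \cdots \otimes S(\mu^{(n+1)})\right).
\end{equation*}
On the first edge this places $\mu^{(1)}$ on the left and $S(\mu^{(2n)})$ on the right. These two tensor legs are not adjacent in the $2n$-fold coproduct, so the antipode identity cannot be applied directly. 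The first step is therefore to cyclically shift the $2n$-fold coproduct of $\mu$ by one slot, which is allowed by cocommutativity of $\mu$. After the shift, the first edge carries $\mu^{(2n)}$ on the left and $S(\mu^{(2n-1)})$ on the right, and the remaining legs are relabeled $\mu^{(1)},\dots,\mu^{(2n-2)}$ in the same nested pattern.

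Next I use the hypothesis that $\cO$ acts trivially on the first edge. This lets the left factor $\mu^{(2n)}$ on that edge be moved through $\cO$. On the first edge we are then left with the product $\mu^{(2n)} S(\mu^{(2n-1)})$. I will collapse this using the identity
\begin{equation*}
    \alpha^{(2)} S(\alpha^{(1)}) = \epsilon(\alpha),
\end{equation*}
which is valid because $S^2 = \mathrm{id}$ in a finite-dimensional C$^*$-Hopf algebra. Applied to the last two legs of the coproduct, it turns the pair into $\epsilon(\mu^{(2n-1)})$ times the identity on the first edge. Finally, the counit axiom $\alpha^{(1)}\epsilon(\alpha^{(2)}) = \alpha$ absorbs this counit into the neighboring leg $S(\mu^{(2n-2)})$, using linearity of $S$, and the iterated coproduct is reindexed to a $(2n-2)$-fold one.

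The result has the same nested form as $\cO^{(V,E)}_0$ but lives on the last $n-1$ edges, with the identity on the first edge. This proves the claim.

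The main obstacle is bookkeeping. One must make sure that the cyclic shift is by the correct amount and in the correct direction, so that the two legs meeting on the first edge are adjacent in the right order for the antipode identity. One must also check that, after contracting with $\epsilon$, the remaining legs are still correctly nested, so that the reindexing is legitimate. I will verify this once for small $n$, say $n=2$, and then argue that the general case has identical structure.
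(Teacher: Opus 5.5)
Your proposal is correct and follows the paper's proof essentially step for step. The paper likewise uses cyclic invariance of the iterated coproduct of $\mu$ to put $\mu^{(2n)}$ and $S(\mu^{(2n-1)})$ on the first edge, collapses $\mu^{(2n)}S(\mu^{(2n-1)})$ to a counit using $S^2=\id$, and absorbs that counit into $S(\mu^{(2n-2)})$ before reindexing, so your argument already works uniformly in $n$ and the small-$n$ check is unnecessary.
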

\noindent Finally, iterating both arguments, we have the complete result:
\begin{proposition}
    The support of $\cO_0^{(V,E)}$ does not extend beyond the outer edges of the support of $\cO$ with respect to the specified ordering.
\end{proposition}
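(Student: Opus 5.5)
The plan is to iterate the two preceding lemmas, peeling off unsupported edges from both ends of the ordering one at a time. Fix a vertex $V$ with edges $E = E_1, E_2, \dots, E_n$ in the clockwise order determined by $(V,E)$. Suppose the support of $\cO$ among the edges of $V$ lies in the interval $\{E_j, \dots, E_k\}$, so that $E_1, \dots, E_{j-1}$ and $E_{k+1}, \dots, E_n$ are outside the support. Edges of $\cO$ not adjacent to $V$ play no role, since $\rho^{(V,E)}$ acts trivially on them and they commute through. The claim is that $\cO_0^{(V,E)}$ is then also supported in $\{E_j,\dots,E_k\}$, together with whatever non-$V$ edges $\cO$ touched.

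First I would induct from the right. If $k<n$, the first of the two lemmas removes $E_n$. It shows that $\cO_0^{(V,E)}$ equals the neutral projection of $\cO$ for the $(n-1)$-fold iterated coproduct of $\mu$, acting on $E_1,\dots,E_{n-1}$ with identity on $E_n$. The key observation is that this reduced expression has exactly the same form as the original, namely
\begin{equation*}
(\mu^{(1)}\otimes\cdots\otimes\mu^{(n-1)})\,\cO\,(S(\mu^{(2(n-1))})\otimes\cdots\otimes S(\mu^{(n)})).
\end{equation*}
This is the neutral projection for the tensor-product action of $H^\vee$ on the ordered edges $E_1,\dots,E_{n-1}$. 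One can therefore apply the same lemma again to remove $E_{n-1}$, and continue until $E_{k+1}$ is removed. This step uses only coassociativity, to reindex the iterated coproduct, and the identities $\mu^{(m)}S(\mu^{(m+1)})=\epsilon(\mu^{(m)})$ and $(\mathrm{id}\otimes\epsilon)\Delta=\mathrm{id}$.

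Next I would strip the left end, applying the second lemma $j-1$ times to the reduced expression. This is the step I expect to be the main obstacle. The proof of the second lemma used the cyclic invariance \eqref{eq:cyclic_haar_measure} of the full $2n$-fold coproduct of $\mu$, so I need that invariance to survive the right-end reductions. My plan is to note that after removing edges on the right, the reduced expression is literally the neutral projection built from $\Delta^{(2m-1)}\mu$ on $m$ edges. Cocommutativity in the cyclic sense holds for every iterated coproduct of $\mu$, because $\mu$ is a linear combination of characters and characters are cyclically invariant under coproduct. So the second lemma applies verbatim at each stage, with $m$ in place of $n$.

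Alternatively, I could do the left reductions first on the full expression and then the right ones, checking that the lemmas commute. Either order works once cyclic invariance holds at every length. After the $(n-k)+(j-1)$ reductions, $\cO_0^{(V,E)}$ is expressed using only the legs on $E_j,\dots,E_k$, with identities elsewhere, which is the proposition. I would also note the degenerate case in which $\cO$ has no support on $V$ at all: there the reductions continue down to $m=0$, and the identity $\mu^{(1)}S(\mu^{(2)})=\epsilon(\mu)=1$ gives $\cO_0^{(V,E)}=\cO$.
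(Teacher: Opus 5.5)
Your proposal is correct and is essentially the paper's own argument: the paper proves the proposition by ``iterating both arguments,'' i.e.\ applying the final-edge and first-edge lemmas repeatedly, as you do. Your extra bookkeeping is what makes that iteration legitimate, and the paper leaves it implicit: each reduction returns an expression of the same form on fewer edges, and the cyclic invariance used in the first-edge lemma holds for every iterated coproduct of $\mu$ because $\mu$ is cocommutative.
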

\noindent Importantly, the support of $\cO_0^{(V,E)}$ \textit{can} expand within those edges. This spread in support will generically happen unless the dual Hopf algebra is cocommutative.\footnote{A cocommutative finite dimensional C$^*$-Hopf algebra is always the group algebra of some group $G$ \cite{milnor1965structure} and so this would reduce the studying the standard Gauss law of a $G$ symmetry \cite{Shao:additivity}.} This is the same behavior depicted in figure \ref{fig:site-projection}. 

As the behavior of the spreads of support is identical to that of section \ref{sec:projection-magnetic}, our earlier results follow immediately. In particular, defining a collared region as in that section, the inclusion
\begin{equation}\tag{\ref{eq:compression_gen}}
    \Pi_0 \cA(R) \Pi_0 \subseteq \cA_0(R^+),
\end{equation}
follows, and so a weak form of Haag duality holds for general regions:
\begin{theorem}
    For a general region $R$, the constrained subalgebra satisfies \textit{weak} Haag duality
    \begin{equation}\tag{\ref{eq:weak_haag}}
    \A_0(R^+)'
        \subseteq \A_0(R').
\end{equation}
\end{theorem}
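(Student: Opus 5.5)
The plan is to reduce weak Haag duality to the compression inclusion $\Pi_0 \A(R)\Pi_0 \subseteq \A_0(R^+)$, exactly as in section \ref{subsec:grp_lifts}. The argument has two ingredients, to be carried out in order.

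\textbf{Step 1 (compression duality).} First I will recall that compression duality, equation \eqref{eq:compression-inclusion}, gives $(\Pi_0\A(R)\Pi_0)' \subseteq \A_0(R')$. Its derivation used only that $\Pi_0$ is a projection and that $\A(R)$ and $\A(R')$ are commuting tensor factors of $\B(\H)$. It made no use of the specific form of the constraint, so it holds verbatim for the Hopf-algebraic $\Pi_0 = \prod_V \rho_V(\mu)$. Concretely, for an isometry $\omega$ on $\H_0$ commuting with $\Pi_0 \A(R)\Pi_0$, one defines $\tilde\omega\, a\ket{\psi} = a\,\omega\ket{\psi}$ on $\A(R)\H_0$ and sets it to zero on the orthocomplement. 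The norm computation shows $\tilde\omega$ is well defined. It commutes with $\A(R)$, so it lies in $\A(R')$, and it preserves $\H_0$ because taking $a=1$ gives $\tilde\omega\ket{\psi}=\omega\ket\psi$. Since a von Neumann algebra is generated by its isometries, the inclusion follows.

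\textbf{Step 2 (the projection formula).} Next I will establish $\Pi_0 \A(R)\Pi_0 \subseteq \A_0(R^+)$. Fix $\cO\in\A(R)$. At each vertex $V$ I choose a starting edge $E$ as follows:
\begin{itemize}
\item If $V$ is cuspless for $R$, the edges of $V$ in $R$ form a single cyclically consecutive block. I choose $E$ to be the first edge of that block, so that the block is an interval $[j,k]$ in the $(V,E)$ ordering.
\item If $V$ is cusped, $E$ is arbitrary.
\end{itemize}
Then I apply the vertex neutral projections $\cO \mapsto \cO_0^{(V,E)}$ successively over all vertices.

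At a cuspless vertex, the preceding Proposition (support does not extend beyond the outer edges of the support in the chosen ordering) keeps the support inside $R$. At a cusped vertex, the support can only grow onto edges of $V$, all of which lie in $R^+$.

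\textbf{The main obstacle.} The delicate point is to check that the successive projections are compatible. Two things must be verified:
\begin{itemize}
\item The projection at one vertex must not spoil the support control at another. Projecting at $V$ only adds edges incident to $V$. An edge added at a cusped vertex $V$ may also touch a neighbouring cuspless vertex $W$, where it is now "in the support" but not in $R$. Here I need to argue that the block structure at $W$ is computed for the current support, and then either show it remains an interval or absorb the discrepancy into $R^+$.
\item The projections at different vertices must commute with one another, and the result must still restrict correctly to $\H_0$.
\end{itemize}
For the second point I will use that each $\rho_V(\mu^{(1)})\,\cdot\,\rho_V(S\mu^{(2)})$ is a conditional-expectation-like map. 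The projectors $\rho_V(\mu)$ commute, and equation \eqref{eq:hopf-neutral-neutral} applied vertex by vertex gives $\cO_0\Pi_0 = \Pi_0\cO\Pi_0$ after all projections. From this, $\cO_0$ preserves $\H_0$, since $\cO_0\Pi_0=\Pi_0\cO\Pi_0 = \Pi_0 \cO_0\Pi_0$. So the result is a constraint-preserving lift of $\Pi_0\cO\Pi_0$ supported in $R^+$.

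\textbf{Conclusion.} Combining the two steps,
\begin{equation*}
\A_0(R^+)' \subseteq (\Pi_0\A(R)\Pi_0)' \subseteq \A_0(R'),
\end{equation*}
which is the claimed weak Haag duality.
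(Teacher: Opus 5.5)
\textbf{Overall.} Your architecture matches the paper's. You use compression duality $(\Pi_0\A(R)\Pi_0)'\subseteq\A_0(R')$ and the projection formula $\Pi_0\A(R)\Pi_0\subseteq\A_0(R^+)$, obtained from vertex-wise neutral projections with the starting edge at each cuspless vertex placed at the beginning of its $R$-block. You then take the chain of commutants. Step 1 is fine. Your check that $\cO_0\Pi_0=\Pi_0\cO\Pi_0$ is also fine, and it needs only that the projectors $\rho_V(\mu)$ commute.

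\textbf{The gap.} The gap is exactly the obstacle you flag, and neither remedy you propose closes it.

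\emph{The block need not ``remain an interval''.} Suppose a cuspless vertex $W$ has edges $e_1,\dots,e_6$ in clockwise order, with $R$-block $\{e_1,e_2\}$. Suppose also that $e_4\notin R$ is shared with a cusped vertex $V$. If you project at $V$ first, $e_4$ can enter the support. The support lemma at $W$ then only confines the result to an interval containing $e_3$ (or $e_5,e_6$). Such an edge lies in $R^+$ only if its other endpoint happens to be a cusped vertex of $R$, which generically it is not.

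\emph{Nor can the discrepancy be ``absorbed into $R^+$''.} The region $R^+$ is fixed by the statement, so enlarging the collar would prove a weaker theorem.

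As written, your step ``at a cuspless vertex the Proposition keeps the support inside $R$'' is valid only if the current support at that vertex is still inside its $R$-block. Your unspecified ordering does not guarantee this.

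\textbf{The missing idea: choose the order.} Set $\mathcal{E}_V(X)=\rho_V(\mu^{(1)})\,X\,\rho_V(S(\mu^{(2)}))$. The identity $\cO_0\Pi_0=\Pi_0\cO\Pi_0$ holds for the composite $\mathcal{E}_{V_N}\circ\cdots\circ\mathcal{E}_{V_1}$ in any order. To see this, induct on $X_{k+1}P_{k+1}=\Pi_{V_{k+1}}X_kP_k\Pi_{V_{k+1}}$, where $P_k=\Pi_{V_1}\cdots\Pi_{V_k}$. This uses only $\mathcal{E}_V(X)\Pi_V=\Pi_VX\Pi_V$ and the commutativity of the $\Pi_V$. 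In fact the maps $\mathcal{E}_V$ themselves commute, because $H^\vee$-actions at distinct vertices commute: a shared edge receives a left action from one endpoint and a right action from the other.

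So proceed in two stages:
\begin{enumerate}
\item Apply all cuspless projections first. At every stage the support is still inside $R$, hence inside the $R$-block of the next cuspless vertex, and the lemma keeps it inside $R$.
\item Then apply the cusped projections. Each one adds only edges of its own cusped vertex, and these lie in $R^+$. No cuspless projection follows to propagate that growth.
\end{enumerate}
The result is a constraint-preserving lift of $\Pi_0\cO\Pi_0$ supported in $R^+$, which is what the commutant chain needs. The paper passes over this point by speaking of ``the simultaneous singlet projection.'' The ordering argument above is what makes that phrase rigorous.
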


It follows again that on cuspless regions, Haag duality is exact.
Consequently, Haag duality holds for all regions when the double model is studied on a trivalent lattice. Moreover, the explicit exact Haag duality violation demonstrated in section \ref{sec:counterexample} shows that the non-collared equality need not hold. We have not, however, demonstrated that the non-collared identity indeed fails for all non-grouplike Hopf algebra double models.

\subsection{Disjoint additivity}\label{subsec:dis_add}
Having completed our study of Haag duality, we now turn to disjoint additivity.
The results at the beginning of section \ref{subsec:disjoint_add_group} --- i.e., those appearing up through corollary \ref{cor:DA-cuspless} --- will apply verbatim in the case of a general Hopf algebra.
Specifically, we will momentarily show that a Hopf double model obeys the following proposition.
\begin{proposition} \label{prop:DA-hopf}
    If $R_1$ and $R_2$ are such that no vertex has an edge contained in both $R_1$ and $R_2$, then there is an inclusion
    \begin{equation} \label{eq:collared-DA}
        \cA_0(R_1 \cup R_2) \subseteq \cA_0({R_1^+}) \vee \cA_0({R_2^+}).
    \end{equation}
    In particular, if the regions are cuspless, then additivity holds,
    \begin{equation}
        \cA_0(R_1 \cup R_2) = \cA_0({R_1}) \vee \cA_0({R_2}).
    \end{equation}
\end{proposition}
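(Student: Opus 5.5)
The plan is to repeat the argument of section \ref{subsec:disjoint_add_group} leading up to corollary \ref{cor:DA-cuspless}, with the group-algebra projection formula replaced by the Hopf-algebraic neutral projection of section \ref{subsec:general_locality}.

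Take $\omega \in \cA_0(R_1 \cup R_2)$ and a constraint-preserving lift $\w{\omega} \in \cA(R_1\cup R_2)$. Because the unconstrained algebras are tensor products, $\w{\omega}$ can be written as a finite sum $\sum_i \w{\omega}^{(1)}_i \w{\omega}^{(2)}_i$ with $\w{\omega}^{(j)}_i \in \cA(R_j)$.

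Next, define the total neutral projection
\begin{equation}
    \cO \mapsto \cO_0 = \Big(\prod_V \Phi_V\Big)(\cO), \qquad \Phi_V(\cO) = \rho^{(V,E_V)}(\mu^{(1)})\, \cO\, \rho^{(V,E_V)}(S(\mu^{(2)})).
\end{equation}
The edge $E_V$ at each vertex is chosen so that, by the support proposition of section \ref{subsec:general_locality}, the support of an operator in $R_j$ does not grow at cuspless vertices of $R_j$.

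Two preliminary facts are needed about this map.
\begin{itemize}
    \item The maps $\Phi_V$ for different vertices commute, so the product is well defined. This holds because the $\rho_V$ actions at distinct vertices act on edges through commuting left and right actions, exactly as for $A_V$.
    \item Iterating equation \eqref{eq:hopf-neutral-neutral} vertex by vertex gives $\cO_0 \Pi_0 = \Pi_0 \cO \Pi_0$. The image $\cO_0$ is neutral at every vertex, hence it commutes with each $\rho_V(\mu)$ and is constraint-preserving.
\end{itemize}
Applying these facts to $\w{\omega}$, the operator $\w{\omega}_0$ is another lift of $\omega$. Since $\w{\omega}$ preserves the constraint, $\Pi_0 \w{\omega} \Pi_0 = \w{\omega}\Pi_0$.

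The central step is the Hopf analogue of proposition \ref{prop:neutral-DA-projection}:
\begin{equation}
    \big(\w{\omega}^{(1)}_i \w{\omega}^{(2)}_i\big)_0 = \big(\w{\omega}^{(1)}_i\big)_0 \big(\w{\omega}^{(2)}_i\big)_0 .
\end{equation}
The proof splits the vertices into three classes according to whether they touch $R_1$, touch $R_2$, or touch neither; disjointness rules out any vertex touching both.

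For a vertex touching neither region, $\Phi_V$ acts on an operator with no support at $V$. By the two lemmas of section \ref{subsec:general_locality}, applied to all $n$ edges, $\Phi_V$ then reduces to $\epsilon(\mu^{(1)})\,\cdot\,\epsilon(S(\mu^{(2)}))$ times the operator, i.e. to multiplication by $\epsilon_\vee(\mu)$. I need to check that this equals $1$. This follows from $\mu^2=\mu$ and $\mu \neq 0$, since $\epsilon_\vee(\mu)\mu = \mu\mu = \mu$.

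For a vertex touching only $R_1$, the factor $\w{\omega}^{(2)}_i$ has no support at $V$, so it commutes with $\rho^{(V,E)}(\mu^{(1)})$ and with $\rho^{(V,E)}(S(\mu^{(2)}))$. It can therefore be pulled out, giving
\begin{equation}
    \Phi_V\big(\w{\omega}^{(1)}_i\w{\omega}^{(2)}_i\big) = \Phi_V\big(\w{\omega}^{(1)}_i\big)\,\w{\omega}^{(2)}_i .
\end{equation}
A vertex touching only $R_2$ is handled symmetrically. Moreover, $\Phi_V$ for $V$ touching $R_1$ preserves support inside $R_1^+$, which stays away from every vertex touching $R_2$. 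So the maps can be applied in any order and the factorization survives iteration.

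It follows that
\begin{equation}
    \w{\omega}_0=\sum_i \big(\w{\omega}^{(1)}_i\big)_0\big(\w{\omega}^{(2)}_i\big)_0, \qquad \big(\w{\omega}^{(j)}_i\big)_0 \in \cA(R_j^+).
\end{equation}
Each factor $\big(\w{\omega}^{(j)}_i\big)_0$ is constraint-preserving, and its compression lies in $\cA_0(R_j^+)$. Compressing the sum gives
\begin{equation}
    \omega = \Pi_0\w{\omega}_0\Pi_0 = \sum_i \Pi_0\big(\w{\omega}^{(1)}_i\big)_0\Pi_0\;\Pi_0\big(\w{\omega}^{(2)}_i\big)_0\Pi_0 .
\end{equation}
The middle $\Pi_0$ can be inserted because $\big(\w{\omega}^{(2)}_i\big)_0$ preserves the constraint subspace. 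This proves the inclusion \eqref{eq:collared-DA}.

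For cuspless regions we have $R_j^+=R_j$, so the inclusion \eqref{eq:collared-DA} becomes $\cA_0(R_1\cup R_2)\subseteq\cA_0(R_1)\vee\cA_0(R_2)$. Combined with the trivial reverse inclusion from nesting, this gives equality.

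The main obstacle is justifying the factorization at vertices touching one region, in the presence of collars. The collar $R_1^+$ adds edges at cusped vertices of $R_1$, and I must make sure these never reach a vertex touching $R_2$, so that $\big(\w{\omega}^{(1)}_i\big)_0$ still commutes with the $R_2$-vertex actions. Collar edges are edges incident to $V$ only; an edge joining $V$ to a vertex $W$ that touches $R_2$ is not excluded by the hypothesis. So some care is needed here, possibly by applying all $\Phi_V$ before the spread is accounted for and using commutativity of the $\Phi_V$. If that fails, I would strengthen the disjointness hypothesis to rule out such edges.
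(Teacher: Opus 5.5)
Your strategy is the paper's: lift, split $\w{\omega}=\sum_i \w{\omega}^{(1)}_i\w{\omega}^{(2)}_i$, factorize the total neutral projection, and compress. However, the central factorization step is not closed as written. In the body you assert that $R_1^+$ ``stays away from every vertex touching $R_2$.'' As you note yourself at the end, this is false. Take a collar edge $e$ at a cusped vertex $V$ of $R_1$. Its other endpoint $W$ can touch $R_2$, and the hypothesis allows this because $e\notin R_1\cup R_2$. So after processing the vertices of $R_1$ you hold $X\,\w{\omega}^{(2)}_i$, where $X$ may act on an edge of $W$. Your support bookkeeping then gives no reason why $\Phi_W(X\w{\omega}^{(2)}_i)=X\,\Phi_W(\w{\omega}^{(2)}_i)$. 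Strengthening the disjointness hypothesis is unnecessary, and it would prove a weaker statement.

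The missing observation is that you need commutation with $\rho_W(H^\vee)$, not disjointness of supports. You can write $X=\sum_k A_k\,\w{\omega}^{(1)}_i B_k$, where $A_k,B_k$ are products of operators $\rho_V(\cdot)$ at vertices $V$ touching $R_1$. Each such factor commutes with every $\rho_W(\lambda)$ for $W\neq V$. On a shared edge one vertex acts by $L$ and the other by $R$, and these commute; this is the same fact you used to show the $\Phi_V$ commute. Also, $\w{\omega}^{(1)}_i$ commutes with $\rho_W(\lambda)$ because it has no support at $W$. Hence $X$ commutes with $\rho_W(H^\vee)$ even though it may act on an edge of $W$, and it can be pulled out of $\Phi_W$.

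Equivalently, $X=\Phi_W(X)$ is adjoint-invariant at $W$, and invariance implies commutation. Indeed, $\rho(\lambda)\cO=\rho(\lambda^{(1)})\cO\rho(S(\lambda^{(2)}))\rho(\lambda^{(3)})=\cO\rho(\lambda)$ for invariant $\cO$. This is a stronger form of the ``commutes with $\rho_V(\mu)$'' you already used.

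This is what the paper's ``commute operators past each other'' amounts to. It writes the projection as one simultaneous conjugation $\prod_V\rho_V(\mu_V^{(1)})\,\cO^{(1)}\cO^{(2)}\prod_V\rho_V(S(\mu_V^{(2)}))$. It then commutes only the raw operators $\cO^{(1)},\cO^{(2)}$ past the vertex factors of the other region, so the supports of projected operators never enter. With this fix the rest of your argument goes through, including the careful insertion of the middle $\Pi_0$.
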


Unlike in the case of the dual group algebra, we do not make any claims about disjoint additivity holding exactly for cusped regions; however, the weak form of disjoint additivity in equation \eqref{eq:collared-DA} does hold, and in particular this means that disjoint additivity holds exactly for any Hopf double model on a trivalent lattice.
Note that, as mentioned in the introduction, we have \textit{not} produced a counterexample to the stronger form of disjoint additivity; our suspicion is that disjoint additivity holds even for cusped regions, so long as the two regions do not share a vertex.

We now prove proposition \ref{prop:DA-hopf}.
\begin{proof}
    To extend the proof of proposition \ref{prop:neutral-DA-projection}, we need only show that given two operators 
    \begin{equation}
        \cO^{(j)} \in \cA(R_j), \quad j = 1,2,
    \end{equation}
    then
    \begin{equation}
        \left(\cO^{(1)} \cO^{(2)}\right)_0 = \left(\cO^{(1)}\right)_0\left(\cO^{(2)}\right)_0.
    \end{equation}
    The rest of the proof from  section \ref{subsec:disjoint_add_group} then goes through immediately.
    
    For a Hopf double, the neutral operator takes the form
    \begin{equation}
        \left(\cO^{(1)} \cO^{(2)}\right)_0 =  \prod_V \mu_V^{(1)} \cO^{(1)}\cO^{(2)} S(\mu_V^{(2)}).
    \end{equation}
    Now, if we assume that $R_1$ and $R_2$ have the property that there exists no vertex that has edges intersecting both $R_1$ and $R_2$, then we can decompose this product as,
    \begin{equation}
        = \prod_{V \in R_1} \prod_{V \in R_2} \prod_{V \not\in R_1, R_2} \mu_V^{(1)} \cO^{(1)}\cO^{(2)} S(\mu_V^{(2)}),
    \end{equation}
    and commute operators past each other to obtain,
    \begin{equation}
        =  \left(\prod_{V \in R_1}  \mu_V^{(1)} \cO^{(1)}  S(\mu_V^{(2)})\right) \left(\prod_{V \in R_2}  \mu_V^{(1)}\cO^{(2)} S(\mu_V^{(2)})\right) = \sum_i \left(\cO^{(1)}\right)_0 \left(\cO^{(2)}\right)_0.
    \end{equation}
    Therefore, we have the desired result.
\end{proof}

\section{Future directions}
\label{sec:discussion}

In this paper, we studied the algebraic locality structure of certain lattice constraints associated to non-invertible symmetries.
Concretely, we examined general quantum double models based on a finite-dimensional C$^*$ Hopf algebra, and explained how to think of the magnetic constraint of such a model as a non-invertible Gauss law.
We showed that when this constraint is imposed exactly, the local operator algebras satisfy weak forms of Haag duality and disjoint additivity.
These ``weak'' forms of algebraic locality become exact for nice, ``cuspless'' regions, and in particular, they hold for arbitrary regions on a trivalent lattice.
In the special case of an ordinary grouplike double model, we showed a ``strong'' form of disjoint additivity; the question of whether this strong form of disjoint additivity holds in the general Hopf-algebraic setting remains open.

There are several natural directions that would be interesting to explore in future work, mostly having to do with a larger class of models based on noninvertible symmetries.
One natural candidate is to study models based on \textit{weak} Hopf algebras, which were proposed in \cite{Buerschaper:2010yf} and constructed explicitly in \cite{Chang_2014}.
These models are more complicated, but they have the nice feature that they can realize any unitary fusion category as a local symmetry; by contrast, Hopf algebra double models can only realize non-anomalous symmetries.

Another natural direction is to use our techniques to study the locality structure of local algebras in a Hopf algebra double model when \textit{both} electric and magnetic constraints are imposed, since our current analysis only treats the case where one imposes one constraint or the other.
Since the only nontrivial operators in the full double model ground space are understood to be the ribbon operators of \cite{Kitaev:double}, it would be interesting to see if our techniques provide a novel construction of the ribbon algebra.

\acknowledgments{We thank Wilbur Shirley and Marvin Qi for stimulating discussions.
NH is supported by the Simons Collaboration on Global Categorical Symmetries and the US Department of Energy Grant 5-29073.}

\newpage
\appendix
\section{Background on Hopf algebras}\label{app:hopf}
In this appendix we review some necessary background on finite-dimensional C$^*$-Hopf algebras.
There are many good existing references that the physically or mathematically minded reader might consult, so our presentation is terse.
A collection of useful textbooks includes \cite{kassel1995quantum,montgomery1993hopf,klimyk1997quantum}. More physically minded reviews include \cite{Inamura:2021szw,lu2026generalizedkramerswannierselfdualityhopfising, kirillov2010turaevviroinvariantsextendedtqft, Kitaev:double}.

To help guide the reader unfamiliar with Hopf algebras, we note that the structure of a finite-dimensional C$^*$-Hopf algebra can be motivated entirely in terms of representation theory. A representation of an algebra $H$ on a vector space $V$ is a linear homomorphism
\begin{equation}
    \rho: H \to \End(V).
\end{equation}
A finite-dimensional C$^*$-Hopf algebra is a finite-dimensional algebra whose representations have the following properties:
\begin{enumerate}
    \item representations admit tensor products,
    \item there is a trivial representation of dimension one,
    \item representations have duals, and
    \item there is a notion of what it means for a representation to be compatible with the inner product on a Hilbert space. 
\end{enumerate}
The standard example of an algebraic object whose representations have these properties is a finite group (or equivalently, a group algebra). Thus, finite-dimensional C$^*$-Hopf algebras generalize group algebras.\footnote{This is what might be called the ``Tannakian perspective'' on Hopf algebras. Another perspective on Hopf algebras is that they generalize the structure naturally possessed by the algebra of functions on a finite group. This might be called the ``quantum group perspective'' \cite{Drinfeld:1986in}.}

Finite-dimensional C$^*$-Hopf algebras appear concretely in quantum field theories having non-invertible symmetries, where they furnish operator algebras that implement symmetry transformations \cite{cordova2024representationtheorysolitons}. The Hopf algebras acting at vertices and plaquettes in a double model can be interpreted in this way, as explained in section \ref{subsec:hopf_double}. Thus, Hopf algebras can be thought of either operationally in terms of their representations, or in terms of symmetry.

\subsection{Definitions}\label{app:hopf_def}

Each of the above ``representational'' properties is related to an additional structure with which a finite-dimensional algebra can be equipped.
A finite-dimensional C$^*$-Hopf algebra is a complex algebra $H$ having a unit together with four additional maps:
\begin{enumerate}
    \item a \textit{coproduct} \begin{equation}
        \Delta: H \to H \otimes H
    \end{equation}
    \item a \textit{counit}
    \begin{equation}
        \epsilon: H \to \C,
    \end{equation}
    \item an \textit{antipode}
    \begin{equation}
        S: H \to H,
    \end{equation}
    \item a $*$\textit{-map},
    \begin{equation}
        *: H \to H.
    \end{equation}
\end{enumerate}
The coproduct map can be conveniently expressed using ``Sweedler notation,'' an analogue of the summation convention familiar in physics. As this will be useful below, we will take a moment now to explain the notation carefully.\footnote{For a helpful discussion of Sweedler notation, see for example \cite[Chapter 3.1]{kassel1995quantum}.}

By definition, the coproduct of an element in $H$ is an element of $H \otimes H$. As such, it can be expressed as a sum of tensor products of elements in $H$. Choosing any such sum, Sweedler notation simply suppresses the summation index, writing the coproduct as
\begin{equation}
    \Delta(\alpha) = \sum_i \alpha_i^{(1)} \otimes \alpha_i^{(2)} =:  \alpha^{(1)} \otimes \alpha^{(2)}.
\end{equation}
Note that this need \textit{not} be a sum over a choice of basis elements. The notation $\alpha^{(1)} \otimes \alpha^{(2)}$ simply denotes \textit{any} presentation of the coproduct of $\alpha$ in $H \otimes H$. All axioms of the Hopf algebra are insensitive to the choice of presentation for the coproduct. In the following, any time superscripts appear on a Hopf algebra, an implicit sum is present. 

Equipped with this notation, we can now explain the compatibility properties satisfied by $(\Delta, \epsilon, S, *)$.\footnote{Since in the text we mostly leverage these identities when using Sweedler notation, we will express all conditions first in Sweedler notation and include an equivalent definition in an associated footnote.}
\begin{enumerate}
    \item The coproduct is both an algebra homomorphism and is ``coassociative,'' meaning it satisfies\footnote{This condition could also have been written
    \begin{equation*}
        (\Delta \otimes \id) \circ \Delta = (\id \otimes \Delta) \circ \Delta.
    \end{equation*}}
    \begin{equation}
        \Delta(\alpha^{(1)}) \otimes \alpha^{(2)} = \alpha^{(1)} \otimes \Delta(\alpha^{(2)}).
    \end{equation}
    This is then typically written
    \begin{equation}
        =: \alpha^{(1)} \otimes \alpha^{(2)} \otimes \alpha^{(3)}.
    \end{equation}
    Similar expressions are defined unambiguously for iterated coproducts. The coproduct can be used to define tensor product representations as\footnote{For two representations $\rho_V$ and $\rho_W$, the tensor product is equivalently the homomorphism
    \begin{equation*}
        H \xrightarrow{\Delta} H \otimes H \xrightarrow{\rho_V \otimes \rho_W} \End(V \otimes W).
    \end{equation*}}
    \begin{equation}
        \alpha \cdot (v \otimes w) = \alpha^{(1)}v \otimes \alpha^{(2)} w, \quad v \in V, w \in W,
    \end{equation}
    where $V$ and $W$ are two representations.

    \item The counit is an algebra homomorphism that defines a trivial representation as one satisfying
    \begin{equation}
        \alpha \cdot v = \epsilon(\alpha) v, \quad v \in \C.
    \end{equation}
    The counit is compatible with the coproduct as\footnote{Denoting the multiplication by $m$, this can also be expressed as
    \begin{equation*}
        m \circ (\epsilon \otimes \id)\circ \Delta = \id, \quad m \circ (\id \otimes \epsilon)\circ \Delta = \id.
    \end{equation*} }
    \begin{equation}
        \epsilon(\alpha^{(1)}) \alpha^{(2)} = \alpha^{(1)} \epsilon(\alpha^{(2)}) = \alpha.
    \end{equation}
    We often use this identity applied to iterated coproducts in the main text.

    \item The antipode is a (linear) algebra anti-homomorphism and coalgebra anti-homomorphism, meaning
    \begin{equation}
        S(\alpha\beta) = S(\beta)S(\alpha), \quad S(\alpha)^{(1)} \otimes S(\alpha)^{(2)} = S(\alpha^{(2)})\otimes S(\alpha^{(1)}),
    \end{equation}
    that satisfies the compatibility condition\footnote{This is equivalently the condition,
    \begin{equation*}
        m \circ (S \otimes \id) \circ \Delta = \epsilon, \quad m \circ (\id \otimes S) \circ \Delta = \epsilon.
    \end{equation*}
    }
    \begin{equation}
        S(\alpha^{(1)})\alpha^{(2)} = \alpha^{(1)}S(\alpha^{(2)}) = \epsilon(\alpha).
    \end{equation}
    This defines the notion of dual representation as
    \begin{equation}
        \alpha \cdot \lambda = \lambda \circ S(\alpha), \quad \lambda \in V^\vee.
    \end{equation}
    Like the inverse in a group, if an antipode exists, it is unique.

    \item Finally, the $*$-map is an (antilinear) algebra anti-homomorphism and coalgebra homomorphism, meaning
    \begin{equation}
        (\alpha\beta)^* = \beta^* \alpha^*, \quad (\alpha^*)^{(1)} \otimes (\alpha^*)^{(2)} = (\alpha^{(1)})^*\otimes (\alpha^{(2)})^*.
    \end{equation}
    The map is also involutive:
    \begin{equation}
        (\alpha^*)^* = \alpha.
    \end{equation}
    The $*$-map defines the notion of a unitary representation (or $*$-representation) as a representation satisfying\footnote{Note that it is \textit{not} required that the elements of the Hopf algebra be represented by unitary matrices -- in general elements don't even have inverses. Unitarity is instead the statement that the $*$-map be realized as the adjoint.}
    \begin{equation}
        (\alpha^*) \cdot v = \alpha^\dagger v.
    \end{equation}
\end{enumerate}
This exhausts all compatibility conditions that the maps must satisfy. Along with these conditions, the algebra is also required to be C$^*$ with respect to the $*$-map, meaning that it admits a faithful unitary representation.

It will be useful going forward to know three properties of the antipode of a finite-dimensional C$^*$-Hopf algebra. First, it is involutive \cite{larson}:
\begin{equation}
    S(S(\alpha)) = \alpha.
\end{equation}
Second, the counit and antipode satisfy
\begin{equation}
    \epsilon(S(\alpha)) = \epsilon(\alpha).
\end{equation}
Finally, the composite map with the $*$-structure is involutive \cite[Section 1.2.7]{klimyk1997quantum}:
\begin{equation}\label{eq:CPT_map}
    \gamma = S \circ *, \quad \gamma(\gamma(\alpha)) = \alpha.
\end{equation}

In the bulk text, we at times make use of the property of cocommutativity as well. An element $\alpha$ of a Hopf algebra is called \textit{cocommutative} if it satisfies,
\begin{equation}
    \alpha^{(1)} \otimes \alpha^{(2)} = \alpha^{(2)} \otimes \alpha^{(1)}.
\end{equation}
A cocommutative element has the special property that its iterated coproducts are left unchanged under \textit{cyclic} permutations. If we denote the $n$-th iterated coproduct as $\Delta^{n}$, then cocommutativity implies
\begin{multline}
    \alpha^{(1)} \otimes \alpha^{(2)} \otimes \cdots \otimes \alpha^{(n+1)} = \alpha^{(1)} \otimes \Delta^{n-1}(\alpha^{(2)}) = \alpha^{(2)} \otimes \Delta^{n-1}(\alpha^{(1)}) \\ = \alpha^{(n+1)} \otimes \alpha^{(1)} \otimes \cdots \otimes \alpha^{(n)}.
\end{multline}

\subsection{Dual Hopf algebras}
To every finite-dimensional C$^*$-Hopf algebra there is an associated \textit{dual} Hopf algebra. Given a Hopf algebra $H$, the dual Hopf algebra is defined on the dual vector space and is denoted $H^\vee$. Its product, coproduct, unit, counit, and antipode are defined using the dual maps of $H$: for $f$ and $g$ elements of $H^{\vee}$, one defines
\begin{equation}
    (f \cdot g)(\alpha) = f(\alpha^{(1)}) g(\alpha^{(2)}), \quad \Delta_\vee(f)(\alpha \otimes \beta) = f(\alpha\beta),
\end{equation}
and
\begin{equation}\label{eq:dual_unit_counit}
    1_\vee(\alpha) = \epsilon(\alpha), \quad \epsilon_\vee(f) = f(1), \quad S_\vee(f) = f \circ S.
\end{equation}
The $*$-map, by contrast, is \textit{not} defined by the pullback, but rather as
\begin{equation}\label{eq:dual_star_map}
    f^*(\alpha) = \overline{f(S(\alpha)^*)},
\end{equation}
where the overline denotes complex conjugation in $\C$. This is necessary to produce a $*$-map with the correct compatibility properties with the other maps \cite[Section 1.2.7]{klimyk1997quantum}.

\begin{example}
    The primary concrete example studied in this text is the dual group algebra of a finite group $G$. This is the dual Hopf algebra of the group algebra $\C[G]$.
    The Hopf algebra $\mathbb{C}[G]$ is the algebra of formal linear combination of group elements, equipped with the maps
    \begin{equation}\label{eq:grp_hopf_data}
        \Delta(g) = g \otimes g, \quad \epsilon(g) = 1, \quad S(g) = g^{-1}, \quad g^* = g^{-1},
    \end{equation}
    extended linearly or antilinearly as necessary. Taking the duals of these maps then provides the maps used in the main text:
    \begin{equation}
        p_g \cdot p_h = \delta_{g,h} p_g, \quad \Delta(p_g) = \sum_{h,k} \delta_{g,hk} p_h \otimes p_k, \quad \epsilon(p_g) = \delta_{1,g}, \quad S(p_g) = p_{g^{-1}}, \quad p_{g}^* = p_g.
    \end{equation}
    Note that the $*$-structure acts on general functions as
    \begin{equation}
        f^* = \sum_{g \in G}\overline{f(g)} (p_g)^* = \sum_{g \in G}\overline{f(g)} p_g,
    \end{equation}
    meaning that a function is sent to its complex conjugate.
\end{example}

\subsection{Representation by duals}
A Hopf algebra always carries a left and right representation on itself by multiplication
\begin{equation}\label{eq:rep_self_on_self}
    L_\beta (\alpha) = \beta \alpha, \quad R_\beta (\alpha) = \alpha S(\beta), \quad \alpha, \beta \in H.
\end{equation}
As in the main text, we introduce an antipode factor for the right action so to obtain a representation (i.e.\ a left module) in the variable $\beta$. The dual space also carries a left and right $H$-representation by pre-composition with the left and right actions\footnote{It is helpful to think of $H^\vee$ as the analogue of $L^2(G)$, for $G$ a compact Lie group. The actions above are then analogous to the natural left and right $G$-representations on $L^2(G)$.}
\begin{equation}
    (L_\beta f)(\alpha) = f(\alpha\beta), \quad (R_\beta f)(\alpha) = f(S(\beta) \alpha), \quad f \in H^\vee.
\end{equation}
This representation can also be thought of as being induced by the coproduct of $H^\vee$.
Concretely, the coproduct of $H^{\vee}$ is defined by,
\begin{equation}
    f^{(1)}(\alpha)f^{(2)}(\beta) = f(\alpha\beta),
\end{equation}
and therefore one may rewrite the left- and right- actions of $H$ on $H^{\vee}$ as
\begin{equation}
    L_\beta f = f^{(2)}(\beta)f^{(1)}, \quad R_\beta f = f^{(1)}(S(\beta))f^{(2)}, \quad f \in H^\vee.
\end{equation}
Treating the initial Hopf algebra $H$ as the dual of $H^{\vee},$ one obtains a natural left and right representation of $H^{\vee}$ on $H$ as
\begin{equation}\label{eq:rep_dual_on_self}
    L_f (\alpha) = \alpha^{(1)}f(\alpha^{(2)}),\quad R_f (\alpha) = S(f)(\alpha^{(1)})\alpha^{(2)}, \quad f \in H^\vee.
\end{equation}
For the Hopf-algebraic double models studied in the main text, equations \eqref{eq:rep_self_on_self} and \eqref{eq:rep_dual_on_self} are the four natural representations of $H$ and $H^{\vee}$ acting on the edge Hilbert spaces of the model.

\subsection{Peter-Weyl factorization}\label{app:peter-weyl}
A finite-dimensional C$^*$-Hopf algebra lacks a canonical choice of basis. Instead, it admits two natural decompositions.

The first decomposition, called the Wedderburn-Artin decomposition, follows from the C$^*$-algebra structure of the Hopf algebra. The simplest kind of C$^*$-algebra is a matrix algebra $\mathrm{Mat}(d,\mathbb{C}),$ which has a unique irreducible representation $\mathbb{C}^d$. More generally, any finite-dimensional C$^*$-algebra can be written as a direct sum of matrix algebras, and consequently, can be expressed as a direct sum over the endomorphism algebras of its irreducible representations:
\begin{equation}
    \begin{aligned}
        H &\cong \bigoplus_x \End(W_x), \quad H^{\vee} \cong \bigoplus_a \End(V_a),
    \end{aligned}
\end{equation}
where each $W_x$ ($V_a$) is an irreducible representation of $H$ ($H^{\vee}$). This decomposition of algebras is the Wedderburn-Artin decomposition. Representationally, this can be viewed as a decomposition of a C$^*$ algebra into irreps of its natural left and right actions on itself.

The second natural decomposition is called the Peter-Weyl decomposition. This decomposition follows from the so called C$^*$\textit{-coalgebra} structure of the Hopf algebras; that is, the data of the coproduct and counit.
Alternatively, the Peter-Weyl decomposition can be viewed as arising from the left and right actions of $H$ and $H^\vee$ on \textit{one another}. Here we will review this latter perspective.

The Peter-Weyl decomposition takes takes the form,
\begin{equation}
    H \cong \bigoplus_a \End(V_a), \quad  H^{\vee} \cong \bigoplus_x \End(W_x),
\end{equation}
with these symbols ``$\cong$'' denoting isomorphisms of $H^\vee$-representations and $H$-representations, respectively.\footnote{Use of the term ``Peter-Weyl'' for these decompositions is motivated by analogy to the case of a compact Lie group, where the Peter-Weyl theorem decomposes the Hilbert space $L^2(G)$ as 
\begin{equation*}
    L^2(G) \cong \bigoplus_a \End(V_a) \cong \bigoplus_a V_a^\vee \otimes V_a.
\end{equation*}
} In particular, unlike the Wedderburn-Artin decomposition, the congruency symbols above do \textit{not} denote algebra isomorphisms.\footnote{Instead, they are isomorphisms of \textit{coalgebras}. For further discussion see for example \cite{montgomery1993hopf}.}

The Peter-Weyl decomposition is constructed for the dual Hopf algebra as follows.\footnote{The decomposition for the Hopf algebra itself follows from this analysis using the identity $H = (H^\vee)^\vee$.} For an irreducible $H$-representation $W_x$, define the functionals
\begin{equation}
    M_x(\alpha) = \tr_x(M_x \rho_x(\alpha)), \quad M_x \in \End(W_x).
\end{equation}
This produces a map
\begin{equation}
    \End(W_x) \to H^\vee,
\end{equation}
which one can verify is an inclusion. The spaces $\End(W_x)$ naturally possess a left and right $H$-action induced by the representation:
\begin{equation}
    L_\beta M_x = \rho_x(\beta)M_x, \quad R_\beta M_x = M_x\rho_x(S(\beta)).
\end{equation}
If follows that the inclusion of $\End(W_x)$ into $H^{\vee}$ is equivariant with respect to both actions, since $H$ acts on functionals as
\begin{equation}
    (L_\beta M_x)(\alpha) = M_x(\alpha \beta) = \tr_x(M_x \rho_x(\alpha\beta)) = \tr_x(\rho_x(\beta)M_x\rho_x(\alpha)) = (\rho_x(\beta)M_a)(\alpha),
\end{equation}
and similarly for the right action. Taken together, we therefore have an inclusion of representations,
\begin{equation} \label{eq:end-Hvee-map}
    \bigoplus_x \End(W_x) \to H^\vee.
\end{equation}
Finally, because $H$ and $H^{\vee}$ have the same dimension, and because the Wedderburn-Artin decomposition implies the identity
\begin{equation}
    \dim(H) = \sum_x \dim(\End(V_x)),
\end{equation}
the map \eqref{eq:end-Hvee-map} is surjective and hence an isomorphism.

Applying the Peter-Weyl decomposition to a Hopf algebra $H$, one obtains
\begin{equation}
    H \cong \bigoplus_a \End(V_a),
\end{equation}
where each $V_a$ is an irreducible representation of $H^{\vee}$. A choice of orthonormal basis $|j\rangle_a$ on each irreducible representation $V_a$ produces a natural basis for $\End(V_a),$ 
\begin{equation}
    Z^a_{ij} \in \End(V_a), \quad Z_{ij}^a |k\rangle_a = \delta_{i,k} |j\rangle_a,
\end{equation} 
or simply
\begin{equation}
    Z^a_{ij} = |j\rangle\langle i|_a.
\end{equation}
By embedding $Z^a_{ij}$ into $H$ via the Peter-Weyl decomposition, we may act with $Z^{a}_{ij}$ on any element of $\lambda \in H^{\vee}$ to obtain 
\begin{equation}
    Z_{ij}^a(\lambda) = \tr_a(Z_{ij}^a \rho_a(\lambda)) = \langle i | \rho_a(\lambda) |j\rangle_a = \rho(\lambda)_{ij}^a.
\end{equation}
These basis elements have an especially simple coproduct.
\begin{proposition}
    The coproduct of the $Z^a_{ij}$ basis elements takes the form
    \begin{equation}
        \Delta(Z_{ij}^a) = \sum_{k} Z^a_{ik} \otimes Z_{k j}^a.
\end{equation}
\end{proposition}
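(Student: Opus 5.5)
The plan is to use the defining property of the coproduct on $H$, namely that it is dual to the product on $H^\vee$: for $\alpha\in H$ and $\lambda,\lambda'\in H^\vee$,
\begin{equation}
    (\lambda\lambda')(\alpha) = \lambda(\alpha^{(1)})\,\lambda'(\alpha^{(2)}).
\end{equation}
Since the pairing between $H$ and $H^\vee$ is nondegenerate, an element of $H\otimes H$ is determined by its values on all $\lambda\otimes\lambda'$. So it will suffice to check that both sides of the claimed formula give the same value on every pair $\lambda\otimes\lambda'$.

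First compute the left-hand side. Using the characterization $Z^a_{ij}(\lambda) = \rho(\lambda)^a_{ij}$ derived just above, and the fact that $\rho_a$ is an algebra homomorphism $H^\vee\to\End(V_a)$, we get
\begin{equation}
    \Delta(Z^a_{ij})(\lambda\otimes\lambda') = Z^a_{ij}(\lambda\lambda') = \rho(\lambda\lambda')^a_{ij} = \sum_k \rho(\lambda)^a_{ik}\,\rho(\lambda')^a_{kj}.
\end{equation}
Here the last step is just matrix multiplication in the chosen orthonormal basis of $V_a$.

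Next evaluate the right-hand side on the same pair. This gives $\sum_k Z^a_{ik}(\lambda)\,Z^a_{kj}(\lambda') = \sum_k \rho(\lambda)^a_{ik}\,\rho(\lambda')^a_{kj}$. The two expressions agree for all $\lambda,\lambda'$, and nondegeneracy of the pairing then gives the proposition.

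The only point requiring care is the index convention. With $Z^a_{ij}=|j\rangle\langle i|$, the evaluation $\tr_a(Z^a_{ij}\rho_a(\lambda)) = \langle i|\rho_a(\lambda)|j\rangle$ must be used consistently. I will check this once against the identity $(\lambda\lambda')(\alpha)=\lambda(\alpha^{(1)})\lambda'(\alpha^{(2)})$ to make sure the order of the factors is not reversed, which would give $\sum_k Z^a_{kj}\otimes Z^a_{ik}$ instead. As a sanity check, the group case gives $\Delta(Z^a_{ij}) = \sum_h \rho(h)^a_{ij}\, h\otimes h$, which indeed equals $\sum_k Z^a_{ik}\otimes Z^a_{kj}$ when one expands $\rho(h)=\rho(h)\rho(\cdot)$ appropriately.
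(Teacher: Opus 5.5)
Your argument is correct and is essentially the paper's: both test $\Delta(Z^a_{ij})$ against $H^\vee\otimes H^\vee$ using $\Delta(Z)(\lambda\otimes\lambda')=Z(\lambda\lambda')$. The only difference is that the paper evaluates on the Wedderburn--Artin matrix units $\Theta^b_{k\ell}$ of $H^\vee$, while you evaluate on arbitrary pairs and use that $\rho_a$ is a homomorphism, which is slightly more economical.

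Your closing group-case sanity check is garbled, however. The element $\sum_h\rho(h)^a_{ij}\,h\otimes h$ is the coproduct in $\mathbb{C}[G]$ of an element that is not a $Z$-basis element there (the $Z$-basis of $\mathbb{C}[G]$ is just the group basis). The intended check is for $H=\mathbb{C}[G]^\vee$ with $Z^a_{ij}=\sum_h\rho(h)^a_{ij}P_h$. There $\Delta(Z^a_{ij})=\sum_{h,k}\rho(hk)^a_{ij}P_h\otimes P_k$, which factors as claimed because $\rho(hk)=\rho(h)\rho(k)$.
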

\begin{proof}
    We will show this by direct computation. The important property that the operators $Z_{ij}^a$ satisfy is that they are dual to the basis of \textit{idempotes} in the dual Hopf algebra. This basis is the following. 
    
    From the Wedderburn-Artin decomposition, the dual Hopf algebra can be decomposed as
    \begin{equation}
        H^\vee \cong \bigoplus_a \End(V_a).
    \end{equation}
    Having fixed the choice of orthonormal basis on each irreducible representation in defining $Z_{ij}^a$, using this same basis there is a distinguished set of operators in the dual Hopf algebra
    \begin{equation}
        \Theta^a_{ij} \in H^\vee,
    \end{equation}
    defined to be represented in an irreducible representation $V_a$ as 
    \begin{equation}
        \rho_a(\Theta^a_{ij}) = \ket{i}\bra{j}_a \in \End(V_a).
    \end{equation}
    This is a basis of \textit{orthogonal idempotes} in the sense that they satisfy,
    \begin{equation}
        \Theta^a_{ij}\Theta^a_{k\ell} = \delta^{ab}\delta_{jk}\Theta^a_{i\ell}.
    \end{equation}

    Unpacking definitions, we have
    \begin{equation}
        Z_{ij}^a(\Theta_{k\ell}^b) = \tr_a(\ket{j}\bra{i}_a \rho_a(\Phi^b_{k\ell})) = \delta^{ab}\tr_a(\ket{j}\bra{i}_a \ket{k}\bra{\ell}_a) = \delta^{ab}\delta_{ik}\delta_{j\ell}.
    \end{equation}
    The two bases are therefore dual. Using this fact, the coproduct can then be computed directly:
    \begin{equation}
        \Delta(Z^a_{ij}) (\Theta^b_{k \ell} \otimes \Theta^c_{m n})
            = Z^a_{ij} (\Theta^b_{k \ell} \Theta^c_{m n})
            = \delta^{b c} \delta_{\ell m} Z^a_{ij}(\Theta^b_{k n})
            = \delta^{bc} \delta^{ab} \delta_{\ell m} \delta_{i k} \delta_{j n}.
    \end{equation}
    This implies the proposition, as one can separately compute
    \begin{equation}
        (\sum_{p} Z^a_{ip} \otimes Z^a_{pj}) (\Theta^b_{k \ell} \otimes \Theta^c_{m n})
        = \sum_p \delta^{a b} \delta^{a c} \delta_{i k} \delta_{p \ell} \delta_{p m} \delta_{j n}
        = \delta^{ab} \delta^{bc} \delta_{i k} \delta_{\ell m} \delta_{j n},
    \end{equation}
    which matches.
\end{proof}

This proposition means that the basis elements $Z_{ij}^a$ form a so-called \textit{matrix comatrix algebra}. It follows from this observation that the dual Hopf algebra acts on these elements, using the definition \eqref{eq:rep_dual_on_self}, as
\begin{equation}
    \begin{aligned}
        L_\lambda \ket{Z_{ij}^a} &= \sum_k \lambda(Z_{k j}^a)\ket{Z_{ik}^a} = \sum_k \rho(\lambda)_{kj}^a \ket{Z_{ik}^a} \\
        R_\lambda \ket{Z_{ij}^a} &= \sum_k \lambda(Z_{ik}^a)\ket{Z_{k j}^a} = \sum_k \rho(\lambda)_{ik}^a \ket{Z_{kj}^a}.
    \end{aligned}
\end{equation}
Another useful property of the basis elements $Z_{ij}^a$ is that they satisfy the identity:
\begin{equation}\label{eq:unitary_coalg}
    S(Z_{ij}^a)^* = Z_{ji}^a,
\end{equation}
which follows from definition \eqref{eq:dual_star_map} and property \eqref{eq:CPT_map} and the manipulation 
\begin{equation}
    S(Z_{ij}^a)^*(\lambda) = \overline{S(Z_{ij}^a)(S(\lambda)^*)} = \overline{\rho(\lambda^*)_{ij}^a} = \rho(\lambda)_{ji}^a = Z_{ji}^a(\lambda), \quad \lambda \in H^\vee.
\end{equation}
This property means that the basis elements $Z_{ij}^a$ form a \textit{unitary comatrix algebra}. 

\subsection{Integrals and Hilbert space structure} \label{app:haar}
A finite-dimensional C$^*$-Hopf algebra has a distinguished element called its \textit{Haar integral}. This is the unique element $h \in H$ satisfying
\begin{equation}
    \alpha h = h \alpha = \epsilon(\alpha) h, \quad \epsilon(h) = 1, \quad \alpha \in H.
\end{equation}
The Haar integral on the dual Hopf algebra is also called the \textit{Haar measure} on $H$, and we denote it in this text by $\mu \in H^\vee.$

The Haar integral on a general Hopf algebra $H$ plays the same role as integration over the compact-$G$-Haar measure for a function algebra $L^2(G)$.
Namely, $h$ defines an invariant integration of functions in $H^\vee$ via evaluation:
\begin{equation}
    \int : H^\vee \to \C, \quad \int f = f(h).
\end{equation}
This integration is invariant in the sense that one has
\begin{equation}
    h(L_\alpha f) = h(R_\alpha f) = \epsilon(\alpha)h(f).
\end{equation}
For the $L_{\alpha}$ action, this follows from the chain of identities
\begin{equation}
    h(L_\alpha f) = f^{(2)}(\alpha)h(f^{(1)}) = f^{(1)}(h)f^{(2)}(\alpha) = f(h\alpha) = \epsilon(\alpha) f(h) = \epsilon(\alpha)h(f),
\end{equation}
and a similar chain holds for $R_{\alpha}$.
The term ``invariant'' is used in the sense that left and right multiplication produce only the singlet action of $H$. 

Thought of as an invariant integration on $H^\vee$, the Haar integral is used to define an inner product on $H^\vee$ as
\begin{equation}
    \langle f, g \rangle := h(f^* g), \quad f,g \in H^\vee.
\end{equation}
For the special case of a finite group, the Haar integral takes the form
\begin{equation}
    h = \frac{1}{|G|} \sum_g g,
\end{equation}
which is simply the invariant normalized integral on $G$. More generally, we have the following proposition.
\begin{proposition}
The Haar integral can be written as a weighted sum over characters in a finite-dimensional C$^*$-Hopf algebra,
    \begin{equation}\label{eq:haar_int_decomp}
        h = \frac{1}{|H|} \sum_a \dim(V_a)\chi_a,
    \end{equation}
    where $V_a$ are irreducible representations of $H^\vee$.
\end{proposition}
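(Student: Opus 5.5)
Set $h' := \frac{1}{|H|}\sum_a \dim(V_a)\chi_a$. The plan is to show that $h'$ satisfies the characterizing properties of the Haar integral, namely $\alpha h' = h'\alpha = \epsilon(\alpha)h'$ and $\epsilon(h')=1$. Uniqueness of the Haar integral then gives $h = h'$. Rather than work with multiplication in $H$ directly, I would exploit the duality $H \cong (H^\vee)^\vee$ and test elements of $H$ against functionals $\lambda\in H^\vee$. By the characterizing property of the Peter--Weyl basis, $\lambda(Z^a_{ij}) = \rho(\lambda)^a_{ij}$, so
\begin{equation*}
    \lambda(h') = \frac{1}{|H|}\sum_a \dim(V_a)\,\tr_a\rho_a(\lambda).
\end{equation*}
In other words, $h'$ evaluates a functional by $\frac{1}{|H|}$ times its trace in the regular representation of $H^\vee$, since the regular representation decomposes as $\bigoplus_a V_a^{\oplus \dim V_a}$ by Wedderburn--Artin. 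So $h'$ is the normalized regular trace on $H^\vee$, namely $\lambda(h') = \frac{1}{|H|}\Tr_{H^\vee}(L_\lambda)$.

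For normalization: $\epsilon(h')$ is $h'$ evaluated on the unit $1_\vee$ of $H^\vee$, using \eqref{eq:dual_unit_counit}. This gives $\frac{1}{|H|}\sum_a \dim(V_a)^2 = 1$ by the dimension count $|H| = \sum_a \dim(V_a)^2$.

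For invariance, I would pair $\alpha h'$ with $\lambda$, which gives $\lambda^{(1)}(\alpha)\lambda^{(2)}(h')$. The goal is then to show that
\begin{equation*}
    \lambda^{(1)}\,\Tr_{H^\vee}(L_{\lambda^{(2)}}) = \Tr_{H^\vee}(L_\lambda)\,1_\vee,
\end{equation*}
i.e.\ that the regular trace is a left (and right) integral in $(H^\vee)^\vee = H$. This is the step I expect to be the main obstacle. The cleanest route I would try first is Larson--Radford: for a semisimple, cosemisimple Hopf algebra (which finite-dimensional C$^*$-Hopf algebras are, with $S^2=\id$ as noted in appendix \ref{app:hopf_def}), the regular character of $H^\vee$ is a two-sided integral of $H$, nonzero with value $|H|$ on $1_\vee$.

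If I wanted to avoid citing that result, I would argue directly. The map $\lambda\mapsto \lambda^{(1)}\otimes\lambda^{(2)}$ can be conjugated into $\lambda\mapsto 1\otimes \lambda$ on $H^\vee\otimes H^\vee$, via the Galois/fusion isomorphism $x\otimes y\mapsto x^{(1)}\otimes x^{(2)}y$, with inverse built from $S$. Taking the trace on the second factor then shows that the regular representation absorbs any tensor factor: $V\otimes H^\vee \cong (\dim V)\,H^\vee$. This gives exactly the invariance identity above, with $S^2=\id$ making the left and right versions agree.

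Finally, I would translate the regular trace back into the claimed form. The functional $\lambda\mapsto\Tr(L_\lambda)$ equals $\sum_a \dim(V_a)\,\chi_a$, where $\chi_a=\sum_j Z^a_{jj}$ is viewed in $H$ via the Peter--Weyl embedding. This holds because $\chi_a(\lambda) = \tr_a\rho_a(\lambda)$ and $V_a$ appears in the regular representation with multiplicity $\dim V_a$. Since both $h$ and $h'$ are then normalized two-sided integrals, uniqueness (appendix \ref{app:haar}) gives \eqref{eq:haar_int_decomp}.
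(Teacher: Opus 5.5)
Your skeleton is the paper's. The paper proves the dual statement for $\mu$ in four steps: it identifies the candidate with $\frac{1}{|H|}\chi_{\reg}$, checks normalization via $\sum \dim^2 = |H|$, proves two-sided invariance, and invokes uniqueness. Working on the $H$ side and testing against $\lambda \in H^\vee$ is just the mirror image of that. If you simply quote the standard fact (Larson--Radford) that the regular character of a semisimple Hopf algebra with $S^2=\id$ is a two-sided integral of its dual, your proof is complete. Be aware, though, that this citation is essentially the entire content of the proposition, and the paper proves it rather than citing it.

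Your self-contained fallback has a genuine gap. The absorption isomorphism $V\otimes H^\vee_{\reg}\cong(\dim V)\,H^\vee_{\reg}$ (your Galois map, which is the paper's $\Phi$) exists in \emph{every} finite-dimensional Hopf algebra. Taking full traces of the conjugated operator, it only yields $\alpha h'=\epsilon(\alpha)h'$ for $\alpha$ in the span of the characters $\chi_a$. That span is a proper subspace of $H$ unless $H^\vee$ is commutative.

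The identity you need, $\lambda^{(1)}\tr(L_{\lambda^{(2)}})=\tr(L_\lambda)\,1_\vee$, is an identity in $H^\vee$. It must therefore be tested against arbitrary matrix coefficients $\tr_V(A\,\rho_V(\cdot))$ with $A\in\End(V)$. A partial trace over the regular factor is not invariant under conjugation by the non-product map $\Phi$, so you cannot read it off from the isomorphism.

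The missing step is the paper's computation. The map $\Phi^{-1}(1\otimes A)\Phi$ sends $\beta\otimes v\mapsto \beta^{(1)}\otimes S(\beta^{(2)})A\beta^{(3)}v$. Its partial trace over $V$ collapses to $\tr(A)\,\id$ only after cyclicity and the identity $\beta^{(3)}S(\beta^{(2)})=\epsilon(\beta^{(2)})$. That identity is exactly where $S^2=\id$ is used essentially, not merely to make the left and right versions agree.

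Sweedler's four-dimensional Hopf algebra, taken as $H^\vee$ with $g^2=1$, $x^2=0$, $xg=-gx$ and $\Delta(x)=x\otimes 1+g\otimes x$, is a good sanity check. There the absorption isomorphism still holds but $S^2\neq\id$. One finds $x^{(1)}\tr(L_{x^{(2)}})=4x\neq 0=\tr(L_x)\,1$, so the regular character is not an integral. Your sketch as written would wrongly prove that it is.
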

\noindent Here we give a representation-theoretic proof of this result.  Conceptually, it will be slightly easier to prove this proposition for the Haar measure, so we will prove the dual statement
\begin{equation}\label{eq:Haar_measure_decomp}
    \mu = \frac{1}{|H|} \sum_x \dim(W_x) \chi_x,
\end{equation}
with $W_x$ an irreducible representation of $H$. 

\begin{proof}
    To prove that $\mu$ as defined in \eqref{eq:Haar_measure_decomp} is the Haar measure, we must show
    \begin{equation}\label{eq:pf_haar_measure}
	   \mu f = f \mu = \epsilon_{\vee}(f) \mu, \quad f \in H^\vee,
    \end{equation}
    and
    \begin{equation}
	   \epsilon_{\vee}(\mu) = 1.
    \end{equation}
    The second condition is satisfied by the choice of overall normalization factor,
    \begin{equation}
        \epsilon_{\vee}(\mu) = \frac{1}{|H|}\sum_x \dim(W_x) \epsilon_{\vee}(\chi_x) = \frac{1}{|H|}\sum_x \dim(W_x)^2 = 1,
    \end{equation}
    where we've used that by definition \eqref{eq:dual_unit_counit}, the counit of a character is 
    \begin{equation}
        \epsilon_{\vee}(\chi_x) = \chi_x(1) = \dim(W_x).
    \end{equation}
    It remains therefore to prove the conditions in equation \eqref{eq:pf_haar_measure}.
    To do so, we will consider the Peter-Weyl decomposition of $H^\vee$ and show that \eqref{eq:pf_haar_measure} holds on each summand individually. 
    
    Let $W_x$ be an irreducible representation of $H$. Our goal is to show
    \begin{equation}
        \mu M_x = M_x \mu = \mu \epsilon_\vee(M_x), \quad M_x \in \End(W_x).
    \end{equation}
    Note first that, by the definition of the Peter-Weyl decomposition, the counit evaluated on an endomorphism is its trace:
    \begin{equation}
        \epsilon_\vee(M_x) = M_x(1) =\tr_x(M_x).
    \end{equation}
    We are therefore aiming to show
    \begin{equation}\label{eq:exp_to_prove}
        (\mu M_x)(\alpha) = (M_x \mu)(\alpha) = \mu(\alpha) \tr_x(M_x), \quad \alpha \in H,
    \end{equation}
    which we will accomplish by direct computation.    
    
    Our primary tools will be some simple aspects of representation theory. The key point is that our candidate expression for $\mu$ is the (normalized) character of the regular representation; that is, the $H$ representation induced by the left action of $H$ on itself:\footnote{When emphasizing its representational structure, we will denote the regular representation by $W_{\reg}$.}
    \begin{equation}
        \mu = \frac{1}{|H|}\chi_{\reg}.
    \end{equation}
    In \eqref{eq:exp_to_prove} we can therefore write the final term as\footnote{To ease notation, we will not distinguish $\alpha$ from its representing matrix $\rho(\alpha)$ when the representation is clear from context.}
    \begin{equation}
        \mu(\alpha) \tr_x(M_x) = \frac{1}{|H|}\chi_{\reg}(\alpha)\tr_x(M_x) = \frac{1}{|H|}\tr_{\reg}(\alpha)\tr_{x}(M_x) = \frac{1}{|H|}\tr_{\reg \otimes x}(\alpha \otimes M_x).
    \end{equation}
    Similarly, unpacking definitions, the first two expressions in \eqref{eq:exp_to_prove} can also be written as traces:
    \begin{align}\label{eq:trace_id_one}
    \begin{split}
        (\mu M_x)(\alpha) = \mu(\alpha^{(1)})M_x(\alpha^{(2)}) & = \frac{1}{|H|}\tr_{\reg}(\alpha^{(1)})\tr_{x}(M_x \alpha^{(2)}) \\ & = \frac{1}{|H|}\tr_{\reg \otimes x}((1 \otimes M_x)(\alpha^{(1)} \otimes \alpha^{(2)}))
    \end{split}
    \end{align}
    and
    \begin{align}\label{eq:trace_id_two}
        \begin{split}
        (M_x\mu)(\alpha) & = M_x(\alpha^{(1)})\mu(\alpha^{(2)}) = \frac{1}{|H|}\tr_{\reg}(\alpha^{(2)})\tr_{x}(M_x \alpha^{(1)}) \\ & = \frac{1}{|H|}\tr_{\reg \otimes x}((1 \otimes M_x)(\alpha^{(2)} \otimes \alpha^{(1)})).
        \end{split} 
    \end{align}
    We therefore have two trace identities to prove in the representation $W_{\reg} \otimes W_x$.
    
    To show the equality of the three above expressions, we will make use of the following special property of the regular representation: for any choice of irreducible representation $W_x$, the regular representation satisfies
    \begin{equation}
        W_{\reg} \otimes W_x \cong \bigoplus_{i = 1}^{\dim(W_x)} W_{\reg}.
    \end{equation}
    In fact, there is a natural choice of such an isomorphism. This is defined as a map
    \begin{equation}
        \Phi: W_{\reg} \otimes (W_x)_{\triv} \to W_{\reg} \otimes W_x.
    \end{equation}
    Here $(W_x)_{\triv}$ denotes the vector space $W_x$ equipped with the trivial $H$-representation, meaning that a Hopf algebra element acts on an element of $W_{\reg}\otimes (W_{x})_{\triv}$ as\footnote{We will use $\rho_{x'}(\alpha)$ to denote the element $\alpha$ acting on the vector space $W_x$ equipped with the trivial representation.}
    \begin{equation}
        \rho_{\reg \otimes x'}(\alpha) (\beta \otimes v) = \rho_{\reg}(\alpha^{(1)})\beta \otimes \rho_{x'}(\alpha^{(2)})v = \alpha^{(1)}\beta \otimes \epsilon(\alpha^{(2)})v = \alpha\beta \otimes v,
    \end{equation}
    where for the moment we will restore explicit representation labels. The isomorphism and its inverse are defined to act on elements as
    \begin{equation}
        \Phi(\beta \otimes v) = \beta^{(1)} \otimes \rho_x(\beta^{(2)})v, \quad \Phi^{-1}(\beta \otimes v) = \beta^{(1)} \otimes \rho_x(S(\beta^{(2)}))v.
    \end{equation}
    Importantly, this map is equivariant: in one direction,
    \begin{equation}
        \Phi(\rho_{\reg \otimes x'}(\alpha) (\beta \otimes v)) = \Phi(\alpha\beta \otimes v) = (\alpha\beta)^{(1)} \otimes \rho_{x}((\alpha\beta)^{(2)})v,
    \end{equation}
    and in the other,
    \begin{multline}
        \rho_{\reg \otimes x}(\alpha) \Phi(\beta \otimes v) = \rho_{\reg \otimes x}(\alpha) (\beta^{(1)} \otimes \rho_x(\beta^{(2)})v) \\ = \rho_{\reg}(\alpha^{(1)})\beta^{(1)} \otimes \rho_x(\alpha^{(2)})\rho_x(\beta^{(2)})v = (\alpha\beta)^{(1)} \otimes \rho_x((\alpha\beta)^{(2)})v,
    \end{multline}
    which are equal. Therefore,
    \begin{equation}
        \rho_{\reg \otimes x}(\alpha) \Phi = \Phi \rho_{\reg \otimes x'}(\alpha),
    \end{equation}
    and so this map is indeed an isomorphism of representations. 

    We can use the isomorphism $\Phi$ to compute the trace in \eqref{eq:trace_id_one} and \eqref{eq:trace_id_two}. Equivariance of $\Phi$ provides an equality of operators,
    \begin{equation}
        \rho_{\reg}(\alpha^{(1)}) \otimes \rho_x(\alpha^{(2)}) = \Phi(\rho_{\reg}(\alpha) \otimes 1)\Phi^{-1}.
    \end{equation}
    which allows for the first trace to be rewritten as\footnote{Again now suppressing explicit representation labels for brevity.}
    \begin{equation}
        \begin{aligned}
            \frac{1}{|H|}\tr_{\reg \otimes x}((1 \otimes M_x)(\alpha^{(1)} \otimes \alpha^{(2)})) &= \frac{1}{|H|}\tr_{\reg \otimes x}((1 \otimes M_x) \Phi (\alpha \otimes 1) \Phi^{-1}) \\
            &= \frac{1}{|H|}\tr_{\reg \otimes x'}(\Phi^{-1} (1 \otimes M_x) \Phi (\alpha \otimes 1)),
        \end{aligned}
    \end{equation}
    where the latter trace is now taken in the representation $W_{\reg} \otimes (W_{x})_{\triv}$. Unpacking definitions, the conjugated $1 \otimes M_x$ acts on an element as,
    \begin{equation}
        \Phi^{-1}(1 \otimes M_x)\Phi(\beta \otimes v) = \beta^{(1)} \otimes S(\beta^{(2)})M_x \beta^{(3)}v.
    \end{equation}
    When taking the trace over $(W_x)_{\triv}$ first, we will therefore get the total operator on $W_{\reg}$:
    \begin{equation}
        \beta \mapsto \alpha \beta^{(1)} \tr_x(S(\beta^{(2)})M_x \beta^{(3)}), \quad \beta \in W_{\reg}.
    \end{equation}
    By cyclicity of the trace, this is the element
    \begin{equation}
        = \alpha \beta^{(1)} \tr_x(\beta^{(3)}S(\beta^{(2)})M_x).
    \end{equation}
    Moreover, since in a finite dimensional C$^*$-Hopf algebra, the antipode is involutive, we have
    \begin{equation}
        \beta^{(3)}S(\beta^{(2)}) = S(\beta^{(2)}S(\beta^{(3)})) = S(\epsilon(\beta^{(2)})) = \epsilon(\beta^{(2)})S(1) = \epsilon(\beta^{(2)}).
    \end{equation}
    Therefore, this is the element,
    \begin{equation}
        = \alpha \beta^{(1)}\epsilon(\beta^{(2)}) \tr_x(M_x) = \alpha\beta \tr_x(M_x).
    \end{equation}
    Taking the remaining trace over the regular representation, we have therefore shown,
    \begin{equation}
        \frac{1}{|H|}\tr_{\reg \otimes x}((1 \otimes M_x)(\alpha^{(1)} \otimes \alpha^{(2)})) = \frac{1}{|H|}\tr_{\reg}(\alpha)\tr_{x}(M_x),
    \end{equation}
    which is precisely the equality that we were aiming to demonstrate. An identical argument holds for computing the second trace identity \eqref{eq:trace_id_two}. We have therefore demonstrated equation \eqref{eq:pf_haar_measure}, which proves that $\mu$ is indeed the Haar measure.
\end{proof}

Using equation \eqref{eq:haar_int_decomp} as an explicit expression for the Haar integral, the inner product on $H^\vee$ may be written as
\begin{equation}
    \langle f, g \rangle = \frac{1}{|H|}\sum_{x} d_x \chi_x(f^* g) = \frac{1}{|H|}\sum_{x,i,j} d_x f^*(Z_{ij}^x) g(Z_{ji}^x),
\end{equation}
which further simplifies to
\begin{equation}
\begin{aligned}
    & = \frac{1}{|H|}\sum_{x,i,j} d_x \overline{f(S(Z_{ij}^x)^*)} g(Z_{ji}^x) \\
    & = \frac{1}{|H|}\sum_{x,i,j} d_x \overline{f(Z_{ji}^x)} g(Z_{ji}^x).
\end{aligned}
\end{equation}
This final expression shows that this inner product is in fact the immediate generalization of the natural inner product on functions on a finite group.
\begin{corollary}
    The Haar inner product on a dual algebra $H^\vee$ can be concretely expressed as
    \begin{equation}
        \langle f, g \rangle = \frac{1}{|H|}\sum_{x,i,j} d_x \overline{f(Z_{ji}^x)} g(Z_{ji}^x),
    \end{equation}
    with $x$ ranging over irreducible representations of $H$ and $i,j$ being indices in an orthonormal basis of the representation.
\end{corollary}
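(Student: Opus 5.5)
The proof is a direct computation that reduces the Haar inner product to a sum over the matrix-coefficient basis of $H$. The first step is to expand the Haar integral using equation \eqref{eq:haar_int_decomp}. Applied to $H^\vee$, whose dual is $H$, this gives the Haar integral of $H^\vee$ as a weighted sum of characters of $H$-irreps; concretely, $\chi_x=\sum_i Z^x_{ii}\in (H^\vee)^\vee\cong H$. Evaluating on $f^*g$ and using that multiplication in $H^\vee$ is dual to the coproduct of $H$, we get
\begin{equation*}
    \langle f,g\rangle = \frac{1}{|H|}\sum_x d_x \sum_i (f^*g)(Z^x_{ii}) = \frac{1}{|H|}\sum_x d_x\sum_{i,j} f^*(Z^x_{ij})\, g(Z^x_{ji}).
\end{equation*}
Here I invoke the comatrix coproduct formula $\Delta(Z^x_{ii})=\sum_j Z^x_{ij}\otimes Z^x_{ji}$ from the Peter--Weyl proposition, applied with the roles of $H$ and $H^\vee$ exchanged. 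This exchange is legitimate because that proposition was proved for an arbitrary finite-dimensional C$^*$-Hopf algebra.

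The second step is to rewrite $f^*(Z^x_{ij})$ in terms of $f$ itself. By the definition \eqref{eq:dual_star_map} of the dual $*$-map, $f^*(Z^x_{ij})=\overline{f(S(Z^x_{ij})^*)}$. The unitary comatrix identity \eqref{eq:unitary_coalg} then gives $S(Z^x_{ij})^*=Z^x_{ji}$, so that
\begin{equation*}
    f^*(Z^x_{ij})=\overline{f(Z^x_{ji})}.
\end{equation*}
Substituting this into the sum yields exactly the stated formula.

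The main point needing care is this last step. I must check that \eqref{eq:unitary_coalg} holds with $S$ and $*$ taken in $H$ (rather than in $H^\vee$), applied to the Peter--Weyl basis built from irreps of $H^\vee$ with roles swapped. I also need the order of $S$ and $*$ not to matter, which follows from involutivity of $\gamma=S\circ *$ in \eqref{eq:CPT_map} together with $S^2=\id$. I would verify this by evaluating both sides on an arbitrary $\lambda\in H^\vee$, exactly as in the manipulation following \eqref{eq:unitary_coalg}. That computation uses only that $\rho_x$ is a unitary representation in an orthonormal basis, so $\overline{\rho(\lambda^*)_{ij}}=\rho(\lambda)_{ji}$. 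The orthonormality of the chosen bases is therefore the essential hypothesis.
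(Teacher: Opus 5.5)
Your proposal is correct and is essentially the paper's own argument. Both write $\langle f,g\rangle=h(f^*g)$ using the character expansion of the Haar integral, split $(f^*g)(Z_{ii})$ with the comatrix coproduct, and convert $f^*(Z_{ij})$ into $\overline{f(Z_{ji})}$ via \eqref{eq:dual_star_map} and \eqref{eq:unitary_coalg}; the only slip is a labeling one, since the element evaluated on $f^*g$ is the Haar integral $h\in H$ of $H$ (not of $H^\vee$), so the $Z_{ij}$ are matrix coefficients of irreducible $H^\vee$-representations lying in $H$ and no exchange of roles is needed (the statement itself carries the same loose label).
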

Another useful expression for the inner product of the basis elements $Z_{ij}^a$ is given in the following proposition.
\begin{proposition}
    The inner product of two basis elements $Z_{ij}^a, Z_{k\ell}^b$ takes the form
    \begin{equation}
        \langle Z_{ij}^a, Z_{k\ell}^b \rangle = \frac{\delta^{ab}}{d_a}\delta_{ik}\delta_{j\ell},
    \end{equation}
    where $d_a$ is the dimension of the irreducible $H^\vee$-representation $V_a$.
\end{proposition}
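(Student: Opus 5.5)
The plan is to compute directly from the definition of the inner product, $\langle f, g\rangle = h(f^* g)$, now applied with the roles of $H$ and $H^\vee$ exchanged. Here the $Z^a_{ij}$ live in $H = (H^\vee)^\vee$, and the relevant invariant integration is the Haar measure $\mu \in H^\vee$. Thus
\begin{equation*}
    \langle Z^a_{ij}, Z^b_{k\ell}\rangle = \mu\big((Z^a_{ij})^* Z^b_{k\ell}\big).
\end{equation*}
This is exactly the dual of the corollary just proven, with $x$ and $W_x$ replaced by $a$ and $V_a$. The first step is therefore to invoke that corollary in its dual form, obtained by applying the whole argument to $H^\vee$ in place of $H$ and using $(H^\vee)^\vee \cong H$:
\begin{equation*}
    \langle f, g\rangle = \frac{1}{|H|}\sum_{c,m,n} d_c\, \overline{f(\Theta^c_{nm})}\, g(\Theta^c_{nm}), \quad f,g \in H.
\end{equation*}
Here I evaluate elements of $H$ on the dual basis of $H^\vee$. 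Concretely, I would use the Wedderburn--Artin matrix units $\Theta^c_{mn}$ of $H^\vee$ from the coproduct proposition, since they are precisely the matrix-coefficient basis for the Peter--Weyl decomposition of $H^\vee$ viewed as the dual of $H$.

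The second step is to plug in the duality relation $Z^a_{ij}(\Theta^c_{mn}) = \delta^{ac}\delta_{im}\delta_{jn}$ established in the proof of the coproduct formula. Each sum then collapses to a single term, giving
\begin{equation*}
    \langle Z^a_{ij}, Z^b_{k\ell}\rangle = \frac{d_a}{|H|}\,\delta^{ab}\delta_{ik}\delta_{j\ell}.
\end{equation*}

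The main obstacle is normalization. The formula above carries a prefactor $d_a/|H|$, whereas the claim has $1/d_a$. This mismatch signals that the ``natural'' inner product in the claim uses a different normalization of the integration, or that the matrix units $\Theta$ and the comatrix elements $Z$ are related by a rescaling. I would therefore recompute carefully, avoiding the corollary, by working with $\mu$ directly via equation \eqref{eq:Haar_measure_decomp} applied to $H^\vee$. Written with the roles swapped, $\mu$ evaluated on an element of $H$ is $\frac{1}{|H|}\tr_{\reg}$. The quantity to compute is then
\begin{equation*}
    \frac{1}{|H|}\tr_{\reg}\big((Z^a_{ij})^* Z^b_{k\ell}\big).
\end{equation*}
To evaluate this, I would use the unitarity identity \eqref{eq:unitary_coalg}, $(Z^a_{ij})^* = S(Z^a_{ji})$, together with the Schur-type orthogonality obtained from the invariance $\mu(L_\lambda f) = \epsilon(\lambda)\mu(f)$. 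The standard route is as follows. Consider $T = \mu^{(1)}\big(|j\rangle\langle i|\big)\ldots$, i.e., average an arbitrary matrix $E_{jl}$ by the invariant integral against the representation matrices $\rho_a \otimes \overline{\rho_b}$. This produces an intertwiner $V_b \to V_a$, which by Schur's lemma is $\delta^{ab}\,c\,\mathrm{id}$. The constant $c$ is fixed by taking a trace and using $\epsilon(\mu)=1$, exactly as in the finite-group proof of $\frac{1}{|G|}\sum_g \rho(g)_{ij}\overline{\rho(g)_{kl}} = \delta_{ik}\delta_{jl}/d$. This yields $c = 1/d_a$.

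I expect the bookkeeping of antipodes and indices in this Schur step, and reconciling it with the normalization convention, to be the delicate part. I would resolve it by checking the group case $H=\C[G]^\vee$ first and fixing conventions to match.
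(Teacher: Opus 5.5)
Your fallback argument (Schur's lemma applied to an invariantly averaged matrix, with the constant fixed by a trace and $\epsilon_\vee(\mu)=1$) is correct in outline. In substance it is the paper's proof. The paper writes $\mu\big((Z^a_{ij})^*Z^b_{k\ell}\big)=\rho^a(S(\mu^{(1)}))_{ji}\,\rho^b(\mu^{(2)})_{k\ell}=\rho(\mu)^{\bar a\otimes b}_{ik,j\ell}$ and reads off the matrix elements of the singlet projector on $V_{\bar a}\otimes V_b$. That is exactly the quantity your Schur computation evaluates.

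Your first computation, however, is wrong, and the discrepancy is not a normalization ambiguity. The measure $\mu$ is fixed by $\epsilon_\vee(\mu)=1$, and $1/d_a$ is the right answer. The error is in the dualized corollary. For the inner product $\mu(f^*g)$ on $H$, the sum must run over irreducible representations $W_x$ of $H$, weighted by $\dim W_x$. The basis must be the Peter--Weyl basis of $H^\vee$, namely the matrix-coefficient functionals $\alpha\mapsto\rho^x(\alpha)_{mn}$.

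The Wedderburn matrix units $\Theta^c_{mn}$ are a different basis: they are labeled by irreducible representations of $H^\vee$ and are just the dual basis of the $Z^a_{ij}$. For $H=\C[G]^\vee$, the former are the group elements of $\C[G]$ and the latter are the matrix units of $\C[G]$. Your formula already fails for $f=g=1_H=Z^{\mathrm{triv}}$: it gives $1/|H|$ instead of $\mu(1_H)=\epsilon_\vee(\mu)=1$. With the correct basis, $Z^a_{ij}$ evaluates to $\rho^x(Z^a_{ij})_{mn}$, which does not collapse to Kronecker deltas. So the corollary gives no shortcut; in the group case it is literally the Schur orthogonality sum. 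Drop the plan to ``fix conventions to match''.

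To make the Schur step precise, use the antipode rather than complex conjugation. With $(Z^a_{ij})^*=S(Z^a_{ji})$,
\begin{equation*}
    \langle Z^a_{ij},Z^b_{k\ell}\rangle=\rho^a(S(\mu^{(1)}))_{ji}\,\rho^b(\mu^{(2)})_{k\ell}=T_{ki},\qquad T=\rho^b(\mu^{(2)})\,|\ell\rangle\langle j|\,\rho^a(S(\mu^{(1)})):V_a\to V_b .
\end{equation*}
The steps are then as follows.
\begin{enumerate}
    \item By cocommutativity of $\mu$, $T=\rho^b(\mu^{(1)})E\,\rho^a(S(\mu^{(2)}))$ with $E=|\ell\rangle\langle j|$.
    \item The identity $\lambda\mu=\epsilon_\vee(\lambda)\mu$ makes $T$ invariant under the adjoint action of $H^\vee$.
    \item The antipode axiom turns this invariance into $\rho^b(\lambda)T=T\rho^a(\lambda)$.
    \item Schur's lemma gives $T=\delta^{ab}c\,\mathrm{id}$.
    \item Taking the trace, $\tr T=\tr\big(\rho^a(S(\mu^{(1)})\mu^{(2)})E\big)=\epsilon_\vee(\mu)\tr E=\delta_{j\ell}$, so $c=\delta_{j\ell}/d_a$.
\end{enumerate}
The formula $\mu=\frac{1}{|H|}\chi_{\reg}$ is not needed. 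Compared with the paper's one-line appeal to ``$\mu$ is the singlet projector,'' this makes the value $1/d_a$ explicit without having to argue that $\rho(\mu)$ is the orthogonal projector onto the normalized invariant vector.
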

\begin{proof}
    By definition, one has
    \begin{equation}
        \begin{aligned}
        \langle Z_{ij}^a, Z_{k\ell}^b \rangle &= \mu( (Z_{ij}^a)^* Z_{k\ell}^b) = (Z_{ij}^a)^*(\mu^{(1)})Z_{k\ell}^b(\mu^{(2)}) = \rho(S(\mu^{(1)}))_{ji}^a \rho(\mu^{(2)})^b_{k\ell}.
        \end{aligned}
    \end{equation}
    In an orthonormal basis,
    \begin{equation}
        \rho(S(\lambda))_{ji}^a = \rho(\lambda)_{ij}^{\overline{a}}, \quad \lambda \in H^\vee,
    \end{equation}
    and so the inner product simplifies to
    \begin{equation}
        \langle Z_{ij}^a, Z_{k\ell}^b \rangle= \rho(\mu)_{ik,j\ell}^{\overline{a}\otimes b}.
    \end{equation}
    Since $\mu$ is the singlet projector, this matrix element will vanish unless $V_a = V_b$. Furthermore, for this case the matrix element takes the form,
    \begin{equation}
        \rho(\mu)_{ik,j\ell}^{\overline{a}\otimes b} = \frac{\delta^{ab}}{d_a}\delta_{ik}\delta_{j\ell}.
    \end{equation}
    This gives the stated expression, which is used in the main text.
\end{proof}
\noindent Finally, in the bulk of the text the notion of the neutral component of an operator,
\begin{equation}
    \cO_0 = h^{(1)}\cO S(h^{(2)}),
\end{equation}
plays an important role. Using the explicit form of the Haar integral just derived, this takes the more concrete form,
\begin{equation}
    \cO_0 = \frac{1}{|H|}\sum_a d_a Z_{ij}^a \cO S(Z_{ji}^a).
\end{equation}
Using the identity \eqref{eq:unitary_coalg}, this can be re-expressed as,
\begin{equation}
    \cO_0 = \frac{1}{|H|}\sum_a d_a Z_{ij}^a \cO (Z_{ij}^a)^*,
\end{equation}
which appears in the main text as equation \eqref{eq:neutral_comp_explicit}.

\newpage
\bibliographystyle{JHEP}
\bibliography{biblio.bib}

\providecommand{\href}[2]{#2}\begingroup\raggedright\begin{thebibliography}{10}

\bibitem{Haag-Kastler}
R.~Haag and D.~Kastler, {\it An algebraic approach to quantum field theory},
  {\em {Journal of Mathematical Physics}} {\bf 5} (1964), no.~7 848--861.

\bibitem{DHR-1}
S.~Doplicher, R.~Haag, and J.~E. Roberts, {\it {Local observables and particle
  statistics. 1}},  {\em Commun. Math. Phys.} {\bf 23} (1971) 199--230.

\bibitem{DHR-2}
S.~Doplicher, R.~Haag, and J.~E. Roberts, {\it {Local observables and particle
  statistics. 2}},  {\em Commun. Math. Phys.} {\bf 35} (1974) 49--85.

\bibitem{Fredenhagen:braids-1}
K.~Fredenhagen, K.-H. Rehren, and B.~Schroer, {\it {Superselection Sectors with
  Braid Group Statistics and Exchange Algebras. 1. General Theory}},  {\em
  Commun. Math. Phys.} {\bf 125} (1989) 201.

\bibitem{Fredenhagen:braids-2}
K.~Fredenhagen, K.-H. Rehren, and B.~Schroer, {\it {Superselection sectors with
  braid group statistics and exchange algebras. 2. Geometric aspects and
  conformal covariance}},  {\em Rev. Math. Phys.} {\bf 4} (1992), no.~spec01
  113--157.

\bibitem{Longo:subfactors}
R.~Longo and K.-H. Rehren, {\it {Nets of subfactors}},  {\em Rev. Math. Phys.}
  {\bf 7} (1995) 567--598, [\href{http://arxiv.org/abs/hep-th/9411077}{{\tt
  hep-th/9411077}}].

\bibitem{Naaijkens:index}
P.~Naaijkens, {\it {Kosaki-Longo index and classification of charges in 2D
  quantum spin models}},  {\em J. Math. Phys.} {\bf 54} (2013) 081901,
  [\href{http://arxiv.org/abs/1303.4420}{{\tt arXiv:1303.4420}}].

\bibitem{Naaijkens:subfactors}
P.~Naaijkens, {\it {Subfactors and quantum information theory}},  {\em Contemp.
  Math.} {\bf 717} (2018) 257--279,
  [\href{http://arxiv.org/abs/1704.05562}{{\tt arXiv:1704.05562}}].

\bibitem{Casini:order}
H.~Casini, M.~Huerta, J.~M. Magan, and D.~Pontello, {\it {Entropic order
  parameters for the phases of QFT}},  {\em JHEP} {\bf 04} (2021) 277,
  [\href{http://arxiv.org/abs/2008.11748}{{\tt arXiv:2008.11748}}].

\bibitem{Casini:completeness}
H.~Casini and J.~M. Magan, {\it {On completeness and generalized symmetries in
  quantum field theory}},  {\em Mod. Phys. Lett. A} {\bf 36} (2021), no.~36
  2130025, [\href{http://arxiv.org/abs/2110.11358}{{\tt arXiv:2110.11358}}].

\bibitem{Jones:boundary}
C.~Jones, P.~Naaijkens, D.~Penneys, and D.~Wallick, {\it {Local topological
  order and boundary algebras}},  \href{http://arxiv.org/abs/2307.12552}{{\tt
  arXiv:2307.12552}}.

\bibitem{Benedetti:modular}
V.~Benedetti, H.~Casini, Y.~Kawahigashi, R.~Longo, and J.~M. Magan, {\it
  {Modular invariance as completeness}},  {\em Phys. Rev. D} {\bf 110} (2024),
  no.~12 125004, [\href{http://arxiv.org/abs/2408.04011}{{\tt
  arXiv:2408.04011}}].

\bibitem{Shao:additivity}
S.-H. Shao, J.~Sorce, and M.~Srivastava, {\it {Additivity, Haag duality, and
  non-invertible symmetries}},  {\em JHEP} {\bf 08} (2025) 009,
  [\href{http://arxiv.org/abs/2503.20863}{{\tt arXiv:2503.20863}}].

\bibitem{Evans:fusion}
D.~E. Evans and C.~Jones, {\it {An operator algebraic approach to fusion
  category symmetry on the lattice}},
  \href{http://arxiv.org/abs/2507.05185}{{\tt arXiv:2507.05185}}.

\bibitem{Harlow:disjoint-additivity}
D.~Harlow, S.-H. Shao, J.~Sorce, and M.~Srivastava, {\it {Disjoint additivity
  and local quantum physics}},  \href{http://arxiv.org/abs/2509.03589}{{\tt
  arXiv:2509.03589}}.

\bibitem{vanLuijk:cones}
L.~van Luijk, A.~Stottmeister, and H.~Wilming, {\it {Uniqueness of
  Purifications Is Equivalent to Haag Duality}},  {\em Phys. Rev. Lett.} {\bf
  136} (2026), no.~4 040203, [\href{http://arxiv.org/abs/2509.12911}{{\tt
  arXiv:2509.12911}}].

\bibitem{Casini:DHR}
H.~Casini and J.~M. Magan, {\it {A generalization of the DHR theorem for higher
  form symmetries}},  \href{http://arxiv.org/abs/2511.21810}{{\tt
  arXiv:2511.21810}}.

\bibitem{Ogata:HD}
Y.~Ogata, D.~P{\'e}rez-Garc{\'\i}a, and A.~Ruiz-de Alarc{\'o}n, {\it {Haag
  Duality for 2D Quantum Spin Systems}},
  \href{http://arxiv.org/abs/2509.23734}{{\tt arXiv:2509.23734}}.

\bibitem{Kitaev:double}
A.~Y. Kitaev, {\it {Fault tolerant quantum computation by anyons}},  {\em
  Annals Phys.} {\bf 303} (2003) 2--30,
  [\href{http://arxiv.org/abs/quant-ph/9707021}{{\tt quant-ph/9707021}}].

\bibitem{Buerschaper_2013}
O.~Buerschaper, J.~M. Mombelli, M.~Christandl, and M.~Aguado, {\it A hierarchy
  of topological tensor network states},  {\em Journal of Mathematical Physics}
  {\bf 54} (Jan., 2013) [\href{http://arxiv.org/abs/1007.5283}{{\tt
  arXiv:1007.5283}}].

\bibitem{Buerschaper:2010yf}
O.~Buerschaper, M.~Christandl, L.~Kong, and M.~Aguado, {\it {Electric-magnetic
  duality of lattice systems with topological order}},  {\em Nucl. Phys. B}
  {\bf 876} (2013) 619--636, [\href{http://arxiv.org/abs/1006.5823}{{\tt
  arXiv:1006.5823}}].

\bibitem{Yan_2022}
B.~Yan, P.~Chen, and S.~X. Cui, {\it Ribbon operators in the generalized kitaev
  quantum double model based on hopf algebras},  {\em Journal of Physics A:
  Mathematical and Theoretical} {\bf 55} (Apr., 2022) 185201,
  [\href{http://arxiv.org/abs/2105.08202}{{\tt arXiv:2105.08202}}].

\bibitem{Choi:axions}
Y.~Choi, H.~T. Lam, and S.-H. Shao, {\it {Non-invertible Gauss law and
  axions}},  {\em JHEP} {\bf 09} (2023) 067,
  [\href{http://arxiv.org/abs/2212.04499}{{\tt arXiv:2212.04499}}].

\bibitem{Sorce:notes}
J.~Sorce, {\it {Notes on the type classification of von Neumann algebras}},
  {\em Rev. Math. Phys.} {\bf 36} (2024), no.~02 2430002,
  [\href{http://arxiv.org/abs/2302.01958}{{\tt arXiv:2302.01958}}].

\bibitem{PhysRevD.11.395}
J.~Kogut and L.~Susskind, {\it Hamiltonian formulation of wilson's lattice
  gauge theories},  {\em Phys. Rev. D} {\bf 11} (Jan, 1975) 395--408.

\bibitem{PhysRevD.19.3715}
D.~Horn, M.~Weinstein, and S.~Yankielowicz, {\it Hamiltonian approach to $z(n)$
  lattice gauge theories},  {\em Phys. Rev. D} {\bf 19} (Jun, 1979) 3715--3731.

\bibitem{Buerschaper_2009}
O.~Buerschaper and M.~Aguado, {\it Mapping kitaev’s quantum double lattice
  models to levin and wen’s string-net models},  {\em Physical Review B} {\bf
  80} (Oct., 2009) [\href{http://arxiv.org/abs/0907.2670}{{\tt
  arXiv:0907.2670}}].

\bibitem{chung2025spontaneouslybrokennoninvertiblesymmetries}
K.~T.~K. Chung, U.~Borla, A.~H. Nevidomskyy, and S.~Moroz, {\it Spontaneously
  broken non-invertible symmetries in transverse-field ising qudit chains},
  \href{http://arxiv.org/abs/2508.11003}{{\tt arXiv:2508.11003}}.

\bibitem{Fulton:book}
W.~Fulton and J.~Harris, {\em Representation theory: a first course}.
\newblock {Springer Science \& Business Media}, 2013.

\bibitem{AliAhmad:2025bnd}
S.~Ali~Ahmad, M.~S. Klinger, and Y.~Wang, {\it {The Many Faces of
  Non-invertible Symmetries}},  \href{http://arxiv.org/abs/2509.18072}{{\tt
  arXiv:2509.18072}}.

\bibitem{Benini:2025lav}
F.~Benini, P.~Calabrese, M.~Fossati, A.~H. Singh, and M.~Venuti, {\it
  {Entanglement asymmetry for higher and noninvertible symmetries}},
  \href{http://arxiv.org/abs/2509.16311}{{\tt arXiv:2509.16311}}.

\bibitem{TURAEV1992865}
V.~Turaev and O.~Viro, {\it State sum invariants of 3-manifolds and quantum
  6j-symbols},  {\em Topology} {\bf 31} (1992), no.~4 865--902.

\bibitem{kirillov2010turaevviroinvariantsextendedtqft}
A.~Kirillov, Jr. and B.~Balsam, {\it {Turaev-Viro invariants as an extended
  TQFT}},  \href{http://arxiv.org/abs/1004.1533}{{\tt arXiv:1004.1533}}.

\bibitem{Kawagoe:2024tgv}
K.~Kawagoe, C.~Jones, S.~Sanford, D.~Green, and D.~Penneys, {\it {Levin-Wen is
  a Gauge Theory: Entanglement from Topology}},  {\em Commun. Math. Phys.} {\bf
  405} (2024), no.~11 266, [\href{http://arxiv.org/abs/2401.13838}{{\tt
  arXiv:2401.13838}}].

\bibitem{thorngren2019fusioncategorysymmetryi}
R.~Thorngren and Y.~Wang, {\it {Fusion category symmetry. Part I. Anomaly
  in-flow and gapped phases}},  {\em JHEP} {\bf 04} (2024) 132,
  [\href{http://arxiv.org/abs/1912.02817}{{\tt arXiv:1912.02817}}].

\bibitem{Drinfeld:1986in}
V.~G. Drinfeld, {\it {Quantum groups}},  {\em Zap. Nauchn. Semin.} {\bf 155}
  (1986) 18--49.

\bibitem{balsam2012kitaevslatticemodelturaevviro}
B.~Balsam and A.~Kirillov, Jr., {\it {Kitaev's Lattice Model and Turaev-Viro
  TQFTs}},  \href{http://arxiv.org/abs/1206.2308}{{\tt arXiv:1206.2308}}.

\bibitem{Lin_2021}
C.-H. Lin, M.~Levin, and F.~J. Burnell, {\it {Generalized string-net models: A
  thorough exposition}},  {\em Phys. Rev. B} {\bf 103} (2021), no.~19 195155,
  [\href{http://arxiv.org/abs/2012.14424}{{\tt arXiv:2012.14424}}].

\bibitem{milnor1965structure}
J.~W. Milnor and J.~C. Moore, {\it On the structure of hopf algebras},  {\em
  Annals of Mathematics} {\bf 81} (1965), no.~2 211--264.

\bibitem{Chang_2014}
L.~Chang, {\it Kitaev models based on unitary quantum groupoids},  {\em Journal
  of Mathematical Physics} {\bf 55} (Apr., 2014)
  [\href{http://arxiv.org/abs/1309.4181}{{\tt arXiv:1309.4181}}].

\bibitem{kassel1995quantum}
C.~Kassel, {\em Quantum Groups}, vol.~155 of {\em Graduate Texts in
  Mathematics}.
\newblock Springer-Verlag, New York, 1995.

\bibitem{montgomery1993hopf}
S.~Montgomery, {\em Hopf Algebras and Their Actions on Rings}, vol.~82 of {\em
  CBMS Regional Conference Series in Mathematics}.
\newblock American Mathematical Society, Providence, RI, 1993.

\bibitem{klimyk1997quantum}
A.~Klimyk and K.~Schm{\"u}dgen, {\em Quantum Groups and Their Representations}.
\newblock Springer, Berlin, Heidelberg, 1997.

\bibitem{Inamura:2021szw}
K.~Inamura, {\it {On lattice models of gapped phases with fusion category
  symmetries}},  {\em JHEP} {\bf 03} (2022) 036,
  [\href{http://arxiv.org/abs/2110.12882}{{\tt arXiv:2110.12882}}].

\bibitem{lu2026generalizedkramerswannierselfdualityhopfising}
D.-C. Lu, A.~Chatterjee, and N.~Tantivasadakarn, {\it Generalized
  kramers-wannier self-duality in hopf-ising models},
  \href{http://arxiv.org/abs/2602.10183}{{\tt arXiv:2602.10183}}.

\bibitem{cordova2024representationtheorysolitons}
C.~Cordova, N.~Holfester, and K.~Ohmori, {\it {Representation theory of
  solitons}},  {\em JHEP} {\bf 06} (2025) 001,
  [\href{http://arxiv.org/abs/2408.11045}{{\tt arXiv:2408.11045}}].

\bibitem{larson}
R.~G. Larson and D.~E. Radford, {\it Semisimple cosemisimple hopf algebras},
  {\em American Journal of Mathematics} {\bf 110} (1988), no.~1 187--195.

\end{thebibliography}\endgroup

\end{document}